\newtheorem{assumption}{Assumption}[section]
\newtheorem{theorem}{Theorem}[section]
\newtheorem{proposition}{Proposition}[section]
\newtheorem{remark}{Remark}[section]
\newcommand{\x}{{\underline{x}}}
\newcommand{\E}{\mathbb{E}}
\newcommand{\V}{\mathbb{V}}
\renewcommand{\P}{\mathbb{P}}
\newcommand{\I}{\mathcal{I}}
\newcommand{\C}{\mathcal{C}}
\renewcommand{\L}{\mathcal{L}}
\newcommand{\1}{\mathbf{1}}
\newcommand{\sumi}{\sum_{i=1}^N}
\newcommand{\intr}{{\text{int}}}
\newcommand{\py}{{(y)}}
\newcommand{\po}{{(0)}}
\newcommand{\pl}{{(1)}}
\renewcommand\footnoterule{%
\kern-3\p@
\hrule\@width.4\columnwidth
\kern2.6\p@}
\newcommand{\printbaselinestretch}{%
  \begingroup
    \edef\@tempa{Current \string\baselinestretch\space value: \baselinestretch}%
    \par\noindent\@tempa\par
  \endgroup
}
\newif\ifsubmission
\begin{document}

\title{\ifsubmission\else\vspace{-2.5em}\fi
    Binary Outcome Models with Extreme Covariates:\\
Estimation and Prediction\thanks{
    \ifsubmission
Liu: Department of Economics, University of Pittsburgh, 4527 Posvar Hall, 230 S.\ Bouquet St., Pittsburgh, PA 15260, email: \texttt{laura.liu@pitt.edu}. Wang: Department of Economics, Syracuse University, 110 Eggers Hall, Syracuse, NY, 13244, email: \texttt{ywang402@syr.edu}.
\else \texttt{laura.liu@pitt.edu} (Liu) and \texttt{ywang402@syr.edu} (Wang). We thank Frank Diebold, Christian Haefke, Bo Honor\'{e}, Roger Klein, Ulrich M\"{u}ller, Frank Schorfheide, and seminar participants at the FRB Chicago, Monash University, University of Melbourne, Reserve Bank of Australia, University of Sydney, FRB Philadelphia, UC Irvine, UCSD, and Penn State, as well as conference participants at the Applied Time Series Econometrics Workshop at the FRB St.\ Louis, Dolomiti Macro Meetings, AiE Conference and Festschrift in Honor of Joon Y.\ Park, Midwest Econometrics Group Annual Meeting, Greater New York Econometrics Colloquium, Women in Macroeconomics Workshop II, NBER Summer Institute, and NBER-NSF Time Series Conference for helpful comments and discussions. The authors are solely responsible for any remaining errors.\fi}}
\author{Laura Liu \\
\textit{University of Pittsburgh} \and Yulong Wang \\
\textit{Syracuse University}}
\date{First version: July 7, 2024\\
This version: \today}
\maketitle
\ifsubmission\else\vspace{-2em}\fi
\begin{abstract}
    This paper presents a novel semiparametric method to study the effects of extreme events on binary outcomes and subsequently forecast future outcomes. Our approach, based on Bayes' theorem and regularly varying (RV) functions, facilitates a Pareto approximation in the tail without imposing parametric assumptions beyond the tail. We analyze cross-sectional as well as static and dynamic panel data models, incorporate additional covariates, and accommodate the unobserved unit-specific tail thickness and RV functions in panel data. We establish consistency and asymptotic normality of our tail estimator, and show that our objective function converges to that of a panel Logit regression on tail observations with the log extreme covariate as a regressor, thereby simplifying implementation. The empirical application assesses whether small banks become riskier when local housing prices sharply decline, a crucial channel in the 2007--2008 financial crisis.

\noindent\textbf{Keywords: }Binary outcome model, heavy tail, Pareto
approximation, panel data, partial effects, forecast 

\noindent\textbf{JEL classification: }C14, C21, C23, C25, C53
\end{abstract}

\newpage

\section{Introduction}
Binary outcome models are widely used in both empirical microeconomics and empirical macroeconomics research. For example, microeconomic studies may be interested in the determinants of individuals' labor force participation decisions, and macroeconomic analyses may seek to forecast the probability of recessions or country defaults. Recently, extreme events, such as the Covid-19 pandemic and its aftermath, sustained periods of high inflation, and increasingly frequent extreme weather, have highlighted the importance of studying the effect of extreme covariates on binary outcomes.

As an illustration, consider a simple example with cross-sectional data, where we have a random sample of $\{Y,X\}$ with binary outcome $Y\in \{0,1\}$ and continuous covariate $X\in \mathbb{R}$. Here and throughout, we follow the convention that uppercase letters denote random variables, while lowercase letters represent their realized values.
Depending on the empirical context, one may be interested in the conditional probability 
\begin{equation}
\pi(x)=\P\left(Y=1|X=x\right),\label{eq:c-s-simple-prob}
\end{equation}
the partial effect $\partial \pi(x) /\partial x$, and the elasticity, when $x$ takes extreme values. Given a random sample with $N$ observations, we characterize the extremeness by letting $x\rightarrow\infty$ as $N\rightarrow\infty$.\footnote{Here we focus on the right tail without loss of generality. In practice, one can conduct a similar analysis for the left tail, and we allow for different tail thickness in each tail.}\textsuperscript{,}\footnote{\label{fn:moderate-large-x}In finite sample, our approach still shows significant improvement for moderately large $x$, roughly below the 20th or above 80th percentile of the $X$ distribution, and the corresponding $\pi(x)$ may not be very close to 0 or 1 either: see Figure \ref{fig:sim-exp1-PY}, for example.}  
In many cases, $\pi(x)\rightarrow 0$ or 1 as $x\rightarrow\infty$, so we combine both scenarios and define a unified measure of extreme elasticity as
\begin{equation}
\delta(x)=\frac{\partial \left(\pi(x)\left(1-\pi(x)\right)\right) }{\partial x}\frac{x}{\pi(x)\left(1-\pi(x)\right)}.\label{eq:c-s-simple-elas}
\end{equation}
Further details are provided in Proposition \ref{prop:elas} and Remark \ref{rm:elas}.
We will use sovereign debt and country default as a running example for intuitive explanations. In this context, the outcome $Y$ indicates whether the country defaults or not, the covariate $X$ is the debt-to-GDP ratio, and the conditional probability $\pi(x)$ represents the probability of default when the debt-to-GDP ratio is particularly high, which can be viewed as a counterfactual probability or a predictive probability.

Existing methods face limitations when dealing with extreme values, a common feature in economic data that manifests as heavy tails \citep{gabaix2009,Gabaix2016JEP}. Parametric methods, such as Logit and Probit, assume threshold-crossing models with thin-tailed error distributions, which can lead to significant misspecification bias, particularly in the tails. Conversely, nonparametric methods, such as kernel, sieve, and spline estimators, may encounter difficulties due to limited information in the tail, resulting in highly inefficient estimators with large variances. This inefficiency can lead to imprecise forecasts, especially for extreme values.

To address these challenges, we propose a novel semiparametric approach based on Bayes' theorem and RV functions, which offer a flexible framework that encompasses a wide range of distributions. For heavy-tailed distributions, the RV condition naturally leads to a Pareto approximation in the tail (see Section \ref{sec:c-s-simple} for details), while making no parametric assumptions on the relationship between covariates and outcomes beyond the tail region.\footnote{\label{com:sp-kernel}To see the semiparametric nature of our tail estimator, note that traditional kernel methods focus on local observations close to a specific $x$ value of interest, leaving the model far from $x$ unspecified. Similarly, our tail estimator focuses on a specific local region, namely the extreme tail where $x\rightarrow\infty$, leaving the model for the middle part unspecified.} This is crucial because patterns in the tail and middle can differ substantially, as illustrated in the log-log plots in Figures \ref{fig:sim-exp1-loglog} and \ref{fig:app-loglog}. Specifically, by Bayes' rule, we have that 
\begin{equation}
\P\left(Y=1|X=x\right) =\frac{f_{X|Y}\left(x|1\right) \P
\left(Y=1\right) }{f_{X|Y}\left(x|1\right) \P\left(Y=1\right)
+f_{X|Y}\left( x|0\right) \P\left( Y=0\right)},\label{eq:c-s-simple-bayes}
\end{equation}
where $f_{X|Y}\left(\cdot|y\right) $ denotes the conditional density
function of $X$ given $Y=y$. When $x$ takes extreme values, the RV properties allow us to approximate $f_{X|Y}\left( \cdot |y\right) $ by $x^{-\alpha^\py-1}$ up to some constant, where $\alpha^\py$ is the Pareto exponent for $y\in\{0,1\}$. Then, the extreme elasticity $\delta(x)$ converges to $-\left|\alpha^\pl-\alpha^\po\right|$, as $x\rightarrow \infty $. We establish formal asymptotic results in a more general setting with additional non-extreme covariates $Z=z$, where we hold $z$ fixed and let $x\rightarrow\infty$. Note that given the limited data available in the tail for extreme $x$ values, the convergence rate of the tail estimator is slower than the standard $\sqrt N$-rate. 

Furthermore, our method not only addresses the limited information in the tail but also effectively handles unobserved unit-specific tail parameters and unit-specific RV functions in panel data. The heterogeneity in the RV functions naturally accommodates unit-specific scale parameters as well. We show that, under regularity conditions, our objective function is asymptotically equivalent to that of a panel Logit regression on tail observations, using the log extreme covariate as a regressor. This equivalence simplifies the estimation procedure and makes the method more convenient for empirical studies.

Here we consider panel data with large $N$ and either small or large $T$. For small $T$, we can use the conditional MLE to eliminate unit-specific tail effects. The proof nontrivially extends \citet{WangTsai2009} to panel data with independent but not necessarily identically distributed (i.n.i.d.) observations. For large $T$, we can directly estimate the unit-specific parameters using existing methods for  large-$T$ panel Logit and further provide unit-specific forecasts.  Finally, we expand our discussion to dynamic panel data models, incorporating lagged outcomes into the analysis.

We assess the finite-sample performance of our tail estimator through Monte Carlo simulations under various specifications. Experiment 1 focuses on estimation accuracy in cross-sectional data, and Experiment 2 examines out-of-sample forecasts in panel data with large $N$ and large $T$. Results show that our tail estimator outperforms parametric and nonparametric alternatives, particularly in capturing heavy-tailed behavior, reducing misspecification bias, and providing more accurate forecasts.

In our empirical application, we analyze the impact of local housing price declines on the riskiness of small banks, which played an important role in the 2007--2008 financial crisis. We construct a panel dataset of loan charge-off rates of small banks, local housing prices, and unemployment rates, and again find that the tail estimator yields the most accurate pseudo out-of-sample forecasts among all methods considered. In addition, this empirical analysis also helps reveal interesting heterogeneity patterns in riskiness among small banks, varying across geographic regions and bank characteristics.

\paragraph{Related literature.}
Our work draws on a wide range of econometric literature, including binary outcome models, panel data, extreme value theory, and forecasting. 

First, our work builds on a large literature on the binary outcome models, especially those for panel data. However, we focus on the analysis of extreme events instead of mid-sample properties as in the existing literature.

In binary cross-sectional models, various methods have been well studied in the literature, including parametric approaches (e.g., Logit and Probit), nonparametric approaches (e.g., \citet{matzkin1992nonparametric}), and semiparametric approaches, such as the single-index model \citep{klein1993efficient} and the maximum score estimator \citep{manski1985semiparametric,horowitz1992smoothed}. Also see \citet{horowitz2001binary} for a review. Our approach falls within the semiparametric framework, approximating the heavy tails using RV functions while making no parametric assumptions in the middle. Notably, much of this existing literature assumes threshold-crossing models, whereas our approach does not require such a structure: see Section \ref{sec:threshold-xing} for a detailed comparison.

In contrast to common methods that directly work with $\P(Y=1|X=x)$, we employ Bayes' theorem to ``reverse'' the conditioning set. Similar transformations based on Bayes' theorem have also been applied in the discriminant analysis \citep{amemiya1985advanced} with Gaussian $X|Y=y$, as well as in the semiparametric single-index model in \citet{klein1993efficient}. In our context of extreme events, this transformation offers a novel way to disentangle tail behavior. Unlike those previous works focusing on mid-sample properties, we analyze tail properties with only tail observations, necessitating the development of nonstandard asymptotic theory.

In binary panel data models, the unobserved unit-specific heterogeneity can be challenging to handle due to the incidental parameter problem in the presence of nonlinear model structures. In addition to parametric, nonparametric, and semiparametric approaches, the binary panel literature can also be categorized by assumptions on unit-specific heterogeneity. See Chapter 15 in \citet{Wooldridge2010} and Chapter 7 in \citet{hsiao2022analysis} for textbook discussions. In a random effects setup, the unit-specific heterogeneity is assumed to be drawn from a common underlying distribution. In a correlated random effects setup, this distribution could depend on observed covariates. In a fixed effects setup, the unit-specific heterogeneity is treated as fixed parameters unique to each unit without imposing any distributional assumptions, implicitly allowing for an arbitrary correlation between the heterogeneity and covariates. For logistic errors, \citet{Chamberlain1980}, among others, employs the conditional MLE to eliminate the unit-specific intercept and estimate the common parameters; for general errors, \citet{Manski1987} extends the maximum score estimator to panel data.

Since we allow for heterogeneity in the RV functions without explicitly modeling it, our approach aligns with a fixed effects framework, where unit-specific tail thickness and RV functions can flexibly depend on the additional covariates $Z$. We show that our objective function is asymptotically equivalent to a panel Logit regression on tail observations with the log extreme covariate as a regressor. This equivalence facilitates a comparable conditional MLE analysis to estimate common parameters in small-$T$ panels. For large-$T$ panels, bias correction methods, such as those developed by \citet{fernandez2018fixed} and \citet{stammann2016estimating}, can be utilized to further estimate unit-specific parameters. Additionally, we extend our approach to dynamic binary panel data models, where the conditional MLE construction is related to the work of \citet{honore2000panel}.

Second, our work contributes to the heavy tail and extreme value theory literature. For a comprehensive review, please refer to \citet{de2006extreme} and \citet{gabaix2009,Gabaix2016JEP}. The literature typically focuses on continuous outcomes with heavy tails, such as the classic \citet{hill1975} estimator, as well as estimators proposed by \citet{smith1987MLE} and \citet{gabaix2011rank}. See recent reviews by \citet{GomesGuillou2015review} and \citet{Fedotenkov2020review} for over a hundred estimators. \citet{WangTsai2009} incorporate covariates and develop a tail index regression model, and \citet{Nicolau2023} further extend this framework to time-series data under strongly mixing conditions. In contrast to the existing literature, we focus on binary outcomes and utilize Bayes' theorem to construct a novel estimator. To incorporate covariates, we build upon the tail index regression model and significantly extend the theoretical framework to handle panel data and i.n.i.d.\ random variables. 

Third, our work further contributes to the recent literature on unit-specific forecasts in panel data setups. To the best of our knowledge, this paper is the first investigation into unit-specific forecasts in panel data settings with binary outcomes and extreme covariates. In linear models, \citet{liu2020forecasting} and \citet{liu2023density} develop empirical Bayes and full Bayesian approaches for point and density forecasts, respectively. In nonlinear models, \citet{christensen2020robust} propose efficient robust forecasts for discrete outcome models, and \citet{liu2023forecasting} construct set and density forecasts for Tobit models with censored outcomes. Also see \citet{Baltagi2013} for the best linear unbiased predictor, \citet{giacomini2023robust} for robust forecasts, and \citet{qu2023comparing} for forecasting comparison. Most existing methods focus on continuous outcomes (except for \citet{christensen2020robust}) and mid-sample properties, and thus are not suitable for our purpose. Also note that panel data are particularly valuable for analyses involving extreme events, as the limited tail information makes it even more beneficial to combine information across cross-sectional units.

Finally, our work is also related to the growing interest in studying extreme events and their economic consequences, partly motivated by the 2007--2008 financial crisis and the Covid-19 pandemic. Existing literature addresses extreme events in various ways: some exclude them (e.g., \citet{schorfheide2021real}), some adapt models to accommodate both extreme and ordinary observations within a unified framework (e.g., \citet{carriero2022addressing}, and \citet{lenza2022estimate}), and some employ quantile regressions (e.g., \citet{tobias2016covar}, \citet{plagborg2020growth}, and \citet{adrian2022term}). Our work differs in several ways. First, their approaches are specific for continuous outcomes, not for binary ones. Second, our approach focuses on cross-sectional and panel data rather than time-series data, allowing us to exploit information across units.\footnote{We also consider dynamic panels under conditional stationarity, and our framework could be extended to accommodate strongly mixing conditions in the time-series dynamics as in \citet{Nicolau2023}, though this extension is beyond the scope of this paper.} Third, we incorporate a semiparametric approach that models extreme events separately from the middle of the data, acknowledging potentially distinct patterns in normal and extreme environments. While the value-at-risk and quantile regression papers also consider the second and third points, our approach is specifically designed for binary outcomes, with a different model structure and estimation procedure.

The remainder of this paper is organized as follows. Sections \ref{sec:c-s} and \ref{sec:panel} specify our methodology for cross-sectional and panel data, respectively, and derive their asymptotic properties. Section \ref{sec:extensions} extends our estimator to various contexts, such as a dynamic panel data model. Section \ref{sec:sim} conducts Monte Carlo experiments to examine the finite-sample properties of our estimators. Section \ref{sec:app} employs our panel data estimator to analyze how local housing price declines affected the riskiness of small banks during the 2007--2008 financial crisis. Finally, Section \ref{sec:conclusion} concludes. Appendix \ref{sec:proofs} provides the proofs for all propositions and theorems, and Appendix \ref{sec:app-tab-fig} contains additional tables and figures.

\section{Cross-sectional data}\label{sec:c-s}

In this section, we continue the discussion from the introduction for cross-sectional data. Section \ref{sec:c-s-simple} focuses on intuition and illustrates our main idea without additional covariates. It also compares our method with the classic threshold-crossing model. Section \ref{sec:c-s-asym} details the estimator and establishes its asymptotic properties in a general setup with additional covariates.

\subsection{Baseline models and RV functions}\label{sec:c-s-simple}
Recall that in the introduction, we presented a simple cross-sectional model with binary outcome $Y$ and continuous covariate $X$. Without loss of generality, let $X\in\mathbb{R}^{+}$, and $x\rightarrow\infty $ as $N\rightarrow
\infty$.\footnote{Directly incorporating multidimensional extreme covariates poses theoretical challenges. However, a practical solution is to replace $X$ with $v(X;\gamma)$, where $v(\cdot;\gamma)$ is a known function with unknown finite-dimensional parameters $\gamma$, and $\gamma$ can be consistently estimated with sufficiently fast convergence rate. For simplicity, we will treat $X$ as a scalar in the rest of the paper, but most of our discussions can be extended to the case of $v(X;\gamma)$.} We focus on three potential objects of interest: the conditional probability $\pi(x)$, the partial effect $\partial \pi(x) /\partial x$, and the extreme elasticity $\delta(x)$, as defined in \eqref{eq:c-s-simple-prob} and \eqref{eq:c-s-simple-elas}. 

Heavy-tailed distributions are well-documented in economic and financial data, e.g., \citet{gabaix2009,Gabaix2016JEP}, and the literature has suggested various methods to assess their presence. First, a log-log plot provides a visual assessment of tail behavior by plotting the threshold $x$ against the probability of exceeding $x$, both on a log scale, highlighting the relative prevalence of large values in the data. If tail observations align around a downward-sloping line, this suggests a heavy tail pattern with the slope being approximately $-\alpha$, where $\alpha$ is the Pareto exponent or tail index. In contrast, a vertical alignment indicates a relatively thin tail. See Figures \ref{fig:sim-exp1-loglog} and \ref{fig:app-loglog} for examples with simulated and empirical data. Second, a more rigorous evaluation involves estimating the tail index and conducting statistical tests: see for example, \citet{clauset2009power}. Finally, for a data-driven evaluation, techniques like cross-validation and pseudo out-of-sample forecasting can be employed as well, as demonstrated in our Monte Carlo Experiment 2 and empirical example.

Existing methods encounter difficulties in handling extreme values. Parametric approaches, such as Logit and probit, may incur substantial misspecification bias in the tail, while nonparametric methods, such as kernel, sieve, and spline, may be highly inefficient due to a limited number of tail observations. To overcome these challenges, we propose a semiparametric approach based on Bayes' theorem and RV functions as follows.

First, we observe extreme values in the covariate $X$ and seek to understand the behavior of $Y$ under extreme $X$, so we essentially aim to analyze their comovement in the tail. To facilitate this tail analysis, we use Bayes' theorem to ``reverse'' the conditioning set in the conditional probability \eqref{eq:c-s-simple-prob}, resulting in equation \eqref{eq:c-s-simple-bayes}.
Note that the Bayes' theorem representation is simply an alternative characterization of the data, so we are agnostic about the causal direction between $X$ and $Y$. A similar Bayes' theorem representation has been employed in the discriminant analysis (Chapter 9.2.8 in \citet{amemiya1985advanced}), where normal distributions of $X|Y=y$ leads to a quadratic Logit form of $\P(Y=1|X=x)$. \citet{klein1993efficient} also use a similar Bayes' theorem transformation in their semiparametric single-index estimator. 

Next, as $x\rightarrow \infty $, the conditional pdfs $f_{X|Y}\left( x|y\right)$ for $y\in\{0,1\}$ are dominated by their tail behaviors. If $X|Y=y$ exhibits a heavy tail, its distribution can be well approximated by a Pareto distribution \citep{smith1987MLE}. Therefore, we adopt a more general concept of RV functions that leads to a Pareto approximation in the tail.\footnote{Our method can be adapted to the generalized Pareto distribution, which also encompasses thin-tailed and bounded-support distributions. However, these distributions are often easily distinguishable from heavy-tailed distributions given empirical data. Moreover, the simplicity of our Pareto-based estimator offers a practical advantage over the more involved generalized Pareto estimator.}

We first introduce the following definitions. Let $X\in \mathbb{R}^{+}$ be a generic random variable with a heavy-tailed distribution. We say the upper tail probability (or survival function) $1-F_X(x)$ is \textit{Regularly Varying (RV)} at infinity with index $-\alpha$ for some $\alpha>0$, if for all $x>0$, as $\eta \rightarrow \infty $,
\begin{equation}
\frac{1-F_X\left( \eta x\right) }{1-F_X\left( \eta \right) }\rightarrow x^{-\alpha },
\label{eq:RV}
\end{equation}
which is denoted as $1-F_X\in RV_{-\alpha }$. Equivalently, by Karamata's characterization theorem, we can write
\begin{equation*}
1-F_X(x) =x^{-\alpha }\L(x),
\end{equation*}
where $\L \left( \cdot \right) $ is a \textit{slowly varying function} such
that for all $x>0$, 
$
\frac{\L \left( \eta x\right) }{\L \left( \eta \right) }\rightarrow 1,
$
as $\eta \rightarrow \infty $.

Let us highlight a few key points. First, the RV condition \eqref{eq:RV} implies that as $\x\rightarrow\infty$, for $x\ge\x$,
\begin{equation}
1-F_{X}(x)=\left(1-F_{X}(\x)\right)\left(\frac x {\x}\right)^{-\alpha}(1+o(1)).\label{eq:rv-cdf}
\end{equation}
Thus, the tail distribution of $X$ is asymptotically proportional to a Pareto distribution as the first-order approximation. Second, the parameter $\alpha $ is also referred to as the \textit{Pareto exponent}, which characterizes the tail thickness of $1-F_X\left(x \right)$. In particular, a smaller $\alpha $ indicates a heavier tail, meaning that the upper tail probability decays to zero more slowly; as $\alpha\rightarrow\infty$, the distribution becomes thin-tailed, such as logistic and normal distributions.
Third, the RV condition is relatively mild and satisfied by many commonly used heavy-tailed distributions, including the Student-$t$, $F$, and Cauchy distributions. For example, in a Student-$t$ distribution, $\alpha $ is equal to the degrees of freedom.\label{com:t-dist} A list of distributions and their corresponding values of $\alpha $ can be found in \citet{Gabaix2016JEP}. Fourth, the moments of $X$, $\E\left[ \left| X\right| ^{r}\right]$, are finite up to order $r<\alpha$. Note that we can handle cases with very heavy tails where $\alpha\in(0,1)$ and no moments exist. Finally, following from Proposition B.1.9(11) in \citet{de2006extreme}, if the pdf $f_{X}(x)$ is non-increasing for sufficiently large $x$, we have the following approximation that as $x\rightarrow\infty$,
\begin{align}
\frac{xf_{X}\left( x \right) }{1-F_{X}\left( x \right) }\rightarrow
\alpha. \label{eq:pdf}
\end{align}

Back to our binary outcome model, we assume that the distribution of 
$X$ conditional on $Y=y$ satisfies the RV condition \eqref{eq:RV}, that is, for $y\in\{0,1\}$,
\(
1-F_{X|Y}\left(\cdot|y\right)\in RV_{-\alpha^\py}.
\)
The Pareto exponent $\alpha^\py$ is indexed by $y$, allowing for different tail thickness for $y\in\{0,1\}$.
If $Y$ were continuous, estimating $\alpha^\py$ as a function of $y$ would be challenging without parametric assumptions. The binary outcome simplifies this, as $\alpha^\py$ takes only two values, $\alpha ^\po$ and $\alpha ^\pl$, which can be easily estimated using methods such as the classic \citet{hill1975} estimator in this simple model. Further estimation details are provided in Section \ref{sec:c-s-asym}. Also note that our approach is semiparametric, as we impose no parametric assumptions on the relationship between $X$ and $Y$ outside the tail region.

After estimating $\alpha^\py$, we can proceed with the conditional probability $\pi(x)$ as well as other objects of interest. 
Based on Bayes' theorem \eqref{eq:c-s-simple-bayes} and the pdf approximation \eqref{eq:pdf}, it follows that
\begin{align}
\pi(x) 
&\sim \frac{1}{1+\frac{\P\left( Y=0\right) }{\P\left(
Y=1\right) }\frac{\alpha ^\po}{\alpha ^\pl}\frac{1-F_{X|Y}\left(
x|0\right) }{1-F_{X|Y}\left( x|1\right) }}, \label{eq:c-s-simple-prob1}
\end{align}
where $f(x)\sim g(x)$ denotes $\lim_{x\rightarrow\infty} f(x)/g(x)=1$ for generic functions $f(x)$ and $g(x)$.
By the RV condition \eqref{eq:rv-cdf}, given some large $\x^\py$, the term in the denominator becomes
\begin{align}
\frac{\P\left( Y=0\right) }{\P\left(
Y=1\right) }\frac{\alpha ^\po}{\alpha ^\pl}\frac{1-F_{X|Y}\left(
x|0\right) }{1-F_{X|Y}\left( x|1\right)}
\sim\frac{\P\left( Y=0,X\ge\x^\po\right) }{\P\left(
Y=1,X\ge\x^\pl\right) }\frac{\alpha ^\po}{\alpha ^\pl}\frac{\left(\x^\po\right)^{\alpha^\po}}{\left(\x^\pl\right)^{\alpha^\pl}} x^{\alpha^\pl-\alpha^\po}.\label{eq:c-s-simple-prob2}
\end{align}
Then, $\pi(x)$ can be consistently estimated by plugging in $\hat\alpha^\py$ and the sample analog of $\frac{\P\left( Y=0,X\ge\x^\po\right) }{\P\left(
Y=1,X\ge\x^\pl\right) }\approx\frac {N^\po}{N^\pl}$, where $N^\py=\sumi \1\left\{ X_i^\py\ge \x^\py, Y_i=y\right\} $ is the number of tail observations in the subsample with $Y_i=y$.\footnote{The consistency result was established in a previous version of this paper and is available upon request.}

\begin{remark}\label{rm:c-s-Logit}
\normalfont{
Based on \eqref{eq:c-s-simple-prob1} and \eqref{eq:c-s-simple-prob2}, we define $A=\frac{\P\left( Y=0,X\ge\x^\po\right) }{\P\left(Y=1,X\ge\x^\pl\right) }\frac{\alpha ^\po}{\alpha ^\pl}\frac{\left(\x^\po\right)^{\alpha^\po}}{\left(\x^\pl\right)^{\alpha^\pl}}$, a constant not changing over $x$. Then, $\pi(x)$ is simplified to:
\begin{align}
\pi(x)\sim \frac{1}{1+A\cdot x^{\alpha ^\pl-\alpha ^\po}},\label{eq:c-s-A}
\end{align}
and we can directly estimate $A$ and $\alpha^*=\alpha ^\pl-\alpha ^\po$ via the MLE. 
However, for the cross-sectional case, we slightly prefer the estimator based on \eqref{eq:c-s-simple-prob1} and \eqref{eq:c-s-simple-prob2} due to its easier implementation in the simple setup; moreover, when incorporating additional covariates $Z$ (see Section \ref{sec:c-s-asym}), the parameter $A$ could be a complicated function of $Z$ and difficult to estimate.

Furthermore, letting $\tilde A = -\log A$ yields $\pi(x)\sim \frac{1}{1+\exp\left(-\tilde A+\alpha^*\log x\right)},$
so our objective function asymptotically resembles that of a Logit regression on tail observations, using the log extreme covariate as the regressor. This asymptotic equivalence is particularly relevant in panel data cases with unobserved unit-specific heterogeneity: see Remark \ref{rm:panel-Logit}.
}
\end{remark}

\begin{remark}\label{rm:c-s-01}
\normalfont{
From \eqref{eq:c-s-A}, we see that between the two subsamples with $Y_i=y\in\{0,1\}$, the one with the heavier tail (i.e., smaller Pareto exponent $\alpha^\py$) will ultimately dominate the conditional probability, that is, as $x\rightarrow\infty$, 
\begin{align*}
\pi(x)\sim \frac{1}{1+A\cdot x^{\alpha ^\pl-\alpha ^\po}}\rightarrow\begin{cases}
1,& \text{ if } \alpha ^\pl<\alpha ^\po,\\
0,& \text{ if } \alpha ^\pl>\alpha ^\po,\\
\frac 1 {1+A},& \text{ if } \alpha ^\pl=\alpha ^\po.
\end{cases}
\end{align*}
This Pareto approximation captures the first-order behavior, and higher-order terms, as detailed in Assumption \ref{assn:c-s-tail}, can further refine $\pi(x)$. Also note that, in finite samples, our method still provides significant improvements for moderately large $x$ with $\pi(x)$ not very close to 0 or 1, as discussed in footnote \ref{fn:moderate-large-x}.
}
\end{remark}

After estimating $\pi(x)$, the partial effect $\partial \pi(x) /\partial x$ can be obtained as a by-product through either analytical or numerical differentiation.  It may also be of interest to consider averages of the partial effects over certain covariate values, which relate to commonly used measures, such as the average partial effects (APE) and the average marginal effects (AME) in panel data models: see \citet{AbrevayaHsu2021} for a survey of various partial effects in panels, and \citet{DaveziesDHaultfoeuilleLaage2021} for recent partial identification results of average causal effects in short-$T$ panel Logit models with fixed effects. 
Proposition \ref{prop:avg-partial-effects} in the Appendix establishes the existence of the tail average of partial effects.

For extreme elasticity \eqref{eq:c-s-simple-elas}, given the Pareto approximation in the tail, the extreme elasticity is solely determined by the difference in tail indices, $\alpha^*=\alpha ^\pl-\alpha ^\po$.
\begin{proposition} [Cross-sectional data: extreme elasticity]
\label{prop:elas}Suppose we have: (a) $1-F_{X|Y}\left(\cdot|y\right)\in RV_{-\alpha^\py}$, for $y\in\{0,1\}$; 
(b) $f_{X|Y}\left( x|y\right) $ and $f_{X|Y}^{\prime
}\left( x|y\right) $ are non-increasing in $x\ge\x$, for some $\x>0$; 
and (c) $0<\P(Y=1)<1$.
Then, as $x\rightarrow \infty $, $\delta(x) \rightarrow -\left|\alpha ^*\right|.$
\end{proposition}
\begin{remark}\label{rm:elas}
\normalfont{
From Remark \ref{rm:c-s-01}, we see that $\pi(x)$ approaches either 0 or 1 as $x\rightarrow\infty$ when $\alpha ^\pl\neq\alpha ^\po$. The extreme elasticity defined in \eqref{eq:c-s-simple-elas} provides a unified characterization of both cases. Rewriting \eqref{eq:c-s-simple-elas}, we obtain that 
\begin{equation}
\delta(x)=\frac{\partial \pi (x)}{\partial x}
\frac{x}{\pi (x)} + \frac{\partial(1-\pi (x))}{\partial x}
\frac{x}{ 1-\pi (x)} =\delta_{\pi}(x)+\delta_{1-\pi}(x).\label{eq:elas-decomp}
\end{equation}
For example, when $\alpha ^\pl<\alpha ^\po$, we have $\pi(x)\rightarrow1$, $\delta_{\pi}(x)\rightarrow0$, and thus $\delta(x)\sim\delta_{1-\pi}(x)\rightarrow\alpha ^\pl-\alpha ^\po$.
Note that the extreme elasticity is inherently non-positive due to the RV structure. Moreover, as $\pi (x)(1-\pi (x))=\V(Y|X=x)$, the extreme elasticity can also be interpreted as the elasticity of risk in the tail.
}
\end{remark}

\subsubsection{Comparison with a threshold-crossing model}\label{sec:threshold-xing}
We now compare our approach with the classic threshold-crossing model, which has been extensively studied in the literature:
\begin{equation*}
Y=\1\left\{X-\varepsilon \ge  0\right\},
\end{equation*}
where $\varepsilon$ is the error term. Note that our method does not require a threshold-crossing structure. However, the following proposition reveals a direct relationship between our key parameters $\alpha ^\py$ and the Pareto exponents of $X$ and $\varepsilon$ in the threshold-crossing model.
\begin{proposition} [Threshold-crossing model]
\label{prop:threshold-xing}Suppose we have:
(a) $1-F_{X}\in RV_{-\alpha_X }$, and $1-F_{\varepsilon
}\in RV_{-\alpha_{\varepsilon} }$, for some $\alpha_X,\,\alpha_{\varepsilon}>0$;
(b) $f_{X}(x) $ is non-increasing in $x\ge\x$, for some $\x>0$;
and (c)  $\varepsilon\perp X$.
Then, for $y\in\{0,1\}$, $1-F_{X|Y}\left(\cdot|y\right)\in RV_{-\alpha^\py}$ with
\begin{align*}
\alpha ^\po =\alpha_X +\alpha_{\varepsilon}, \text{ and }
\alpha ^\pl =\alpha_X.
\end{align*}
Furthermore, $\alpha_{\varepsilon} =\alpha ^\po-\alpha ^\pl.$
\end{proposition}
There are three points worth noting. First, the threshold-crossing structure and the \textit{unconditional} RV tails provide a sufficient condition for our \textit{conditional} RV tails. The comovement between $X$ and $\varepsilon$ in the tail is captured by the comovement between $X$ and $Y$ in the tail.

Second, the difference in the conditional Pareto exponents, $\alpha ^\po-\alpha ^\pl$, equals the Pareto exponent of the unobserved error term. As the tail of the error term becomes thinner with larger $\alpha_{\varepsilon}$, this difference increases. The extreme case of a very thin-tailed error with $\alpha_{\varepsilon}\rightarrow\infty$  (e.g., close to the Logit)  corresponds to $\alpha ^\po\rightarrow\infty$, $\alpha ^\pl=\alpha_X$, and $\pi(x)\rightarrow 1$: see also the first case in Remark \ref{rm:c-s-01}.

Finally, directly estimating the threshold-crossing model with RV errors $\varepsilon$ could be challenging due to two related issues: first, $\varepsilon$ is not observable, making it difficult to determine which observations are in the tail; second, using the full data could lead to significant bias, as the middle part may deviate substantially from the Pareto distribution. In contrast, our method circumvents this issue by using cutoffs of observed $X$, and offers an asymptotic equivalent estimator that is relatively easy to implement and scalable to more complicated setups, such as those involving additional covariates and panel data.

\subsection{Estimation and asymptotic properties}\label{sec:c-s-asym}
In this subsection, we first extend the simple cross-sectional model in Section \ref{sec:c-s-simple} to incorporate additional covariates, which is more empirically relevant. We then derive the asymptotic properties of our proposed estimator. 

Let $Z$ be a $d_Z$-dimensional vector of non-extreme covariates in addition to $X$. $Z$ can be either discrete or continuous. For example, consider the context of sovereign debts and country defaults, where institutional characteristics could be included as covariates given their potential impact on default risk. Now we rewrite our conditional probability and the Bayes' theorem representation by adding $Z$ to all conditioning sets: 
\begin{align*}
\pi \left( x,z\right)
&=\P\left( Y=1|X=x,Z=z\right) \\
&=\frac{f_{X|Y,Z}\left( x|1,z\right) \P\left( Y=1|Z=z\right) }{
f_{X|Y,Z}\left( x|1,z\right) \P\left( Y=1|Z=z\right)
+f_{X|Y,Z}\left( x|0,z\right) \P\left( Y=0|Z=z\right) }.
\end{align*}
Note that here we hold $z$ constant and let $x$ tend to infinity. 
Accordingly, we assume that 
$
1-F_{X|Y,Z}\left(\cdot|y,z\right)\in RV_{-\alpha^\py\left( z\right) },
$
where the Pareto exponent $\alpha^\py\left( z\right) $ is a function of the additional covariates $z$.

While $\alpha^\py(\cdot)$ could depend on $z$ in a complicated nonlinear way, it is difficult to nonparametrically estimate it due to the data limitation that only large values of $X$ are used for tail analysis. To sidestep this issue, we assume a pseudo-linear structure: 
\begin{equation}
\alpha^\py\left( z;\theta^\py\right) =z'\theta^\py,
\label{eq:index}
\end{equation}
where $\theta^\py$ denotes the pseudo-parameters for $y\in\{0,1\}$, and $\alpha^\py\left( z\right)=\alpha^\py\left( z;\theta^\py_0\right)$ represents the tail index function evaluated at the pseudo-true parameter values. This simplification is also in line with the literature, such as \citet{WangTsai2009}.\footnote{We adopt a pseudo-linear approximation to better align with the panel data case in Section \ref{sec:panel}, rather than an exponential approximation in \citet{WangTsai2009}.} Additionally, we require that $Z_i'\theta^\py>0$ almost surely, with sufficient conditions discussed after Assumption \ref{assn:c-s-tail}.

From the RV condition, for $y\in\{0,1\}$, given some large threshold $\x^\py_N$, if the conditional pdf is non-increasing for sufficiently large $x$, it is asymptotically equivalent to a Pareto distribution,
\[f_{X|Y,Z}\left( x\left|y,z,x\ge\x^\py_N\right.\right)\sim\alpha^\py\left( z\right)\left(\frac x {\x^\py_N}\right)^{-\alpha^\py\left( z\right)-1}.\]

Let $\Xi^\py_{N,i}=\left\{ X_i^\py\ge\x^\py_N, Y_i=y\right\}$, then $N^\py=\sumi \1\left\{\Xi^\py_{N,i}\right\} $ is the total number of tail observations in the subsample with $Y_i=y$. The asymptotic Pareto distribution above leads to the following pseudo-MLE
\begin{equation}
\hat\theta^\py=\arg \max_{\theta^\py\in\Theta^\py }\sumi\left( \log Z_i'\theta^\py-Z_i'\theta^\py \log\frac{X_i}{\x^\py_N} \right)\1\left\{\Xi^\py_{N,i}\right\}, \label{eq:mle}
\end{equation}
where $\Theta^\py\subset\mathbb R^{d_Z}$ is a convex cone as discussed after Assumption \ref{assn:c-s-tail}, and the pseudo-true parameter value $\theta^\py_0\in \intr\left(\Theta^\py\right)$, the interior of $\Theta^\py$. Since the objective function is concave and the domain is convex, this MLE is a convex programming problem that can be easily solved. In the simple model without $Z$, the FOC of \eqref{eq:mle} results in 
\(
\hat\alpha^\py=\left[\frac{1}{N^\py}\sumi\left( \log X_i -\log \x^\py_N \right)\1\left\{\Xi^\py_{N,i}\right\}\right] ^{-1},
\)
which coincides with the classic \citet{hill1975} estimator, as the indicator function $\1\left\{\Xi^\py_{N,i}\right\}$ effectively selects the largest order statistics.

To establish the asymptotic properties, we impose the following assumptions.
\begin{assumption}[Cross-sectional data: model assumptions] \label{assn:c-s-model}
Suppose we have:
\begin{enumerate}[label=(\alph*)]
\item $\{Y_i,X_i,Z_i\}$ is i.i.d.
\item $0<\P\left(Y_i=1|Z_i\right) <1$ almost surely in $Z_i$.
\end{enumerate}
\end{assumption}
Condition (b) ensures non-degenerate probabilities, thus guaranteeing a non-trivial Bayes' theorem representation. 

\begin{assumption}[Cross-sectional data: tail approximation]\label{assn:c-s-tail}
    Let $\P_{Z}$ be the probability measure induced by $Z$. For $y\in\{0,1\} $, and for $\P_{Z}$-almost all $z\in supp(Z)$:
\begin{enumerate}[label=(\alph*)]
\item The conditional cdf $F_{X|Y,Z}\left( x|y,z\right) $ satisfies that
\begin{equation*}
1-F_{X|Y,Z}\left( x|y,z\right) =C^\py\left( z\right) x^{-\alpha^\py\left( z\right) }\left( 1+D^\py\left( z\right) x^{-\beta^\py\left( z\right) }+r^\py\left(x,z\right)\right),
\end{equation*}
as $x\rightarrow \infty$, where $C^\py\left( z\right) >0,\,\left|D^\py\left(z\right) \right|\le\overline D<\infty,\,\alpha^\py\left( z\right) >0$ satisfying \eqref{eq:index}, and $\beta^\py\left( z\right) \ge\underline \beta>0$, for some constants $\overline D$ and $\underline \beta$.
\item The remainder term satisfies that $\left|x^{\beta ^\py\left( z\right) }r^\py\left(x,z\right)\right|\le\bar r(x)$, where $\bar r(x)\rightarrow0$ as $x\rightarrow \infty$.
\item The conditional pdf $f_{X|Y,Z}\left( x|y,z\right)$ is non-increasing in $x\ge\x$, for some $\x>0$.
\end{enumerate}
\end{assumption}
First, the expression in condition (a) implies the RV condition and resembles equation (2.2) in \citet{WangTsai2009}. In the absence of additional covariates $Z$, it simplifies to the well-studied second-order condition: see for example, \citet{hall1982} and Chapter 2 in \citet{de2006extreme}. The first-order term $C^\py\left( z\right) x^{-\alpha^\py\left( z\right) }$ is proportional to a Pareto distribution, while the second-order term $D^\py\left( z\right) x^{-\beta ^\py\left( z\right) }$ captures deviations from the ideal Pareto form and is essential for characterizing the asymptotic distribution. 
Second, to guarantee the existence of $\theta^\py$ such that $\alpha^\py(Z_i)=Z_i'\theta^\py>0$ almost surely (i.e., the existence of the convex cone), a sufficient condition is that each element of $Z_i$ has a domain strictly above or below zero. This can be achieved through a monotonic transformation of $Z_i$, such as the exponential or probability integral transforms. For example, if $Z_i\in\mathbb R^{d_Z}_{++}$ (the set of vectors with strictly positive elements) almost surely, then we can set $\Theta^\py=\mathbb R^{d_Z}_{+}\setminus{\{0\}}$ (the set of vectors with non-negative elements, excluding the zero vector).
Third, the bounds on $D^\py\left(z\right)$ and $\beta^\py\left(z\right)$, as well as the supremum condition (b) on the remainder term, allow us to ignore higher order terms in the asymptotic analysis. 
Finally, the relatively mild condition (c) of a non-increasing conditional probability density function ensures that the tail behavior of the pdf can be inferred from the corresponding cdf, according to Proposition B.1.9(11) in \citet{de2006extreme}.

To derive the asymptotic properties, note that $N^\py$ is a random variable, so we need to carefully account for its randomness when applying the LLN and CLT. We further define 
\begin{align}\xi^\py_N = \E\left[C^\py(Z_i)\left(\x^\py_N\right)^{-\alpha^\py(Z_i)}\right]\P(Y_i=y),\label{eq:c-s-xi-def}\end{align}
a non-random sequence representing the asymptotic proportion of tail observations for $Y_i=y$, and utilize it to characterize the asymptotic behavior of our estimator.
\begin{assumption}[Cross-sectional data: estimation]\label{assn:c-s-est} For $y\in\{0,1\}$, suppose that:
\begin{enumerate}[label=(\alph*)]
\item $N\xi^\py_N\rightarrow\infty$, as $N\rightarrow \infty $.
\item $\sqrt{\frac{N}{\xi_N}} \E\left[ Z_i \frac{\beta^\py(Z_i)C^\py\left( Z_i\right) D^\py\left( Z_i\right)}{\alpha^\py(Z_i)\left(\alpha^\py(Z_i)+\beta^\py(Z_i)\right)} \left(\x^\py_N\right)^{-\alpha ^\py\left(Z_i\right) -\beta ^\py\left( Z_i\right) } \right] \rightarrow 0$, as $N\rightarrow \infty $.
\item $H_{N0}^\py =\E\left[\left.\frac {Z_iZ_i'}{\left(\alpha^\py(Z_i)\right)^2}\right|\Xi^\py_{N,i}\right]$ is finite and positive definite, for sufficiently large $N$.
\end{enumerate}
\end{assumption}
Conditions (a) and (b) jointly impose upper and lower bounds on the rate at which the tuning parameter $\x^\py_N$ tends to infinity. It implies that $N^\py$ goes to infinity at a slower rate than $N$. Also note that in condition (b), we select a larger $\x^\py_N$ to eliminate the asymptotic bias, albeit at the expense of a slower convergence rate. This is close in spirit to choosing an undersmoothing bandwidth in kernel regressions. Condition (c) is a mild regularity condition that ensures the invertibility of the Hessian matrix.  

The following theorem establishes the asymptotic result, building on \citet{WangTsai2009}.
\begin{theorem}[Cross-sectional data: parameter estimation]
\label{thm:c-s} Suppose $\theta^\py_0\in \intr\left(\Theta^\py\right)$, where $\Theta^\py\subset\mathbb R^{d_Z}$ is a convex cone. 
Suppose Assumptions \ref{assn:c-s-model}--\ref{assn:c-s-est} hold. Then, for $y\in\{0,1\}$, we have that
\begin{equation*}
\sqrt{N\xi^\py_N}\left(H_{N0}^\py\right)^{1/2}\left( \hat\theta^\py-\theta^\py_0\right) \overset{d}{\rightarrow }\mathcal{N}\left( 0,\I_{d_Z}\right), 
\end{equation*}
as $N\rightarrow \infty $.
In addition, $\hat\theta^\pl$ and $\hat\theta^\po$ are
asymptotically independent.
\end{theorem}
The convergence rate is slower than the standard parametric $\sqrt N$-rate, since the tail estimator only accounts for observations in the tail and the number of tail observations $N^\py=N\xi^\py_N\left(1+o_p(1)\right)$ increases at a slower rate than the total number of observations $N$.
The convergence rate depends on the magnitude of the second-order term: larger $\beta^\py(z)$ allows for a smaller cutoff $\x^\py_N$ according to Assumption \ref{assn:c-s-est}(b), and thus a larger $\xi^\py_N$ based on equation \eqref{eq:c-s-xi-def}, leading to a faster convergence rate.  Especially, when $\beta^\py(z)\rightarrow\infty$, the distribution approaches the ideal Pareto case, and the convergence rate is close to the $\sqrt N$-rate.

For implementation, we adopt a common practice to select the threshold $\x^\py_N$: taking empirical quantiles (e.g., 90th and 95th percentiles) as potential thresholds, and checking them through graphic diagnostics like log-log plots. While there are various threshold estimation methods in the literature,\footnote{For example, \citet{guillou2001diagnostic} choose the threshold by minimizing the asymptotic MSE, \citet{clauset2009power} select the threshold by maximizing the marginal likelihood or minimizing the distance between the power-law distribution and empirical data, and \citet{WangTsai2009} further consider a discrepancy measure for a tail index regression model with covariates.} our simulations and empirical analysis suggest that parameter estimates are relatively robust within a range of threshold values. Intuitively, in a log-log plot, such as Figure \ref{fig:sim-exp1-loglog}, the threshold $\x^\py_N$ acts as the starting point for slope estimation, and small variations in the threshold within the downward-sloping region would only minimally affect the estimated slope. 

Given $\hat\theta^\py$, we can obtain the conditional probability $\pi \left( x,z\right)$ similar to \eqref{eq:c-s-simple-prob1} and \eqref{eq:c-s-simple-prob2} in the simple model.
It is more practical to estimate $\P\left( Y=y,X\ge\x^\py_N|Z=z\right)$ parametrically, due to limited tail data. 
Analogous to Proposition \ref{prop:elas}, if we further assume that $f_{X|Y,Z}'\left( x|y,z\right) $ is non-increasing for sufficiently large $x$, then as $x\rightarrow \infty $, the extreme elasticity is now
\begin{equation*}
\delta(x,z) = \frac{\left(\partial \pi (x,z)(1-\pi (x,z))\right)}{
\partial x}\frac{x}{\pi (x,z)\left( 1-\pi (x,z)\right) }\rightarrow-\left|z'\left(\theta^\pl-\theta^\po\right)\right|.
\end{equation*}
A consistent estimator is obtained by substituting $\hat\theta^\py$ in place of $\theta^\py$.

\section{Panel data}\label{sec:panel}
Our method is particularly useful in panel data analysis, specifically in addressing unobserved individual heterogeneity. For instance, in the context of country defaults, different countries could have various cultural and historical backgrounds that might not be fully captured by observed data. In the analysis of extreme events, this unobserved heterogeneity could manifest as unobserved unit-specific tail thickness and RV functions.\footnote{For panel data with observed heterogeneity only, a pooled estimator with observed covariates can be employed, similar to the cross-sectional case in Section \ref{sec:c-s-asym}.}

This section focuses on panel data with large $N$ and small $T$. To build intuition, we begin in Section \ref{sec:panel-simple} by considering a simple case without additional covariates $Z_i$. We then incorporate these covariates in Section \ref{sec:panel-asym} and derive the asymptotic for the common parameters and extreme elasticity. Later on, we will extend our discussions to models with large $T$ in Section \ref{sec:panel-largeT}, time-varying additional covariates in Section \ref{sec:time-var-z}, and dynamic panel data in Section \ref{sec:dyn-panel}.

\subsection{Baseline model and conditional MLE}\label{sec:panel-simple}
Suppose we observe a panel dataset $\{Y_{it},X_{it}\}$ for $i=1,...,N$ and $t=1,...,T$. For illustrative purposes, let $T=2$ (though our method is applicable to any $T \ge 2$), and $X_{it}$ be a scalar extreme covariate.

As $X_{it}$ approaches infinity, unobserved individual heterogeneity could reflect in unit-specific tail thickness and RV functions. Assume the unit-specific tail indices take the following additive form
\begin{equation}
\tilde\alpha_i^\py=\alpha^\py+\lambda_i, \label{eq:alpha_hetero}
\end{equation}
where $\alpha^\py$ represents the common component depending on $y\in\{0,1\}$, and $ \lambda_i$ denotes the unit-specific component. For example, the tail indices of stock market returns exhibit heterogeneity across regions \citep{Jondeau2003}, as do Covid-19 cases and deaths across countries \citep{Einmahl2023}. The additive form helps cancel out the unobserved unit-specific tail thickness, as will soon be demonstrated. 

Let $\P_i$ be the probability measure given unit-specific quantities. Specifically, $\P_i(\cdot)=\P\left(\cdot\;;\;\lambda_i,\left\{\L^\py_i\right\}\right)$, where $\lambda_i$ is the unit-specific tail thickness as defined above, and $\L^\py_i(\cdot)$ is the unit-specific slowly varying function as specified below. Any functions subscripted with $i$ are implicitly conditioned on these unit-specific quantities. Note that we are working within a fixed effects framework where $\left\{\lambda_i,\left\{\L^\py_i\right\}\right\}$ are considered as fixed for each unit $i$ and can be arbitrarily correlated with the covariates of $i$, making the panel data analysis richer and more challenging than the cross-sectional case.

Based on a simplified version of Assumption \ref{assn:panel-model} in Section \ref{sec:panel-asym} on the model setup, we assume that $\{Y_{it},X_{it}\}$ are independent across units and stationary across time. The stationarity condition enables us to eliminate unit-specific tail effects, and the conditional independence condition ensures that for $y_1,y_2\in \{0,1\}$,
\begin{align}
\P_i\left( Y_{i1}=y_{1},Y_{i2}=y_2|X_{i1},X_{i2}\right) 
=\P_i\left( Y_{i1}=y_1|X_{i1}\right) \cdot \P_i\left(
Y_{i2}=y_2|X_{i2}\right). \label{eq:model}
\end{align}

We rewrite the conditional probability of unit $i$ using Bayes' theorem similar to the cross-sectional case. 
\begin{align*}
\pi _i(x) &= \P_i\left(
Y_{it}=1|X_{it}=x\right) \\
&=\frac{f_{i,X_{it}|Y_{it}}(x|1) \P_i\left(
Y_{it}=1\right) }{f_{i,X_{it}|Y_{it}}(x|1) \P_i
\left( Y_{it}=1\right) +f_{i,X_{it}|Y_{it}}(x|0) 
\P_i\left( Y_{it}=0\right) }.
\end{align*}
Here the subscript $i$ in $\pi _i(x)$ indicates potential heterogeneity across $i$, and the absence of a $t$ index reflects the stationarity condition. 

To proceed, we again apply the Pareto approximation at the tail, assuming that 
$1-F_{i,X_{it}|Y_{it}}(\cdot|y)\in RV_{-\tilde\alpha_i^\py},$
or equivalently,
$1-F_{i,X_{it}|Y_{it}}(x|y) =x^{-\tilde\alpha_i^\py}
\L_i^\py(x),$
for some slowly varying function $\L_i^\py(x) $.
Given the stationarity condition, $\L_i^\py(x) $ does not change over time. From the RV condition and the pdf approximation in \eqref{eq:pdf}, we have that
\begin{align*}
\pi_i(x) \sim \frac{1}{1+\frac{\P_i\left(
Y_{it}=0\right) }{\P_i\left( Y_{it}=1\right) }\frac{\tilde{
\alpha}_i^\po}{\tilde\alpha_i^\pl}\frac{
1-F_{i,X_{it}|Y_{it}}(x|0) }{1-F_{i,X_{it}|Y_{it}}(x|1)}} 
=\frac{1}{1+\frac{\P_i\left(
Y_{it}=0\right) }{\P_i\left( Y_{it}=1\right) }\frac{\tilde{
\alpha}_i^\po}{\tilde\alpha_i^\pl}\frac{\L 
_i^\po(x) }{\L _i^\pl(x) }\frac{
x^{-\tilde\alpha_i^\po}}{x^{-\tilde\alpha_i^\pl}}}. 
\end{align*}

Let us look into each term one by one. First, note that $\frac{x^{-\tilde\alpha_i^\po}}{x^{-\tilde\alpha_i^\pl}}=x^{\alpha ^\pl-\alpha ^\po}$, since $\lambda_i$ enters additively into the tail indices \eqref{eq:alpha_hetero} and is thus canceled out in the ratio. Second, we have that $\frac{\L _i^\po(x) }{\L _i^\pl(x) }\rightarrow \mathcal L^*_i$, which does not depend on $x$ asymptotically due to the slowly varying nature of $\L_i^\py(x)$. Third, the term $\frac{\P_i\left(Y_{it}=0\right) }{\P_i\left( Y_{it}=1\right) }\frac{\tilde\alpha_i^\po}{\tilde\alpha_i^\pl}$ could depend on $\left\{\lambda_i,\left\{\L^\py_i\right\}\right\}$ in a complicated, possibly nonlinear manner, as they enter into the conditioning sets. Combining all three terms, let $A_i=\frac{\P_i\left(Y_{it}=0\right) }{\P_i\left( Y_{it}=1\right) }\frac{\tilde\alpha_i^\po}{\tilde\alpha_i^\pl}\mathcal L^*_i,$ and the conditional probability admits the following asymptotic approximation
\begin{align}
\pi _i(x) &\sim 
\frac{1}{1+A_i \cdot  x ^{\alpha^\pl-\alpha ^\po}}.\label{eq:panel-prob}
\end{align}

Without further assumptions, $A_i$ cannot be consistently estimated in small-$T$ panels due to the incidental parameter problem. However, we can eliminate $A_i$ by conditioning on the event $Y_{i1} + Y_{i2} = 1$. Following from the conditional independence \eqref{eq:model}, we obtain that as $x_{1},x_2\rightarrow \infty $,
\begin{align}
&\P_i\left( Y_{i1}=1|Y_{i1}+Y_{i2}=1,X_{i1}=x_{1},X_{i2}=x_2\right)
\label{eq:panel-cond} \\
&=\frac{1}{1+\frac{\P_i\left( Y_{i1}=0|X_{i1}=x_{1}\right) 
\P_i\left( Y_{i2}=1|X_{i2}=x_2\right) }{\P_i\left(
Y_{i1}=1|X_{i1}=x_{1}\right) \P_i\left(
Y_{i2}=0|X_{i2}=x_2\right) }} =\frac1{1+\frac{\pi_i\left( x_{2}\right) (1-\pi _i\left( x_{1}\right)) }{\pi_i\left( x_1\right)(1-\pi_i\left( x_{2}\right)) } } \sim\frac{1}{1+\left( \frac{x_{1}}{x_2}\right) ^{\alpha^\pl - \alpha^\po}}.
\notag
\end{align}
The key idea here parallels the panel data Logit model, which we will further compare in Remark \ref{rm:panel-Logit} below. 

Let $\alpha^*=\alpha^\pl - \alpha^\po$,
then the extreme elasticity is given by \[\delta_i(x)=\frac{\partial \left(\pi_i(x)\left(1-\pi_i(x)\right)\right) }{\partial x}\frac{x}{\pi_i(x)\left(1-\pi_i(x)\right)}\rightarrow -\left|\alpha^*\right|,\] which implies that, under the RV condition together with the additive form \eqref{eq:alpha_hetero}, all units share the same extreme elasticity, despite having unit-specific tail thickness and RV functions.
We can estimate $\alpha^*$ via the conditional MLE on tail observations
\begin{align*}
\hat\alpha^* &=\arg \max_{\alpha^*}\sumi\left[ \alpha^*(1-Y_{i1})\log\frac{X_{i1}}{X_{i2}}-\log\left(1+\left(\frac{X_{i1}}{X_{i2}}\right)^{\alpha^*}\right) \right]  \1\{\Xi_{N,i}\},
\end{align*}
where event $\Xi_{N,i}=\left\{Y_{i1}+Y_{i2}=1, X_{it}\ge  \x
_{N}\text{, for }t=1,2\right\},$ and the tail threshold $\x_{N}\rightarrow\infty$ as $N\rightarrow\infty$. While in principle the tail threshold could be specific to both the period $t$ and the value of $y$, we set it to a common $\x_N$ for simplicity in both theory and implementation.

\begin{remark}\label{rm:panel-Logit}
\normalfont{
We now compare our setup with the classic panel Logit model. The following discussion is also in line with Remark \ref{rm:c-s-Logit} for the cross-sectional case. For instance, equation (1) in \citet{honore2000panel}, with a slight change in notation, can be represented as:
\begin{equation}
\P\left( Y_{it}=1|X_{it}=x;C_i\right) =\frac{1}{1+\exp \left(- (x\beta +C_i)\right) },\label{eq:logit}
\end{equation}
and accordingly
\begin{equation}
\P\left( Y_{i1}=1|X_{i1}=x_1,X_{i2}=x_2,Y_{i1}+Y_{i2}=1\right) =\frac{1}{
1+\exp\left(-(x_1-x_2)\beta\right)}. \label{eq:logit-cond}
\end{equation}
Comparing the expressions in \eqref{eq:panel-prob} and \eqref{eq:panel-cond} with those in \eqref{eq:logit} and \eqref{eq:logit-cond} leads to the following useful observations.

First, the classic Logit model is characterized by the exponential function of $x$, while ours is by a power function of $x$. However, by applying a log transformation to $x$, our method essentially adopts the same conditional likelihood function as the panel Logit model, but on tail observations. Our parameter of interest $\alpha ^*$ corresponds to $-\beta $ in the panel Logit model, and as shown in Section \ref{sec:panel-largeT} for large $T$, our $-\log A_i$ corresponds to their $C_i$ as well. 

Our derivation thus provides a theoretical justification for, and clarifies the underlying assumptions behind, the intuitive practice of taking the log of $X_{it}$ when handling extreme observations. In practice, we can estimate $\alpha^*$ by conducting the conditional MLE for panel Logit models on tail observations, using $\log X_{it}$ as a regressor, where the negative of its coefficient provides the estimate for $\alpha^*$. 

Second, as discussed in Section \ref{sec:c-s-simple}, our method is a semiparametric approach, focusing on tail behavior.  
Especially, our approach avoids imposing parametric assumptions on the entire error distribution, and thus is more robust to potential misspecification. Of course, this robustness comes at the cost of reduced sample size, as we only utilize the large $X_{it}$ observations.
}
\end{remark}

\subsection{Estimation and asymptotics}\label{sec:panel-asym}
In this subsection, we derive the asymptotic normality of our estimator in the panel data setup. We now extend the baseline model in Section \ref{sec:panel-simple} and include additional covariates $Z_i$ to capture potential observed heterogeneity. The unit-specific tail thickness becomes 
\begin{equation}
\tilde\alpha_i^\py\left( z;\theta^\py\right) =z'\theta^\py+\lambda_i>0, \label{eq:heter_alpha_panel}
\end{equation}
and $\tilde\alpha_i^\py\left( z\right)=\tilde\alpha_i^\py\left( z;\theta^\py_0\right)$ denotes the tail index functions evaluated at the pseudo-true parameter values. The corresponding RV condition is $1-F_{i,X_{it}|Y_{it},Z_i}(\cdot|y,z)\in RV_{-\tilde\alpha_i^\py\left(z\right) }.$
We focus on the time-invariant $Z_i$ in this subsection, and the case with time-varying $Z_{it}$ is discussed in Section \ref{sec:time-var-z}. 

Let $\C_i$ denote the unit-specific collection of functions $\left\{\beta^\py_i(\cdot),\,C^\py_i(\cdot),\,D^\py_i(\cdot)\right\}$ in the unit-specific distribution $F_{i,X_{it}|Y_{it},Z_i}$: see Assumption \ref{assn:panel-tail} for more details. As $\left\{\lambda_i,\C_i\right\}$ are fixed for each $i$, we denote $\P_i$ as the probability measure given $\left\{\lambda_i,\C_i\right\}$, that is, $\P_i(\cdot) = \P(\cdot\;;\;\lambda_i,\C_i)$. Similarly, we denote $\E_i[\cdot] = \E[\cdot\;;\;\lambda_i,\C_i]$. 

First, we adopt the following assumptions on the model setup. 
\begin{assumption}[Panel data: model assumptions]\label{assn:panel-model}
Suppose we have:
\begin{enumerate}[label=(\alph*)]
\item $\{Y_{i1},Y_{i2},X_{i1},X_{i2},Z_i\}$ are independent across $i$.
\item For each $i$, $\{Y_{it},X_{it}\}$ are stationary across $t=1,2$.
\item For each $i$, given $\left\{\lambda_i,\C_i\right\}$, $Y_{i1}\perp Y_{i2}|X_{i1},X_{i2},Z_i$.
\item For each $i$, $\P_i\left( Y_{it}=1|X_{i1},X_{i2},Z_i\right) =\P_i
\left( Y_{it}=1|X_{it},Z_i\right) $ for $t = 1,2$.
\item For each $i$, $0<\P_i(Y_{it}=1|Z_i)<1$ almost surely in $Z_i$.
\end{enumerate}
\end{assumption}
In condition (a), the covariates and outcomes are i.n.i.d.\ across units due to the unobserved unit-specific heterogeneity as fixed effects. Condition (b) ensures stationarity that helps eliminate the unit-specific tail effects. Conditions (c) and (d) imply conditional independence, thus for $y_1,y_2\in \{0,1\}$,
\begin{align}
\P_i\left( Y_{i1}=y_1,Y_{i2}=y_2|X_{i1},X_{i2},Z_i\right) =\P_i\left( Y_{i1}=y_1|X_{i1},Z_i\right) \cdot \P_i
\left( Y_{i2}=y_2|X_{i2},Z_i\right). \label{eq:modelz}
\end{align}
Condition (e) guarantees non-degenerate probabilities for all units.

Second, we assume the following tail conditions. 
\begin{assumption}[Panel data: tail approximation]\label{assn:panel-tail} For $i=1,\cdots,N$, for $y\in\{0,1\}$, and for $\P_{Z_i}$-almost all $z\in supp(Z_i)$:
\begin{enumerate}[label=(\alph*)]
\item The conditional cdf $F_{i,X_{it}|Y_{it},Z_i}\left(x|y,z\right) $ satisfies that
\begin{equation*}
1-F_{i,X_{it}|Y_{it},Z_i}\left( x|y,z\right) = C^\py_i\left( z\right)
x^{-\tilde\alpha_i^\py\left( z\right) }\left( 1+D_i^\py\left( z\right)
x^{-\beta^\py_i\left( z\right)}+r^\py_i\left(x,z\right) \right),
\end{equation*}
as $x\rightarrow \infty$, where $C^\py_i\left( z\right) >0$, $\left|D^\py_i\left(z\right) \right|\le\overline D<\infty$, $\tilde\alpha_i^\py\left( z\right)\ge\underline\alpha>0$, and $0<\underline\beta\le\beta^\py_i(z)\le\bar\beta<\infty$, for some constants $\overline D$, $\underline\alpha$, $\underline \beta$, and $\bar\beta$. 
\item The remainder term satisfies that $\left|x^{\beta ^\py_i\left( z\right)+k }\frac{\partial^k r^\py_i\left(x,z\right)}{\partial x^k}\right|\le\bar r(x)$, where $\bar r(x)\rightarrow0$ as $x\rightarrow \infty$, for $k=0,1$.
\item The conditional pdf $f_{i,X_{it}|Y_{it},Z_i}\left( x|y,z\right)$
is non-increasing in $x\ge\x$, for some $\x>0$.
\end{enumerate}
\end{assumption}
Please refer to the discussion after Assumption \ref{assn:c-s-tail} for a detailed explanation. Compared to Assumption \ref{assn:c-s-tail}, we now have all these functions as unit-specific, indexed by $i$, and condition (b) further bounds the derivative of the remainder term for the i.n.i.d.\ case. Especially, we can accommodate a unit-specific scale parameter, given by $\left[C^\py_i\left( Z_i\right)\right]^{1/\tilde\alpha^\py(Z_i)}$, in the Pareto tail approximation. These unit-specific functions will be absorbed into $A_i$ in \eqref{eq:Ai} below, and then differenced out for small $T$ or estimated for large $T$.

Then, $\L^\py_i(x,z)=C^\py_i\left( z\right)\left( 1+D_i^\py\left( z\right)
x^{-\beta^\py_i\left( z\right) }+r^\py_i\left(x,z\right) \right)$ is the corresponding slowly varying function,
and $\frac{\L^\po_i(x,z)}{\L^\pl_i(x,z)}\rightarrow\L^*_i(z)$.
Define $\theta^*= \theta^\pl-\theta^\po$, and $A_i=\frac{\P_i\left(Y_{it}=0|Z_i\right) }{\P_i\left( Y_{it}=1|Z_i\right) }\frac{\tilde\alpha_i^\po}{\tilde\alpha_i^\pl}\mathcal L^*_i(Z_i)$. As $x\rightarrow\infty$, 
\begin{align}
\pi_i(x,z)&=\P_i\left(Y_{it}=1|X_{it}=x,Z_i\right) 
\sim\frac{1}{1+A_i\cdot x ^{Z_i'\theta^*}}. \label{eq:Ai}
\end{align}
Note that as $\P_i(\cdot) = \P(\cdot\;;\;\lambda_i,\C_i)$, $A_i$ can be viewed as the fixed effects that can potentially depend on the observed heterogeneity $Z_i$ and unobserved heterogeneity $\left\{\lambda_i,\C_i\right\}$ in an arbitrary way.

For small $T$, we again eliminate $A_i$ by conditioning on $Y_{i1}+Y_{i2}=1$ and construct the conditional MLE as follows: 
\begin{align}
\hat\theta^* &=\arg \max_{\theta^*\in\Theta^*}
\sumi\left[ Z_i'\theta^* (1-Y_{i1})\log\frac{X_{i1}}{X_{i2}}-\log\left(1+\left(\frac{X_{i1}}{X_{i2}}\right)^{Z_i'\theta^* }\right) \right]
\1\{\Xi_{N,i}\}, \label{eq:mle panel} 
\end{align}
where $\Theta^*\subset\mathbb R^{d_Z}$ is a convex cone and the pseudo-true parameter value $\theta^*_0\in \intr\left(\Theta^*\right)$. Again, this can be implemented via the conditional MLE for panel Logit models, applied to tail observations and using $Z_i\log X_{it}$ as regressors.

Finally, analogous to the cross-sectional case, the number of units contributing to the conditional likelihood, $N_{\Xi}=\sumi \1 \{\Xi_i\} $, is a random variable, and we define the following non-random sequence to capture the asymptotic proportion of these units: 
\begin{align}
\xi_N=\frac 1 N \sumi \xi_{N,i},\text{ where }\xi_{N,i}=2\E_i\left[\prod_{y\in\{0,1\}}\x_N^{-\tilde\alpha^\py_i\left( Z_i\right) }C^\py_i\left( Z_i\right) \P_i(Y_{it}=y|Z_i) \right].\label{eq:panel-xi-def}
\end{align}
Also define the Hessian terms $H_{N0,i}=\E_i\left[\left.
\frac{\left( \frac{X_{i1}}{X_{i2}}\right) ^{Z_i'\theta^*_0}}{\left( 1+\left( \frac{X_{i1}}{X_{i2}}\right) ^{Z_i^{\prime
}\theta^*_0}\right) ^2}\left( \log \frac{X_{i1}}{X_{i2}} \right) ^2Z_iZ_i'\right|\Xi_{N,i}\right]$ and \(H_{N0} = {\sumi H_{N0,i}\xi_{N,i}}/{\sumi\xi_{N,i}}.\) Now we assume the following conditions for estimation.
\begin{assumption}[Panel data: estimation]\label{assn:panel-est} For $y\in\{0,1\}$, suppose we have: 
\begin{enumerate}[label=(\alph*)] 
\item $N\xi_N \rightarrow\infty$ and $N\xi_N=o\left(\left(\underline x_N\right)^{\underline\beta\frac{2+\kappa}{1+\kappa
    }}\right)$ as $N\rightarrow\infty$, for some $\kappa\ge0$.
\end{enumerate}
\noindent For sufficiently large $N$ and all $i$:
\begin{enumerate}[label=(\alph*),resume] 
\item $\lambda_{\min}\left(H_{N0,i}\right)\ge\underline\sigma^2_N$, where $\underline{\sigma}^2_N\left(N\xi_N\right)^{\kappa/\left(2+\kappa\right)}\to\infty$ as $N\rightarrow\infty$.
\item $\E_i\left[\left.\left| \log\frac{X_{i1}}{X_{i2}}\right|^4\|Z_i\|^4\right|\Xi_{N,i}\right] <M$ for some $M<\infty$.
\end{enumerate}
\end{assumption}
This assumption is in line with Assumption \ref{assn:c-s-est} for the cross-sectional case and accommodates i.n.i.d.\ observations across $i$. Condition (a) ensures that the number of observations contributing to the likelihood increases with the cross-sectional sample size, while employing undersmoothing to eliminate asymptotic bias. Conditions (b) and (c) guarantee the validity of the LLN and Lindeberg-Feller CLT for i.n.i.d.\ observations.

Under these assumptions, we establish the asymptotic normality of our estimator for panel data models with i.n.i.d.\ random variables. As the number of tail observations $N_{\Xi}=N\xi_N\left(1+o_p(1)\right)$ increases at a slower rate than $N$, the convergence rate is slower than the parametric rate.
\begin{theorem}[Panel data: common parameters]
\label{thm:panel} Suppose that $\theta^*_0\in \intr\left(\Theta^*\right)$, where $\Theta^*\subset\mathbb R^{d_Z}$ is a convex cone. 
Suppose Assumptions \ref{assn:panel-model}--\ref{assn:panel-est} hold. Then, as $N\rightarrow\infty$,
\begin{equation*}
\sqrt{N\xi_N}\left(H_{N0}\right)^{1/2}\left( \hat\theta^*-\theta^*_0\right) \overset{d}{
\rightarrow }\mathcal{N}\left( 0,\I_{d_Z}\right).
\end{equation*} 
\end{theorem}

Moreover, the unit-specific extreme elasticity depends only on observed heterogeneity, 
\begin{equation*}
\delta_i \left( x,z\right) = \frac{\partial \left(\pi_i (x,z)(1-\pi_i (x,z))\right)}{
\partial x}\frac{x}{\pi_i (x,z)\left( 1-\pi_i (x,z)\right) } \rightarrow-\left|z'\theta^*\right|,
\end{equation*}
provided that $f_{i,X_{it}|Y_{it},Z_i}'\left( x|y,z\right)$ is non-increasing for sufficiently large $x$.
In practice, we can consistently estimate this elasticity by substituting $\hat\theta^*$ for $\theta^*$.

\section{Extensions}\label{sec:extensions}

\subsection{Panel data with large $T$}\label{sec:panel-largeT}

Our previous panel data analysis has focused on small-$T$ panels. In such cases, unit-specific parameters cannot be consistently estimated, and hence we eliminate them by conditioning on the sum of $Y_{it}$ over time, which is a sufficient statistic for the individual parameters. Now, with a large $T$, we are able to estimate these unit-specific parameters. 

Based on \eqref{eq:Ai} in Section \ref{sec:panel-asym}, let $\tilde A_i = -\log A_i$, and we have that, as $x\rightarrow\infty$, 
\begin{align}
\P_i\left(Y_{it}=1|X_{it}=x,Z_i\right) 
\sim\frac{1}{1+A_i\cdot x ^{Z_i'\theta^*}} 
=\frac{1}{1+\exp \left( -\tilde A_i+\log x\cdot Z_i'\theta^* \right)}.\label{eq:longT}
\end{align}
Given this asymptotic equivalence, we can resort to panel Logit estimators for large $N$ and large $T$. For example, as shown in Example 2 in \citet{fernandez2018fixed}, the
MLE estimates of $\{\theta^*,\{\tilde A_i\} \}$ are jointly consistent and $\hat\theta^*-\theta^*_0$ is asymptotically distributed as $\mathcal{N}\left( \frac{B}{N},\frac{H^{-1}}{NT}\right),$
as $N,T\rightarrow \infty $. $B$ denotes the asymptotic bias
due to the incidental parameter, and analytical or jackknife methods can be applied for bias correction. See also \citet{stammann2016estimating}. Once $\{\theta^*,\{\tilde A_i\} \}$ are consistently estimated, we can proceed to estimate the APE and provide unit-specific forecasts of $\P_i\left(Y_{i,T+1}=1|X_{i,T+1}=x,Z_i\right)$ based on \eqref{eq:longT}.\footnote{Introducing time-varying thickness $\theta^\py_t$ could be possible with additional parametric structure, such as combining our estimator with the autoregressive conditional density models in \citet{hansen1994autoregressive}. However, this changes $A_i$ to time-varying $A_{it}=\frac{\P_i\left(Y_{it}=0|Z_{i}\right) }{\P_i\left( Y_{it}=1|Z_{i}\right) }\frac{\tilde\alpha_{it}^\po(Z_{i})}{\tilde\alpha_{it}^\pl(Z_{i})}{\L_i^*(Z_{i})}$, which can no longer be differenced out even if $\left\{\lambda_i,\C_i\right\}$ does not change over time. To address this, we need to impose further parametric assumptions on $A_{it}$ or employ a local likelihood estimator as in Section \ref{sec:time-var-z}.}

\subsection{Panels with time-varying covariates $Z_{it}$} \label{sec:time-var-z}
In empirical studies, some additional covariates $Z$ may vary over time. For instance, in the context of country defaults, factors such as global business cycles could play a significant role. Then, the unit-specific tail indices in \eqref{eq:heter_alpha_panel} becomes $\tilde\alpha_i^\py\left( z_{it}\right) =z_{it}'\theta^\py+\lambda_i$, and $A_i$ in \eqref{eq:Ai} becomes $A_{it}=\frac{\P_i\left(Y_{it}=0|Z_{it}\right) }{\P_i\left( Y_{it}=1|Z_{it}\right) }\frac{\tilde\alpha_i^\po(Z_{it})}{\tilde\alpha_i^\pl(Z_{it})}{\L _i^*(Z_{it})} $, so we cannot difference out $A_{it}$ as in \eqref{eq:Ai}. 

However, under regularity conditions, $\theta^*$ can be estimated using a local (conditional) likelihood estimator \citep{tibshirani1987local, honore2000panel}. Intuitively, it introduces an additional kernel weight term that controls the distances across $Z_{it}$. Assuming $T=2$ for simplicity, the estimator for $\theta^*$ is given by
\[
\hat\theta^*=\arg \max_{\theta^*\in\Theta^*}\sum_{i=1}^{N}k^{(d_Z)}_h\left({Z_{i1},Z_{i2}}\right)\left\{(1-Y_{i1})\log\frac{X_{i1}
}{X_{i2}}\cdot Z_{i1}'\theta^*+\log\left(1+\left( \frac{
X_{i1}}{X_{i2}}\right) ^{Z_{i1}'\theta^*}\right)\right\}\1\{\Xi_{N,i}\}
\]
where $k^{(d_Z)}_h({z_{i1},z_{i2}})=\frac 1 {h^{d_Z}} \prod_{d=1}^{d_Z}k\left(\frac{z_{i1,d}-z_{i2,d}} h\right)$ is a multidimensional kernel with $k(\cdot)$ being the kernel function and $h$ being the bandwidth.

\subsection{Dynamic panel data model}\label{sec:dyn-panel}
In this subsection, we extend our analysis to dynamic panel data, incorporating predetermined variables such as lagged outcomes to capture potential persistence. 

For small $T$, w.l.o.g., we assume the first order Markov property: for each $i$,
\[X_{it},Y_{it} \;|\;X_{i,1:t-1},Y_{i,1:t-1},Z_{i}=X_{it},Y_{it} \;|\;Y_{i,t-1},Z_i,\] 
given $\left\{\lambda_i,\C_i\right\}$, where $\C_i$ comprises unit-specific functions in $F_{i,X_{it}|Y_{it},Y_{i,t-1},Z_i}$, similar to the tail approximation in Assumption \ref{assn:c-s-tail}. 
The first order Markov property implies the stationarity of the conditional joint distribution $Y_{it},Y_{i,t-1},X_{it}\;|\;Z_i$. 

Recall that previously we applied Bayes' theorem by partitioning the data into two subsets based on $Y_{it}=0$ and 1. With dynamics, we now further partition the data according to transition dynamics. In the first order Markov case, there are four transition patterns: $0\rightarrow0$, $0\rightarrow1$, $1\rightarrow0$, and $1\rightarrow1$. Therefore, we partition the data into four subsets by $\left( Y_{it},Y_{i,t-1}\right) =(y,y_{-})$, and characterize Bayes' theorem as follows 
\begin{align*}
&\P_i\left(
Y_{it}=y|X_{it}=x,Y_{i,t-1}=y_{-},Z_i=z\right) \\
&=\frac{f_{i,X_{it}|Y_{it},Y_{i,t-1},Z_i}\left(x|y,y_{-},z\right) 
\P_i\left( Y_{it}=y|Y_{i,t-1}=y_{-},Z_i=z\right) }{
\sum_{y,y_{-}}f_{i,X_{it}|Y_{it},Y_{i,t-1},Z_i}\left(x|y,y_{-},z\right) \P_i\left( Y_{it}=y|Y_{i,t-1}=y_{-},Z_i=z\right) }.
\end{align*}
Similar to \eqref{eq:heter_alpha_panel}, we introduce unit-specific tail thickness as \begin{align*}
&\tilde\alpha_i^{(yy_{-})}(z)=z'\theta^{(yy_{-})}+\lambda_i.
\end{align*}
Then, we have $1-F_{i,X_{it}|Y_{it},Y_{i,t-1},Z_i}\left(\cdot|y,y_{-},z\right) \in RV_{-\tilde\alpha_i^{(yy_{-})}(z)}.$

In the previous static panel analysis, our exercise was equivalent to normalizing $\theta^\po=\mathbf0$ and estimating $\theta^*=\theta^\pl$. Here, we similary normalize $\theta^{(00)}=0$ to ensure identification. With four transition patterns, a minimum of five periods of data $\mathbf{Y}_i=(Y_{i1,}Y_{i2,},Y_{i3,},Y_{i4},Y_{i5})'$ are required. The following four events cover all transition patterns and help construct the conditional likelihood:
\begin{align*}
E_{1} :\mathbf{Y}_i=\left( 0,0,1,1,0\right) ', \quad
E_{2} :\mathbf{Y}_i=\left( 0,1,1,0,0\right) ', \\
E_{3} :\mathbf{Y}_i=\left( 1,1,0,0,1\right) ',\quad 
E_{4} :\mathbf{Y}_i=\left( 1,0,0,1,1\right) '.
\end{align*}
Given the stationarity of the conditional joint distribution $Y_{it},Y_{i,t-1},X_{it}\;|\;Z_i$ implied by the first order Markov assumption, we have that as $x_t\rightarrow\infty$ for $t=1,\cdots,5$,
\begin{align*}
&\P_i\left( E_{1}\middle|\cup_{e=1}^4 E_{e},\left\{X_{it}=x_t\right\}_{t=1}^5,Z_i=z\right) \\
&\quad \sim \frac{x_{3}^{-z'\theta^{(01)}}x_{4}^{-z'\theta^{(11)}}x_{5}^{-z'\theta^{(10)}}}
{\left(
\begin{aligned}
&x_{3}^{-z'\theta^{(01)}}x_{4}^{-z'\theta^{(11)}}x_{5}^{-z'\theta^{(10)}}+x_2^{-z'\theta^{(01)}}x_{3}^{-z'\theta^{(11)}}x_{4}^{-z'\theta^{(10)}}\\
&\quad +x_2^{-z'\theta^{(11)}}x_{3}^{-z'\theta^{(10)}}x_{5}^{-z'\theta^{(01)}}+x_2^{-z'\theta^{(10)}}x_{4}^{-z'\theta^{(01)}}x_{5}^{-z'\theta^{(11)}}
\end{aligned}
\right)}.
\end{align*}
Then, the conditional MLE for $\left(\theta^{(01)},\theta^{(10)},\theta^{(11)}\right) $ can be constructed similarly to the static panel case in Section \ref{sec:panel-asym}. Note that unit $i$ contributes to the conditional likelihood only if $X_{it}$ appears in the tail for at least five periods. This data requirement may be challenging, so the method would be more suitable for datasets with a larger $N$, and a fixed but slightly larger $T$.

For large $T$, the bias correction in \citet{fernandez2018fixed} remains applicable to dynamic panel data models, so the estimator in Section \ref{sec:panel-largeT} remains valid.

\section{Monte Carlo simulations}\label{sec:sim}
We conduct two sets of Monte Carlo simulation experiments. Experiment 1 examines cross-sectional data and focuses on estimation performance, providing intuitions into when and how the proposed estimator outperforms the alternatives. Experiment 2 investigates panel data with large $N$ and large $T$,\footnote{We also conducted Monte Carlo simulations in panel data with large $N$ and small $T$ in a previous version of this paper. Results from these simulations are available upon request.} and focuses on pseudo out-of-sample forecasting performance, aligning more closely with the empirical example of bank loan charge-off rates. 

\subsection{Alternative estimators}\label{sec:alt-est}
We compare the proposed estimator with four alternatives: a Logit estimator using all observations (Logit, all $X$), a Logit estimator using only tail observations (Logit, tail $X$), a local linear estimator, and a local Logit estimator, where the first two are parametric estimators and the last two are nonparametric ones.

For cross-sectional data, let $\beta = (\beta_0,\,\beta_1)'$. First, the Logit estimator is defined as \[Y_i = \1(\beta_0+\beta_1X_i-\varepsilon_i\ge0),\quad\varepsilon_i\sim\text{ standard logistic.}\] Second, the local linear estimator is given by
\begin{align*}
&\hat \beta(x) = \arg\underset{\beta}{\min} \sumi  k_h(X_i - x) \left[Y_i - \beta_0 - \beta_1 (X_i - x)\right]^2,\text{ and }\hat{\E}[Y_i|X_i=x] =\hat \beta_0(x),
\end{align*}
where $k_h(\cdot)$ is a kernel function with bandwidth $h$. We employ a Gaussian kernel here, 
and choose the bandwidth based on Silverman's rule of thumb $h \approx 1.06 \hat{\sigma} N^{-1/5}$, where $\hat{\sigma}$ is the standard deviation of the $X_i$s. The results are robust with respect to a range of bandwidth choices. Finally, the local Logit estimator is a flexible nonparametric estimator that specifically accounts for binary outcomes $Y_i$. Let $p_i(\beta,x) = \frac{1}{1 + \exp\left[-(\beta_0 + \beta_1 (X_i-x))\right]}$. The local Logit maximizes the locally weighted log-likelihood function
\begin{align*}
&\hat \beta(x)= \arg\underset{\beta}{\max} \sumi  k_h(X_i - x) \left[ Y_i \log p_i(\beta,x) + (1 - Y_i) \log\left(1 - p_i(\beta,x)\right) \right],\\
&\hat{\E}[Y_i|X_i=x] =p_i(\hat\beta(x),x).
\end{align*}

For panel data, the Logit estimator is given by
\[Y_{it} = \1(\beta_0+\beta_1X_{it}+C_i-\varepsilon_{it}\ge0),\quad \varepsilon_{it}\sim \text{standard logistic},\] where $C_i$ captures unobserved individual heterogeneity. In panels with large $N$ and large $T$, $\{\beta_0,\beta_1,\{C_i\}\}$ can again be jointly estimated using a fixed effects estimator with bias corrections \citep{fernandez2018fixed, stammann2016estimating}.
For the local linear and local Logit estimators, the setup is more flexible \[Y_{it}=g\left(X_{it},C_i,\varepsilon_{it}\right),\] where the function $g$ and the distributions of $C_i$ and $\varepsilon_{it}$ could be unknown. We incorporate a correlated random effects structure, which allows for the unobserved individual heterogeneity to be correlated with the covariates $X_{it}$ (and $Z_i)$ and thus may help enhance the performance of these alternatives. More specifically, suppose there is a sufficient statistic $V_i$ which could be multidimensional, such that $C_i|X_{i,1:T}=C_i|V_i$. One commonly used example of $V_i$ is the time sum of $X_{it}$, i.e., $V_i = \sum_t X_{it}.$ Then, $V_i$ can be included in the conditioning set for the local linear and local Logit estimators, so we essentially run a nonparametric regression to estimate the conditional mean $\E[Y_i|X_i,V_i]$: see details in \citet{liu2021identification} for general nonlinear panel data models. Without loss of generality, let $V_i$ be a scalar for notation simplicity. Now $\beta = (\beta_0,\,\beta_1,\,\beta_2)'$, $p_{it}(\beta,x) = \frac{1}{1 + \exp\left[-(\beta_0 + \beta_1 (X_{it}-x)+\beta_2(V_i-v))\right]}$, and
\begin{align*}
\text{Local linear: }& \underset{\beta}{\min} \sumi  k_h(X_{it} - x) k_h(V_i - v)\left[Y_{it} - \beta_0 - \beta_1 (X_{it} - x)-\beta_2(V_i-v)\right]^2,\\
\text{Local Logit: }&\underset{\beta}{\max} \sumi  k_h(X_{it} - x) k_h(V_i - v) \left[ Y_{it} \log p_{it}(\beta,x) + (1 - Y_{it}) \log\left(1 - p_{it}(\beta,x)\right) \right].
\end{align*} Furthermore, for all these estimators, we can also incorporate additional covariates $Z_i$, and the formulas are similar to those above.

\subsection{Experiment 1: cross-sectional data}

\begin{table}[t]
\caption{Monte Carlo design - Experiment 1}
\label{tab:sim-exp1-dgp}
\begin{center}
    \vspace{-1em}
\begin{tabular}{ll} \hline \hline
Model:& $Y_i = \1\left(X_i-\varepsilon_i \ge \text{med}_X-\text{med}_\varepsilon \right)$ \\
Covariate:& $X_i\sim |t_{\alpha_X}|$, $\alpha_X=0.5,1,1.5,2$ \\
Error term:& $\varepsilon_i\sim |t_{\alpha_{\varepsilon}}| $, $\alpha_{\varepsilon}=0.5,1,1.5,2$ \\
Sample Size:& $N=10000$ \\
\# Repetitions:& $N_{sim}=1000$ \\ \hline
\end{tabular}
\vspace{-1em}
\end{center}
\end{table}

Experiment 1 is based on cross-sectional data, where we focus on comparing estimation performance across estimators. 

The Monte Carlo design is summarized in Table \ref{tab:sim-exp1-dgp}. The data are generated from a threshold-crossing model with both $X_i$ and $\varepsilon_i$ following Student-$t$ distributions.\footnote{We use the difference in medians $\left(\text{med}_X-\text{med}_{\varepsilon}\right)$ as the threshold to keep the samples more balanced between $Y_i=0$ and 1.} Here we consider a range of $\alpha$ values from 0.5 to 2. As elaborated in Proposition \ref{prop:threshold-xing}, in the tail, $\alpha^\po=\alpha_X+\alpha_{\varepsilon}$ ranges from 1 to 4, $\alpha^\pl=\alpha_X$ varies from 0.5 to 2, and the extreme elasticity $-\left|\alpha^\pl-\alpha^\po\right|=-\alpha_{\varepsilon}$ spans from $-2$ to $-0.5$.\footnote{In many economic datasets, the tail index $\alpha$ typically falls between 1 and 2. For example, in a review paper, \citet{gabaix2009} mentions that ``it seems that the tail exponent of wealth is rather stable, perhaps around 1.5,'' referencing \citet{klass2006forbes}. Also, \citet{clauset2009power} remark that the tail index usually lies between 1 and 2 (note that the $\alpha$ in their notation corresponds to $\alpha-1$ in ours).}  The sample size $N=10000$ is directly comparable with our empirical data sets on bank loan charge-off rates, which comprises 8538 banks in the baseline sample. For each experimental setup, we execute 1000 Monte Carlo simulations.

Our tail estimator for $\alpha^\py$ is based on the ``rank-1/2'' estimator in \citet{gabaix2011rank}, which can be viewed as a refinement of the classic Hill estimator and often performs well in finite samples. Then, we estimate the conditional probability $\pi(x)$ using the sample analog of \eqref{eq:c-s-simple-prob1} and \eqref{eq:c-s-simple-prob2}. We also compare with the alternative estimators described in Section \ref{sec:alt-est}.
For both the tail estimator and the ``Logit, tail $X$'' estimator, we set the cutoffs $\x_N^\py$ at the 97.5th percentile of the distributions of $X$ for $Y=0$ and 1 separately.\footnote{\label{com:cutoff}The tail estimator is robust with respect to a range of cutoffs $\x_N^\py$. As evident from the log-log plot in Figure \ref{fig:sim-exp1-loglog}, the tail exhibits a pronounced downward pattern, with the slope remaining stable across small variations in $\x_N^\py$.} In the main text, we plot the comparisons regarding estimated parameters, conditional probability, and extreme elasticity, for the specification with $\alpha_X = 1$ and $\alpha_{\varepsilon}=1$. For detailed results across all model specifications, please refer to Tables  \ref{tab:sim-exp1-param}--\ref{tab:sim-exp1-ape} in the Appendix. The main messages are similar across specifications with different tail index values.

\begin{figure}[t]
\caption{Log-log plot - Experiment 1, $\alpha_X = 1$, $\alpha_{\varepsilon}=1$}
\label{fig:sim-exp1-loglog}
\begin{center}
    \vspace{-1em}
\includegraphics[width=.8\textwidth]{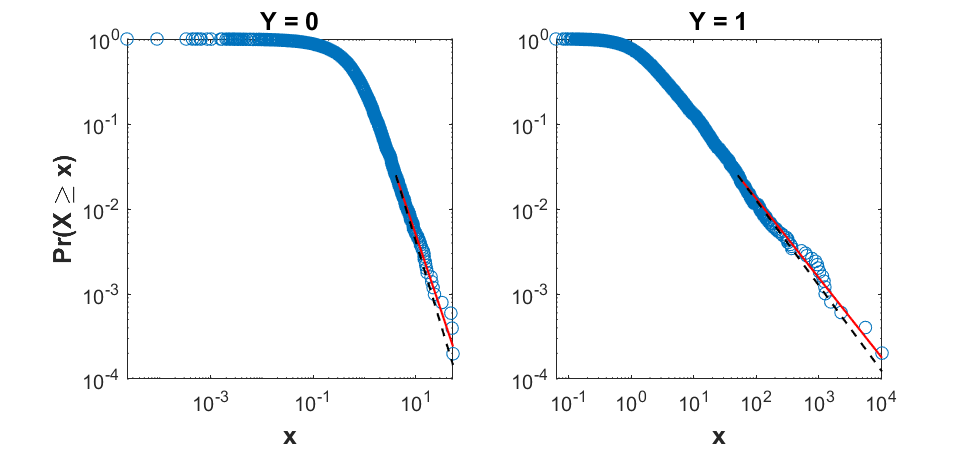}
    \vspace{-1em}
\end{center}
{\footnotesize {\em Notes:} 
In the tail, $\alpha^\po=2$ and $\alpha^\pl=1$ are indicated by the black dash lines. The estimated $\hat\alpha^\po$ and $\hat\alpha^\pl$ are indicated by the red solid lines.}\setlength{\baselineskip}{4mm}
\end{figure}

\begin{figure}[t]
\caption{Parameter estimation - Experiment 1, $\alpha_X = 1$, $\alpha_{\varepsilon}=1$}
\label{fig:sim-exp1-param}
\begin{center}
    \vspace{-1em}
\includegraphics[width=.8\textwidth]{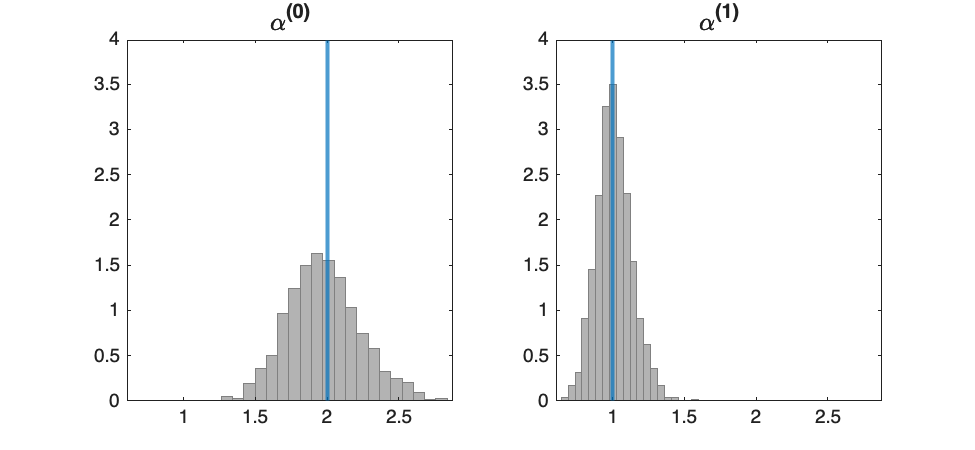} 
\vspace{-1em}
\end{center}
{\footnotesize {\em Notes:} 
In the tail, $\alpha^\po=2$ and $\alpha^\pl=1$ are indicated by the blue vertical lines.}\setlength{\baselineskip}{4mm}
\end{figure}

Figure \ref{fig:sim-exp1-loglog} depicts a log-log plot from one of the 1000 Monte Carlo simulations, where the tail observations align around a downward sloping line, contrasting with the flatter non-tail region. This distinct pattern between the tail and non-tail region is common in many empirical datasets as well, and suggests that the proposed tail estimator would be able to effectively capture the heavy tail behavior. The estimated tail indices (red solid lines) closely match their true asymptotic values (black dashed lines). Figure \ref{fig:sim-exp1-param} further shows the distributions of the parameter estimates from all 1000 Monte Carlo simulations. The distributions are both bell-shaped and centered around the true asymptotic values indicated by the blue vertical lines. Notably, $\alpha^\pl$ is more precisely estimated than $\alpha^\po$, as the former corresponds to a thicker tail.

\begin{figure}[t]
\caption{$\hat \P(Y=1|X=x)$ - Experiment 1, $\alpha_X = 1$, $\alpha_{\varepsilon}=1$}
\label{fig:sim-exp1-PY}
\vspace{-2em}
\begin{center}
\hspace*{-0.13\textwidth}\includegraphics[width=1.25\textwidth]{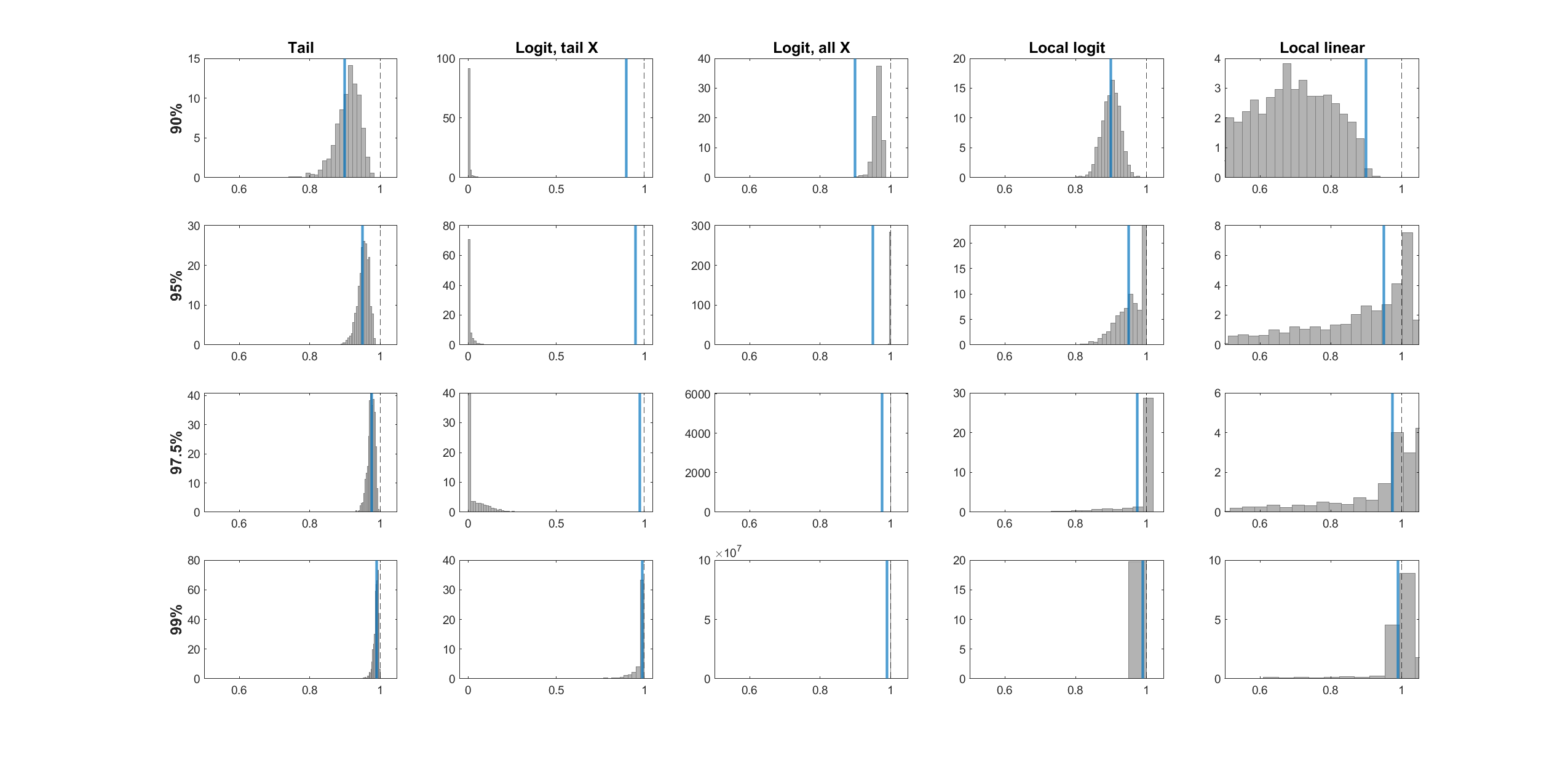}
\end{center}\vspace{-2em}
\footnotesize{\textit{Notes:} 
Each row evaluates the probability at a specific percentile of the distribution of $X$ (90th, 95th, 97.5th, and 99th). The true $\P(Y=1|X=x)$ is indicated by the blue vertical lines. }
\end{figure}

Figure \ref{fig:sim-exp1-PY} presents the estimated conditional probabilities for $X$ at the 90th, 95th, 97.5th, and 99th percentiles of its distribution. The blue vertical lines indicate the true conditional probability \(\P(Y=1\mid X=x)\), which increases with the evaluation point \(x\). The proposed tail estimator dominates all other alternatives, centered around the true probabilities with lower variance. The variance of the tail estimator decreases as the evaluation point $x$ increases, reflecting reduced estimation uncertainty as the conditional probability approaches one.

In contrast, the Logit estimators are largely biased due to the model misspecification that fails to account for the heavy tail. The direction of the bias depends on the relative positions of the estimation sample and the evaluation points. When the estimation sample includes all observations, the ``Logit, all $X$'' estimator shows an upward bias. This occurs because the estimation sample is overweighted by non-tail observations, and thus the tail evaluation points are too extreme given the misspecified thin-tailed logistic distributions, which leads to an overestimation of the probability of $Y=1$ in the tail evaluation points. Conversely, when the estimation sample includes tail observations only, the ``Logit, tail $X$'' estimator tends to exhibit a downward bias, except at the 99th percentile. The reason is that given the logistic model's misspecified assumption of a thinner tail, lower evaluation points appear relatively moderate compared to the tail observations for estimation, resulting in an underestimation of the probability of $Y=1$. 

The local Logit estimator performs reasonably well for less extreme evaluation points, such as when $x$ is at the 90th percentile, but it exhibits large bias and variance at more extreme points. The local linear estimator is even more flexible, and thus yields even larger variance across all evaluation points. Due to their poor estimation performance and relatively long computation times, we omit these nonparametric estimators in Experiment 2 and the empirical example below, except for the baseline case in Table \ref{tab:app-lps}.

Across all comparisons, we see that our tail estimator enjoys the best of both worlds as a semiparametric estimator that puts parametric assumptions only on the tail, the region of primary interest, while remaining agnostic in the non-tail region. Furthermore, as mentioned before, our approach offers significant improvement in finite samples for moderately large $x$ where $\pi(x)$ may not be very close to 1.

\begin{figure}[t]
\caption{Extreme elasticity estimation - Experiment 1, $\alpha_X = 1$, $\alpha_{\varepsilon}=1$}
\label{fig:sim-exp1-elas}
\begin{center}
    \vspace{-1em}
\hspace*{-0.13\textwidth}\includegraphics[width=1.25\textwidth]{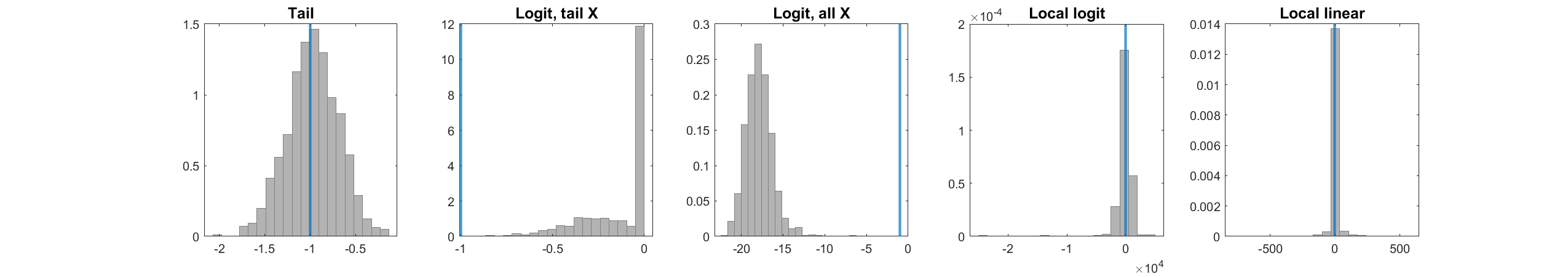}

\vspace{-1em}
\end{center}
\footnotesize{\textit{Notes:} 
The extreme elasticity, $-\left|\alpha^\pl-\alpha^\po\right|=-\alpha_{\varepsilon}=-1$, is indicated by the blue vertical lines. Evaluated at the 97.5th percentile of the distribution of $X$.}\setlength{\baselineskip}{4mm}
\end{figure}

Figure \ref{fig:sim-exp1-elas} plots the estimated extreme elasticities evaluated at the 97.5th percentile of the distribution of $X$, and the key messages are similar---the Logit estimators induce large bias, the nonparametric estimators exhibit high variance, whereas the proposed tail estimator flexibly and efficiently capture the tail behavior and yields the most accurate estimates.

\subsection{Experiment 2: panel data}\label{sec:sim-exp2}

\begin{table}[t]
\caption{Monte Carlo design - Experiment 2}
\label{tab:sim-exp2-dgp}
\begin{center}
    \vspace{-1em}
\begin{tabular}{ll} \hline \hline
Model:& $Y_{it} = \1\left(X_{it}-\varepsilon_{it} \ge  \text{med}_X-\text{med}_\varepsilon \right)$ \\
Covariate:& $X_{it}\sim |t_{\lambda_i}|$, $\lambda_i=\max\{\alpha_X+\tilde \lambda_i,0\}$,\\ & $\alpha_X=1,1.5,2$, $\tilde \lambda_i\sim 0.2N(-0.25,0.1^2) + 0.8N(0.25,0.1^2)$ \\
Error term:& $\varepsilon_{it}\sim |t_{\alpha_{\varepsilon}}| $, $\alpha_{\varepsilon}=1,1.5,2$ \\

Sample Size:& $N=10000$, $T=100$ \\
\# Repetitions:& $N_{sim}=1000$ \\ \hline
\end{tabular}
\vspace{-1em}
\end{center}
\end{table}

Experiment 2 examines panel data with large $N$ and large $T$, focusing on pseudo out-of-sample forecasting performance, which is closely related to the empirical analysis of bank loan charge-off rates. 

The Monte Carlo design is adapted from the cross-sectional case in Experiment 1, with specifics described in Table \ref{tab:sim-exp2-dgp}. Now we incorporate unit-specific tail thickness $\lambda_i$ for $X_{it}$, where the underlying distribution of $\lambda_i$ is bimodal. For the tail estimator, we employ bias corrections as in \citet{fernandez2018fixed} and \citet{stammann2016estimating}. See Sections \ref{sec:panel-largeT} and \ref{sec:alt-est} for additional details on the tail and alternative estimators, respectively. The cutoffs for both the tail estimator and ``Logit, tail $X$'' are set at the 90th percentile of the $X$ distributions. Recall that the tail estimator is robust over a range of cutoffs, as discussed in footnote \ref{com:cutoff}. 

The accuracy of density forecasts is evaluated using the log predictive score (LPS), as recommended by \citet{AmisanoGiacomini2007}. The LPS is calculated as $ LPS = \frac{1}{ N_{f}^{\dagger}}\sum_{i\in \mathcal N_{f}^{\dagger}}\log\hat{p}(y_{i,T+1}|D) $, where $ y_{i,T+1} $ is the outcome at time $ T+1 $, and $ \hat{p}(y_{i,T+1}|D) $ is the predictive likelihood based on the estimated model and observed data $ D $. $\mathcal N_{f}^{\dagger}$ is the set of units forecasted, characterized by following criteria: (1) $X_{it}$ exceeds the 90th percentile in the estimation sample, (2) $Y_{it}$ switches values over time among these tail observations, and (3) $X_{i,T+1}$ also falls within the tail during the forecasting period. 
To assess the significance of the differences in the LPS, we combine the tests from \citet{AmisanoGiacomini2007} for density forecasts and \citet{qu2023comparing} for panel data.

\begin{table}[t]
\caption{Parameter estimation and forecast evaluation - Experiment 2}
\label{tab:sim-exp2-lps}
\begin{center}
    \vspace{-1em}
\scalebox{1}{
\begin{tabular}{lllrrr} \hline \hline 
& & & $\alpha_X=1$ & $\alpha_X=1.5$& $\alpha_X=2$ \\ \hline
\multirow{4}{*}{$\alpha_{\varepsilon}=1$}& $\hat \alpha^*$ & & -1.05 (0.03) & -1.06 (0.03) & -1.07 (0.03) \\ \cline{2-6}
& \multirow{3}{*}{$\text{LPS}\cdot N_{f}^{\dagger}$} & \it Tail & \it -113 \phantom{***} & \it -235 \phantom{***} & \it -318 \phantom{***} \\ \cline{3-6}
& & Logit, tail $X$ & -148 *** & -47 *** & -8 *** \\ 
& & Logit, all $X$ & -480 *** & -89 *** & -38 *** \\ \hline\hline
\multirow{4}{*}{$\alpha_{\varepsilon}=1.5$}& $\hat \alpha^*$ & & -1.53 (0.06) & -1.53 (0.04) & -1.52 (0.04) \\ \cline{2-6}
& \multirow{3}{*}{$\text{LPS}\cdot N_{f}^{\dagger}$} & \it Tail & \it -37 \phantom{***} & \it -106 \phantom{***} & \it -172 \phantom{***} \\ \cline{3-6}
& & Logit, tail $X$ & -50 *** & -43 *** & -7 *** \\ 
& & Logit, all $X$ & -267 *** & -64 *** & -33 *** \\ \hline\hline
\multirow{4}{*}{$\alpha_{\varepsilon}=2$}& $\hat \alpha^*$ & & -2.02 (0.09) & -1.98 (0.06) & -1.95 (0.05) \\ \cline{2-6}
& \multirow{3}{*}{$\text{LPS}\cdot N_{f}^{\dagger}$} & \it Tail & \it -15 \phantom{***} & \it -55 \phantom{***} & \it -102 \phantom{***} \\ \cline{3-6}
& & Logit, tail $X$ & -13 *** & -35 *** & -7 *** \\ 
& & Logit, all $X$ & -135 *** & -55 *** & -20 *** \\ \hline

\end{tabular} 
} 
\end{center}
{\footnotesize {\em Notes:} $\hat \alpha^*$ is estimated by the tail estimator, averaged across $N_{sim}=1000$ repetitions. Standard errors across repetitions are shown in parentheses. For the tail estimator, the table reports the exact values of $\text{LPS}\cdot N_{f}^{\dagger}$ (averaged across $N_{sim}=1000$ repetitions). For other estimators, the table reports their differences from the tail estimator. The tests compare other estimators with the tail estimator, with significance levels indicated by *: 10\%, **: 5\%, and ***: 1\%.}\setlength{\baselineskip}{4mm}
\end{table}

In Table \ref{tab:sim-exp2-lps}, the first row in each subpanel presents extreme elasticity estimates from the tail estimator, which are close to the true values, $-\alpha_{\varepsilon}$. Subsequent rows compare forecast accuracy among estimators, and we see that the tail estimator consistently outperforms both ``Logit, tail $X$'' and ``Logit, all $X$'' across all specifications. Although ``Logit, tail $X$'' is better than ``Logit, all $X$,'' it still performs worse than the tail estimator, especially with smaller $\alpha_X$ and $\alpha_{\varepsilon}$, where heavy tails are more pronounced. Therefore, it is important to distinguish the tail from the middle of the sample as well as account for heavy tail patterns. To further demonstrate this point, Figure \ref{fig:sim-exp2-lps} in the Appendix shows the scatter plots of the LPS from all Monte Carlo repetitions in the setup with $\alpha_X = 1$ and $\alpha_{\varepsilon}=1$.

\section{Empirical example: housing prices and bank riskiness}\label{sec:app}

Charge-off rates serve as an indicator of bank losses. A bank could be riskier for a particular type of loan if its corresponding charge-off rates exceed a certain level. In our analysis, we focus on a panel of small banks with assets less than \$1 billion, similar to \citet{liu2023forecasting}. Since the banks are small, it is reasonable to assume that they operate primarily in local markets. In this empirical example, we examine the impact of substantial local housing price declines on the riskiness of small banks, considering that this channel played a pivotal role during the 2007--2008 financial crisis.

\subsection{Data and sample}

In this empirical example, we focus on the setup in \eqref{eq:longT} in Section \ref{sec:panel-largeT} for panel data models with large $N$ and large $T$. The binary outcome $Y_{it}$ is a risk dummy based on the loan charge-off rate for a specific loan type of bank $i$ in quarter $t$. $Y_{it} = 1$ if the charge-off rate is greater than a level $c$. We present results for $c = 0$ in the main text and relegate robustness checks for alternative $c$ values to the Appendix. The qualitative findings are consistent across different levels of $c$. This exercise aligns with policy analysis practices, where policymakers often compare current charge-off rates to historical averages to monitor bank risk: see, for example, the Federal Reserve Board's Financial Stability Report.\footnote{\href{https://www.federalreserve.gov/publications/financial-stability-report.htm}{https://www.federalreserve.gov/publications/financial-stability-report.htm}.}  For instance, the historical average charge-off rate for Residential Real Estate (RRE) loans is around 0.1\%, with the average plus two standard deviations around 0.2\%. Robustness checks include these levels.

The extreme regressor $X_{it}$ captures decreases in local housing prices. To convert the extreme values to the right tail, we define $X_{it}$ as the deflation rate of the local housing price in the previous quarter.\footnote{We define the tail region as observations above the 90th percentile of $X_{it}$, as in Experiment 2. Given that the 90th percentiles are positive in all our samples, $\log X_{it}$ is well-defined in the tail.}
The additional covariate $Z_i$ is given by the average quarterly change in the local unemployment rate, accounting for local economic conditions.

Our data are obtained from the following sources. Bank balance sheet data at a quarterly frequency, such as loan charge-off rates, are constructed based on the Call Reports from the Federal Reserve Bank of Chicago.\footnote{Please refer to Appendix D in \citet{liu2023forecasting} for details on constructing loan charge-off rates from the raw data.} The local market is defined at the county level, and the local market for each bank is determined based on the annual Summary of Deposits from the Federal Deposit Insurance Corporation.\footnote{We calculate the deposits received by each bank from every county and link the bank to the county from which it received the largest amount of deposits. We also assess the robustness of our analysis by constructing weighted averages of the covariates, using deposit proportions from each county as weights. The results are very similar, which can be attributed to the concentration of deposits across counties and the similarity in covariate values among neighboring counties. } Housing price indices at a quarterly frequency (all transactions, not seasonally adjusted) are sourced from the Federal Housing Finance Agency, and the 3-digit zip code data are converted to the county level using the HUD USPS ZIP Code Crosswalk from the Department of Housing and Urban Development.\footnote{We use the 2010Q1 Crosswalk, the earliest available version that is close to our empirical time frame. We also experimented with the county-level Zillow Home Value Index, but it unfortunately has more missing data within our study period, though the results are qualitatively similar.} The county-level not seasonally adjusted unemployment rates are obtained from the Bureau of Labor Statistics website, and we aggregate the monthly data to a quarterly frequency by time averaging. Finally, we standardize the county-level housing price deflations and unemployment rate changes using the means and standard deviations calculated from their corresponding aggregate time series.

Our baseline sample focuses on the RRE charge-off rates. The estimation sample spans from 1999Q4 to 2009Q3, comprising $N=8538$ small banks across 40 quarters.\footnote{We exclude banks with: 1) average non-missing domestic total assets exceeding \$1 billion, or 2) missing target charge-off rates for all periods in the sample. These criteria result in the removal of 583 and 162 banks, respectively, for the baseline RRE sample.} There are $N_{e}^{\dagger}=2642$ banks in the tail for more than one period and contributing to the likelihood, that is, $X_{it}$ is above the 90th percentile of the estimation sample and $Y_{it}$ switches values across time for these tail observations. We perform a pseudo out-of-sample forecast for the period of 2009Q4. There are $N_{f}^{\dagger}=2098$ banks that satisfy the conditions for $N_{e}^{\dagger}$ and additionally have $X_{i,T+1}$ fall in the tail during the forecasting period.\footnote{To account for banks' endogenous exit choices, one could extend to a panel Tobit model as in \citet{liu2023forecasting}, which is left for future research.}

\begin{figure}[t]
\caption{Log-log plot - banking application, baseline sample}
\label{fig:app-loglog}
\begin{center}
    \vspace{-1em}
\includegraphics[width=.8\textwidth]{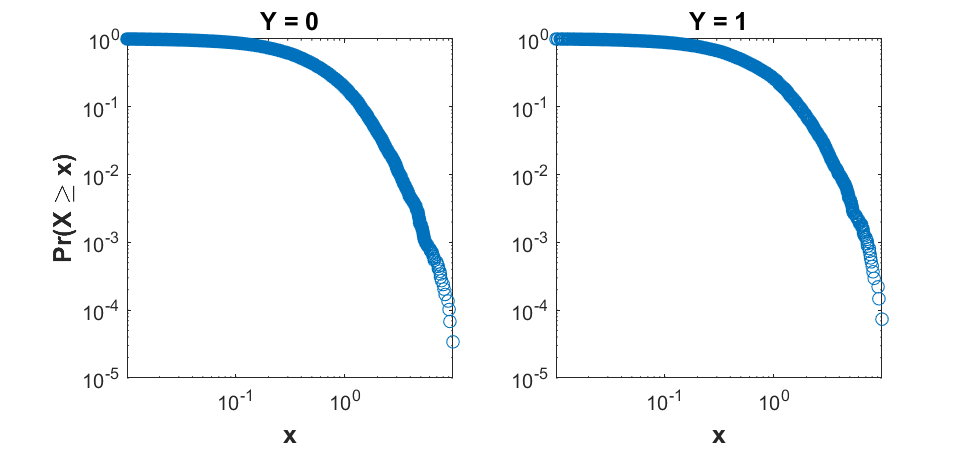}
\vspace{-2em}
\end{center}
\end{figure}

For robustness check, we also consider (non-farm) non-residential commercial real estate (CRE) charge-off rates, various time dimensions in the estimation samples ranging from 32 to 48 quarters, and different forecasting periods $T+1 =$ 2009Q3 and 2009Q4. The sample statistics of these samples are reported in Table \ref{tab:app-stat} in the Appendix. The main results remain consistent across different samples. 

Based on the sample skewness and kurtosis in Table \ref{tab:app-stat} for all samples, as well as the log-log plot in Figure \ref{fig:app-loglog} for the baseline sample,\footnote{Fitted lines, such as those in Figure \ref{fig:sim-exp1-loglog}, are not plotted here, because these lines could be unit-specific due to observed heterogeneity $Z_i$ and unobserved heterogeneity $\{\lambda_i,\C_i\}$.} we see that $X_{it}|Y_{it}=y$ indeed exhibits heavy right tails, so our tail estimator would be more appealing. Furthermore, it is worth noting that in finite samples, the probability of $Y_{it}=1$ for tail $x$, $\pi(x)$, is not necessarily very close to 1. This is evident from both the empirical data (see the log-log plot in Figure \ref{fig:app-loglog} and the histogram in Figure \ref{fig:app-hist} in the Appendix) and the estimated model (see the predictive probability in Figure \ref{fig:app-prob}).

\subsection{Results}

\begin{table}[t]
\caption{Forecast evaluation - banking application, baseline samples}
\label{tab:app-lps}
\begin{center}
    \vspace{-1em}
\scalebox{1}{
\begin{tabular}{lrr} \hline \hline
& RRE\phantom{***} & CRE\phantom{***} \\ \hline
\it Tail & \it -1433.72\phantom{***} & \it -1134.48\phantom{***} \\ \hline
Logit, tail $X$ & -18.66 *** & -23.49 *** \\
Logit, all $X$ & -160.25 *** & -54.26 *** \\
Local Logit & -429.20 *** & -516.43 *** \\
\hline
\end{tabular} 
} 
\end{center}
{\footnotesize {\em Notes:} Forecasting period = 2009Q4, $T=40$. For the tail estimator, the table reports the exact values of $\text{LPS}\cdot N_{f}^{\dagger}$. For other estimators, the table reports their differences from the tail estimator. The tests compare other estimators with the tail estimator, with significance levels indicated by *: 10\%, **: 5\%, and ***: 1\%.}\setlength{\baselineskip}{4mm}
\vspace{-.5em}
\end{table}

Table \ref{tab:app-lps} compares forecasting performance across estimators. The estimators are similar to those in the Monte Carlo Experiment 2: see Sections \ref{sec:alt-est} and \ref{sec:sim-exp2} for more details. The proposed tail estimator is the overall best. ``Logit, tail $X$'' ranks second, yet still significantly worse than the tail estimator, indicating the importance of carefully addressing the heavy tail behavior. ``Logit, all $X$'' ranks third, indicating the presence of large misspecification and possible distinct pattern in the tail compared to the middle range. Local Logit yields the least accurate forecasts, suggesting that while being flexible, the nonparametric approach could be too noisy given the limited data available in the tail. 

Table \ref{tab:app-lps-ape-all} in the Appendix provides further details on the parameter and APE estimates based on the tail estimator. First, the coefficient on $\log X_{it}$ is always significant, with values around 0.95--1.15 for the RRE samples and around 1.2--1.4 for the CRE ones. The negative of this coefficient can be roughly viewed as the homogeneous part of the extreme elasticity, and the estimated values suggest the potential presence of heavy tails. Second, the coefficient on $Z_i\log X_{it}$ is mostly positive, being larger and significant for the RRE samples, while smaller and insignificant for the CRE ones. This difference aligns with intuition: larger increases in unemployment could directly amplify the impact of housing price drops on residential loan risk, whereas changes in unemployment may not directly affect non-residential commercial loan risk. Third, the estimated APEs are around 0.15 for the RRE samples and 0.13 for the CRE ones, which could be interpreted as that in the tail, a 1\% decrease in housing prices in the previous quarter corresponds to approximately a 0.15 (0.13) increase in the probability of high risk for RRE (CRE) loans. Finally, the unobserved heterogeneity exhibits a larger dispersion than the observed heterogeneity, as the sample variances of the estimated $\tilde A_i$ are around 1 while the sample variances of $\hat\theta^*_{Z\log X}Z_i\log X_{it}$ range from 0.1--0.3.

Tables \ref{tab:app-lps-ape-all} and \ref{tab:app-lps-ape-c} in the Appendix also show that our results are robust with respect to the level $c=0.01, 0.02, 0.05, 0.1, 0.2, 0.5, 1$,\footnote{As the level $c$ increases, the coefficient on $Z_i\log X_{it}$ becomes less significant, particularly becoming insignificant when $c=1$. This is because a larger $c$ reduces the occurrence of $Y_{it}=1$, leading to fewer banks in the tail with $Y_{it}$ switching values across time. Then, the effective sample size for estimation $N_{e}^{\dagger}$ substantially decreases, resulting in noisier estimates.} to both CRE and RRE loans, to the estimation sample's time dimension $T=32,36,40,44,48$, and to the forecasting period being 2009Q3 and 2009Q4. 

\begin{figure}[t]
\caption{Predictive probability of high risk - baseline sample}
\label{fig:app-prob}
\begin{center}
    \vspace{-1.5em}
\includegraphics[width=.8\textwidth]{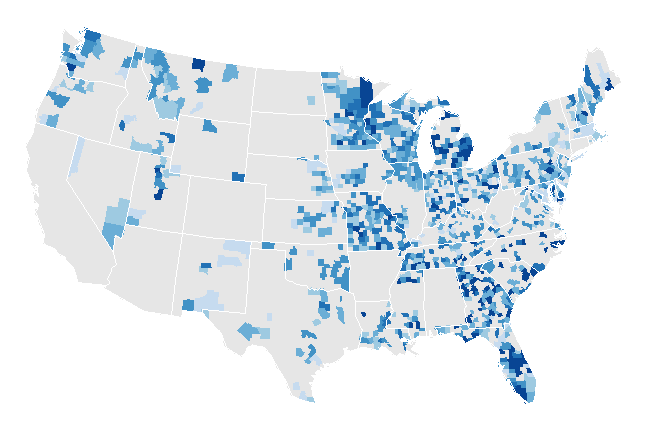}

\vspace{-1em}
\includegraphics[width=1\textwidth, trim=0 2em 0 2em, clip]{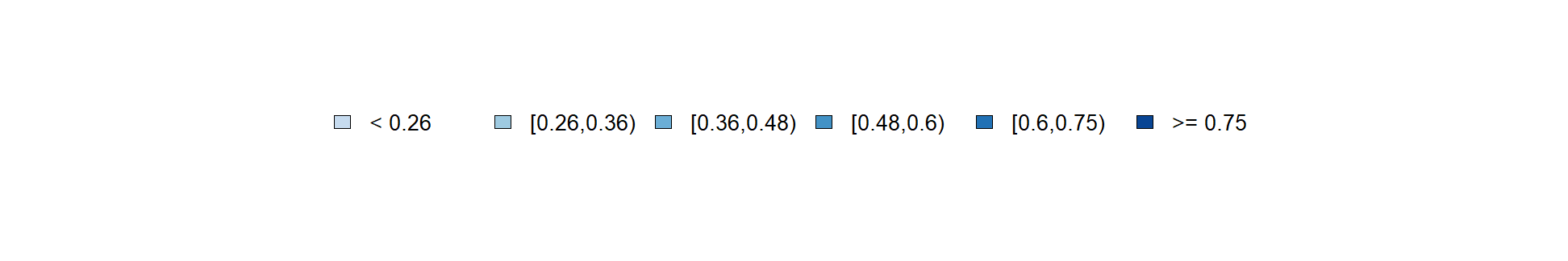}
\vspace{-5em}
\end{center}
\end{figure}

Figure \ref{fig:app-prob} plots the average predictive probability of high risk across counties, with darker shades indicating higher risk levels in those counties. Notably, Florida and the Great Lakes regions appear as areas with elevated risk. This heterogeneity in risk may arise from observed heterogeneity, such as variations in local housing prices $X_{i,T+1}$ and local unemployment rates $Z_i$, as well as unobserved heterogeneity $\{\lambda_i,\C_i\}$, which is captured by $\tilde A_i$, where larger values of $\tilde A_i$ are associated with higher risk: see \eqref{eq:longT}. 

\begin{table}[t]
\caption{Heterogeneity and bank characteristics - baseline sample}
\label{tab:app-het}
\begin{center}
    \vspace{-1em}
\scalebox{1}{
\begin{tabular}{lrrrrr} 
\hline\hline 
& \multicolumn{2}{c}{Initial} & & \multicolumn{2}{c}{Average} \\ 
\cline{2-3} \cline{5-6}
 & Coef. & s.e. & & Coef. & s.e. \\ \hline
Log assets      & 0.20{***}         & (0.03) & & 0.19{***}         & (0.03) \\ 
Loan frac.      & 0.28{***}         & (0.10) & & 0.68{***}         & (0.11) \\ 
Capital-assets  & -0.65{\phantom{***}} & (0.56) & & -2.05{***}        & (0.60) \\ 
Loan-assets     & 0.83{***}         & (0.18) & & 1.77{***}         & (0.19) \\ 
ALLL-loan       & 2.61{\phantom{***}}  & (3.18) & & 15.01{***}        & (3.45) \\ 
Diversification & 1.00{**}{\phantom{*}}         & (0.44) & & 2.79{***}         & (0.50) \\ 
Ret.\ on assets & -30.53{***}       & (9.62) & & -36.51{***}       & (11.66) \\ 
OCA             & -19.34{**}{\phantom{*}}        & (9.05) & & -60.19{***}       & (12.30) \\ 
Intercept       & -3.68{***}        & (0.37) & & -4.25{***}        & (0.38) \\  \hline
\end{tabular} 
} 
\end{center}
{\footnotesize {\em Notes:} 
Significance levels are indicated by *: 10\%, **: 5\%, and ***: 1\%.}\setlength{\baselineskip}{4mm}
\vspace{-.5em}
\end{table}

Accordingly, in Table \ref{tab:app-het}, we regress the estimated $\tilde A_i$ on bank characteristics. The ``Initial'' column uses bank characteristics in the initial period 1999Q4, which are exogenous to subsequent dynamics. The ``Average'' column uses time-averaged bank characteristics over the estimation period, incorporating more recent information. The findings across both columns are in general consistent. We see that larger banks (measured by log assets), those that specialize in RRE loans (measured by RRE loans to total loans ratio), in lending activities (measured by loan to assets ratio), with more diversified earnings (measured by the share of non-interest income to total income), and with higher operational efficiency (measured by the negative of overhead costs to assets ratio), tend to have a higher $\tilde A_i$ and thus higher risk, as these characteristics could help increase banks' capacity to assume riskier RRE loans. On the other hand, higher profitability (measured by return on assets) corresponds to a lower $\tilde A_i$, potentially due to reverse causality where reduced risk boosts profitability. The credit quality (measured by ALLL to total loans ratio) and the capital-asset ratio do not have significant effects in the regression using initial bank characteristics.

\section{Conclusion}\label{sec:conclusion}
This paper proposes a novel semiparametric method based on Bayes' theorem and RV functions. It models heavy tail behavior through a Pareto approximation, while making no parametric assumptions on the relationship between the outcome and covariates outside of the tail region. 

The proposed method is particularly useful in panel data models, accounting for unobserved unit-specific heterogeneity. We show that under regularity conditions, our objective function asymptotically aligns with a panel Logit regression on tail observations using $\log X_{it}$ as a regressor. Then, various established econometric techniques could be applicable, which could be convenient for empirical research. Specifically, in panels with large $N$ and small $T$, the unobserved unit-specific tail thickness and RV functions can be canceled out via the conditional MLE; in panels with large $N$ and large $T$, bias correction methods \citep{fernandez2018fixed,stammann2016estimating} can be employed to estimate unit-specific parameters and predict unit-specific future outcomes. Furthermore, we also extend our method to dynamic panel data models that incorporate lagged outcomes.

The potential applications of our method could encompass both microeconomics and macroeconomics studies, particularly valuable in light of recent extreme events. For example, one may be interested in analyzing the effect of large sovereign debts on country default risks, or assessing the impact of extreme weather on productivity at the regional, firm, and individual levels.

\ifsubmission
\section*{Acknowledgments}
We thank Frank Diebold, Christian Haefke, Bo Honor\'{e}, Roger Klein, Ulrich M\"{u}ller, Frank Schorfheide, and seminar participants at the FRB Chicago, Monash University, University of Melbourne, Reserve Bank of Australia, University of Sydney, FRB Philadelphia, UC Irvine, UCSD, and Penn State, as well as conference participants at the Applied Time Series Econometrics Workshop at the FRB St.\ Louis, Dolomiti Macro Meetings, AiE Conference and Festschrift in Honor of Joon Y.\ Park, Midwest Econometrics Group Annual Meeting, Greater New York Econometrics Colloquium, Women in Macroeconomics Workshop II, NBER Summer Institute, and NBER-NSF Time Series Conference for helpful comments and discussions. 
\else\fi

\begin{singlespace}
    \bibliographystyle{econometrica}
\bibliography{mybib}
\end{singlespace}

\end{document}


\title{\vspace{-2.5em}\Large Online Appendix:\\Binary Outcome Models with Extreme Covariates:\\
Estimation and Prediction}
\author{Laura Liu  \and Yulong Wang }

\maketitle

\appendix
\renewcommand{\theequation}{A.\arabic{equation}}
\setcounter{equation}{0} 
\renewcommand{\thefigure}{A.\arabic{figure}}
\setcounter{figure}{0} 
\renewcommand{\thetable}{A.\arabic{table}}
\setcounter{table}{0} 
\setcounter{footnote}{0} 
\setcounter{page}{1}
\renewcommand{\thepage}{A-\arabic{page}}

This online appendix is organized as follows.  In Appendix \ref{sec:proofs}, we provide proofs for all propositions and theorems in the main paper, and establish the asymptotic properties of the proposed tail estimator in both cross-sectional and panel data setups. In Appendix \ref{sec:app-tab-fig}, we present additional figures and tables that supplement the simulation and empirical results in the main text.

\section{Proofs}\label{sec:proofs}
\subsection{Proofs for Section \ref{sec:c-s}}

We first state some useful properties of RV functions in the following lemma.
\begin{lemma}[Properties of RV functions]\label{lem:rv}
Let $f\in RV_{-\alpha }$ and $g\in RV_{-\beta }$ with $\alpha,\beta>0$. We have the following:
\begin{enumerate}[label=(\alph*)]
\item $f(x)g(x)\in RV_{-\alpha -\beta }.$
\item If $f'(x)$ is non-increasing, $\frac{xf'(x)}{f(x)}\rightarrow-\alpha$ as $x\rightarrow\infty$, and hence $f'(x)\in RV_{-\alpha -1}.$
\item If $\alpha>1$, $\int_{x}^{\infty }f(s)ds\in RV_{-\alpha +1}.$
\end{enumerate}
\end{lemma}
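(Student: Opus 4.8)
The plan is to dispatch (a) directly from the definition of regular variation, and to obtain (c) and (b) as, respectively, Karamata's theorem and the monotone density theorem, both proved from the defining limit $h(tx)/h(x)\to t^{\rho}$ for $h\in RV_{\rho}$ together with Potter's bounds (the uniform convergence theorem for regularly varying functions). For (a), write $\frac{f(tx)g(tx)}{f(x)g(x)}=\frac{f(tx)}{f(x)}\cdot\frac{g(tx)}{g(x)}$; for each fixed $t>0$ the two factors tend to $t^{-\alpha}$ and $t^{-\beta}$, so the product tends to $t^{-(\alpha+\beta)}$, which is the claim.

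For (c), note first that $\alpha>1$ together with Potter's bounds ($f(s)\le Cs^{-\alpha+\varepsilon}$ for $s$ large, with $\varepsilon$ small enough that $\alpha-\varepsilon>1$) makes $\int_x^{\infty}f$ finite. For fixed $t>0$, the substitution $s=ux$ gives
\[
\frac{\int_{tx}^{\infty}f(s)\,ds}{\int_{x}^{\infty}f(s)\,ds}=\frac{\int_{t}^{\infty}\frac{f(ux)}{f(x)}\,du}{\int_{1}^{\infty}\frac{f(ux)}{f(x)}\,du}.
\]
The integrand converges pointwise to $u^{-\alpha}$ and, by Potter's bounds, is dominated uniformly in large $x$ by a function integrable over $\{u\ge t\}$; dominated convergence then yields the limit $\frac{\int_t^{\infty}u^{-\alpha}\,du}{\int_1^{\infty}u^{-\alpha}\,du}=t^{-(\alpha-1)}$, i.e.\ $\int_x^{\infty}f\in RV_{-\alpha+1}$.

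For (b), since $\alpha>0$ forces $f(x)\to0$, we have $f(x)=\int_x^{\infty}\bigl(-f'(s)\bigr)\,ds$, and $-f'$ is eventually monotone by hypothesis. For $0<a<b$,
\[
f(ax)-f(bx)=x\int_a^{b}\bigl(-f'(ux)\bigr)\,du,
\]
which by monotonicity of $-f'$ lies between $(b-a)\,x\,(-f'(ax))$ and $(b-a)\,x\,(-f'(bx))$. Dividing by $f(x)$ and using $\frac{f(ax)-f(bx)}{f(x)}\to a^{-\alpha}-b^{-\alpha}$, then taking $a=1$ with $b\downarrow1$ and $b=1$ with $a\uparrow1$ (and $\lim_{b\to1}\frac{1-b^{-\alpha}}{b-1}=\lim_{a\to1}\frac{a^{-\alpha}-1}{1-a}=\alpha$), squeezes $\frac{-xf'(x)}{f(x)}\to\alpha$, i.e.\ $\frac{xf'(x)}{f(x)}\to-\alpha$. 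Finally, $f'(x)=\frac{xf'(x)}{f(x)}\cdot\frac{f(x)}{x}$: the first factor tends to the nonzero constant $-\alpha$ (hence is slowly varying up to sign) while $\frac{f(x)}{x}\in RV_{-\alpha-1}$ by part (a) with $g(x)=1/x\in RV_{-1}$, so $f'(x)\in RV_{-\alpha-1}$.

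The main obstacle should be part (b): beyond the care needed with the sign of $f'$ and the precise interval on which it is monotone, the delicate point is extracting the exact constant $-\alpha$ — rather than merely a finite limit — from the two-sided squeeze via the L'Hôpital limits above; a secondary technical point, shared with (c), is the uniform (Potter-type) bound that justifies passing the limit under the integral.
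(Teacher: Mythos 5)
Your proposal is correct, but it is worth noting that the paper does not actually prove this lemma: its ``proof'' is a one-line citation to Section B of de Haan and Ferreira (2006). What you have done is supply self-contained proofs of exactly the results being cited --- part (a) by factoring the ratio $f(tx)g(tx)/\bigl(f(x)g(x)\bigr)$, part (c) as Karamata's theorem for convergent tail integrals (substitution $s=ux$, pointwise limit $u^{-\alpha}$, domination via Potter's bounds, dominated convergence), and part (b) as the monotone density theorem via the two-sided squeeze on $\bigl(f(ax)-f(bx)\bigr)/f(x)$ with $a,b\to1$. These are the standard textbook arguments underlying the reference, and each step checks out, including the final deduction $f'\in RV_{-\alpha-1}$ from $xf'(x)/f(x)\to-\alpha$ combined with $f(x)/x\in RV_{-\alpha-1}$. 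Two small caveats. First, your proof is not fully from first principles in that Potter's bounds (equivalently, the uniform convergence theorem) are themselves taken as given; that is a reasonable black box but should be acknowledged as such. Second, the hypothesis of part (b) as literally stated (``$f'$ non-increasing'') is incompatible with a positive $f\in RV_{-\alpha}$, $\alpha>0$ (a globally non-increasing derivative of a positive function tending to $0$ forces $f'\ge0$, hence $f$ non-decreasing, hence $f\le0$); the intended hypothesis is eventual monotonicity of $f'$, which is how the lemma is applied in the paper (to $f=1-F$ with $-f'$ equal to a monotone density) and which is exactly what your sign-agnostic ``lies between'' squeeze uses. So your reading is the right one, and your identification of part (b) as the delicate step --- extracting the exact constant $-\alpha$ from the two one-sided limits --- is accurate.
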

\begin{proof}
See Section B in \citet{de2006extreme}.
\end{proof}

\bigskip
\noindent\begin{proof}[Proof of Proposition \ref{prop:elas}]

\noindent Recall that
\begin{align*}
\pi(x) &=\frac{f_{X|Y}\left( x|1\right) \P\left(
Y=1\right) }{f_{X|Y}\left( x|1\right) \P\left( Y=1\right)
+f_{X|Y}\left( x|0\right) \P\left( Y=0\right) } =\frac{1}{1+\frac{f_{X|Y}\left( x|0\right) \P\left( Y=0\right) }{
f_{X|Y}\left( x|1\right) \P\left( Y=1\right) }}.
\end{align*}
Let $R=\frac{\P\left( Y=0\right)}{\P\left( Y=1\right)}$. Then, $0<R<\infty$ according to the non-degeneracy condition (c). Define the term in the denominator
\begin{align*}
\pi ^*(x) &=\frac{1}{\pi(x) }-1 =\frac{\P\left( Y=0\right) f_{X|Y}\left( x|0\right) }{\P\left( Y=1\right) f_{X|Y}\left( x|1\right) }=R\cdot\frac{f_{X|Y}\left( x|0\right) }{f_{X|Y}\left( x|1\right) }.
\end{align*}

Given the RV condition (a) and the non-increasing condition (b), we have $f_{X|Y}\left( x|y\right) \in RV_{-\alpha^\py-1}$ and \[\frac{xf'_{X|Y}\left( x|y\right)}{f_{X|Y}\left( x|y\right)}\rightarrow-\alpha^\py-1,\] as $x\rightarrow\infty$, by Lemma \ref{lem:rv}(b). Then the elasticity of $\pi^*(x)$ is given by
\begin{align}
\delta^*(x)&=\frac{\partial \pi ^*(x) }{\partial x}\frac{x}{\pi
^*(x) } 
=\frac{xf_{X|Y}'\left( x|0\right) }{f_{X|Y}\left( x|0\right) }-
\frac{xf_{X|Y}'\left( x|1\right) }{f_{X|Y}\left( x|1\right) } \notag
\\
&\rightarrow\left(-\alpha ^\po-1\right)-\left(-\alpha ^\pl-1\right)=\alpha^\pl-\alpha^\po. \label{eq:limit ratio}
\end{align}

As in \eqref{eq:elas-decomp}, the extreme elasticity can be decomposed as
\begin{align*}
\delta(x) &=\delta_{\pi}(x)+\delta_{1-\pi}(x).
\end{align*}
And each component can be calculated via the chain rule
\begin{align*}
\delta_{\pi}(x) &=-\delta^*(x)\frac{ \pi^*(x) }{1+\pi^*(x)}=-\delta^*(x)\left(1-\pi(x)\right), \\
\delta_{1-\pi}(x)&=\delta^*(x)\frac{ 1 }{1+\pi^*(x)}=\delta^*(x)\pi(x),
\end{align*}
where $\delta^*(x)$ is given by \eqref{eq:limit ratio}. 
\begin{itemize}
\item If $\alpha^\po>\alpha^\pl$, $\pi(x)\rightarrow 1$ as $x\rightarrow\infty$, and hence $\delta(x) \sim\delta^*(x)\rightarrow\alpha^\pl-\alpha^\po$. 
\item If $\alpha^\po<\alpha^\pl$, $\pi(x)\rightarrow0$ as $x\rightarrow\infty$, and hence $\delta(x) \sim-\delta^*(x)\rightarrow\alpha^\po-\alpha^\pl$. 
\item If $\alpha^\po=\alpha^\pl$, $\delta^*(x)\rightarrow0$ and $\pi(x)\rightarrow \frac 1 {1+R}$, as $x\rightarrow\infty$, so we have \[\delta(x) =\delta_{\pi}(x)+\delta_{1-\pi}(x)\sim\delta^*(x)\frac {1-R} {1+R}.\] As $0<R<\infty$, it follow that $\left|\frac{1-R}{1+R}\right|<1$, and thus $\delta(x) \rightarrow 0$. 
\end{itemize}
Combining all cases, we have that, as $x\rightarrow\infty$,
\begin{align*}
\delta(x) \rightarrow-|\alpha^\pl-\alpha^\po|=-|\alpha^*|.
\end{align*}
\hfill\end{proof}

\bigskip
\noindent\begin{proof}[Proof of Proposition \ref{prop:threshold-xing}] 

\noindent For $Y=1$, 
\begin{align*}
1-F_{X|Y}\left( x|1\right) &=\P\left( X> x|Y=1\right) =\frac{\P\left( X> x,Y=1\right) }{\P\left( Y=1\right) } \\
&=\frac{\int_{x}^{\infty }\P\left( Y=1|X=s\right) f_X\left( s\right) ds}{\P\left( Y=1\right) } \\
&=\frac{\int_{x}^{\infty }F_{\varepsilon }\left( s\right) f_X\left( s\right) ds}{\P\left( Y=1\right) },
\end{align*}
where the third equality is by Bayes' rule, and the fourth equality is by the independence condition (c). 
For the numerator, we have that $F_{\varepsilon }(x)\rightarrow1$ as $x\rightarrow\infty$, and thus $\int_{x}^{\infty }F_{\varepsilon }\left( s\right) f_X\left( s\right) ds\rightarrow\int_{x}^{\infty } f_X\left( s\right) ds=1-F_X(x)\in RV_{-\alpha_X}$ by the RV condition (a). For the denominator, the RV condition (a) implies that $X$ and $\varepsilon$ have overlapping support, at least in the right tail, so $0<\P\left( Y=1\right)<1$. Then, 
\begin{align}
1-F_{X|Y}\left( x|1\right)\in RV_{-\alpha_X}\text{ and }\alpha^\pl=\alpha_X.\label{eq:xing-alpha1}
\end{align}

For $Y=0$, a similar argument yields that
\begin{align*}
1-F_{X|Y}\left( x|0\right) &=\P\left( X> x|Y=0\right) =\frac{\P\left( X> x,Y=0\right) }{\P\left( Y=0\right) } \\
&=\frac{\int_{x }^{\infty }\P\left( Y=0|X=s\right) f_X\left( s\right) ds}{\P\left( Y=0\right) } \\
&=\frac{\int_{x }^{\infty }\left( 1-F_{\varepsilon }\left( s\right) \right) f_X\left( s\right) ds}{\P\left( Y=0\right) }.
\end{align*}
For the numerator, by the RV condition (a), we have that $1-F_{\varepsilon }(x)\in RV_{-\alpha_{\varepsilon}}$, and $f_X(x)\in RV_{-\alpha_X-1}$ based on Lemma \ref{lem:rv}(b), and thus $\int_{x}^{\infty }\left( 1-F_{\varepsilon }\left( s\right) \right) f_X\left( s\right) ds\in RV_{-\alpha_X-\alpha_{\varepsilon}}$ by Lemma \ref{lem:rv}(a,c). For the denominator, again, the overlapping support implies that $0<\P\left( Y=0\right)<1$. Then, 
\begin{align}
1-F_{X|Y}\left( x|0\right)\in RV_{-\alpha_X-\alpha_{\varepsilon}}\text{ and }\alpha^\po=\alpha_X+\alpha_{\varepsilon}.\label{eq:xing-alpha0}
\end{align}

Combining \eqref{eq:xing-alpha1} and \eqref{eq:xing-alpha0}, we further obtain that $\alpha_{\varepsilon}=\alpha^\po-\alpha^\pl$.
\end{proof}

\bigskip
\begin{proposition}[Cross-sec.\ data: existence of tail average of partial effects]\label{prop:avg-partial-effects}
Under conditions (a) and (b) of Proposition \ref{prop:elas}, assume further that $\pi(x)$ and $f_{X|Y}\left( x|y\right) $ are continuously differentiable for $x\ge \x$, for some $\x>0$. Then, the tail average of the partial effects $\E\left[\pi'(X)|X\ge\x \right]$ exists, and converges to 0 as $\x\rightarrow\infty$.
\end{proposition}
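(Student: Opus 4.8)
\bigskip
\noindent\textbf{Proof proposal.}
The plan is to reduce the statement to a uniform $O(1/x)$ bound on the partial effect $\pi'(x)$ in the right tail; once that is available, both the existence of $\E[\pi'(X)\mid X\ge\x]$ and its convergence to $0$ follow at once. Write
\[
\E\left[\pi'(X)\mid X\ge\x\right]=\frac{\int_{\x}^{\infty}\pi'(s)\,f_X(s)\,ds}{1-F_X(\x)},\qquad f_X(x)=f_{X|Y}(x|0)\,\P(Y=0)+f_{X|Y}(x|1)\,\P(Y=1) .
\]
The conditioning event has positive probability for all large $\x$, since $1-F_{X|Y}(\cdot|1)\in RV_{-\alpha^\pl}$ is eventually positive and $0<\P(Y=1)<1$. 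Because $f_{X|Y}(\cdot|y)\in RV_{-\alpha^\py-1}$ is also eventually positive, there is $\underline{x}_0\ge\x$ such that $\pi$ and each $f_{X|Y}(\cdot|y)$ are $C^1$ and each $f_{X|Y}(\cdot|y)$ is strictly positive on $[\underline{x}_0,\infty)$; since the assertion is a limit in the threshold, it suffices to consider thresholds in $[\underline{x}_0,\infty)$.

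The main step is to control $\pi'$ on this interval. Reusing the computation in the proof of Proposition \ref{prop:elas}, write $\pi(x)=1/(1+\pi^*(x))$ with $\pi^*(x)=R\,f_{X|Y}(x|0)/f_{X|Y}(x|1)$; differentiating via the chain rule gives
\[
\pi'(x)=-\,\delta^*(x)\,\pi(x)\bigl(1-\pi(x)\bigr)/x ,
\]
where $\delta^*(x)=x(\pi^*)'(x)/\pi^*(x)$ is exactly the elasticity shown there to satisfy $\delta^*(x)\to\alpha^\pl-\alpha^\po$ as $x\to\infty$. On $[\underline{x}_0,\infty)$ the function $\delta^*$ is continuous (a ratio of continuous functions with non-vanishing denominator), and, having a finite limit at $+\infty$, it is bounded; put $C:=\sup_{x\ge\underline{x}_0}|\delta^*(x)|<\infty$. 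Since $\pi(x)\bigl(1-\pi(x)\bigr)\le\tfrac14$, this yields $|\pi'(x)|\le C/(4x)$ for all $x\ge\underline{x}_0$. Note that no case distinction on the sign of $\alpha^\pl-\alpha^\po$ is needed here, since the bound holds irrespective of which conditional tail dominates.

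Finally I would assemble the pieces. For any threshold $\x\ge\underline{x}_0$, the integrand satisfies $|\pi'(X)|\le C/(4\x)$ almost surely on $\{X\ge\x\}$, so $\pi'(X)$ is integrable on this event and $\E[\pi'(X)\mid X\ge\x]$ exists; moreover
\[
\bigl|\E\left[\pi'(X)\mid X\ge\x\right]\bigr|\le\E\bigl[\,|\pi'(X)|\ \big|\ X\ge\x\,\bigr]\le\frac{C}{4\x}\ \xrightarrow[\x\to\infty]{}\ 0 ,
\]
which is the claim. The only delicate point — really bookkeeping rather than genuine difficulty — is the middle paragraph: making sure the differentiations are legitimate on one common tail interval (positivity of the $f_{X|Y}(\cdot|y)$, which comes from regular variation) and extracting the uniform bound on $\delta^*$ from its continuity together with the finite limit supplied by Proposition \ref{prop:elas}.
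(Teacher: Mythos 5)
Your proposal is correct, but it proves the result by a genuinely different route than the paper. The paper works at the level of the integral: it integrates $\int_{\x}^\infty \pi'(s)f(s)\,ds$ by parts, bounds the result by $2f(\x)$ using $0\le\pi\le1$ and the monotonicity of $f'$ from condition (b), and then controls the ratio $2f(\x)/(1-F(\x))$ via a weighted-average inequality and the RV approximation of the hazard rate, arriving at $\frac{2}{\x}\max\{\alpha^\pl,\alpha^\po\}(1+o(1))$. You instead work pointwise: the identity $\pi'(x)=-\delta^*(x)\,\pi(x)(1-\pi(x))/x$ is correct (since $\pi=1/(1+\pi^*)$ and $\pi(1-\pi)=\pi^*/(1+\pi^*)^2$), and combining $\pi(1-\pi)\le\tfrac14$ with the boundedness of $\delta^*$ on a tail interval --- which follows from its continuity and the finite limit $\alpha^\pl-\alpha^\po$ already established in Proposition \ref{prop:elas} --- gives the uniform bound $|\pi'(x)|\le C/(4x)$. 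This is more elementary (no integration by parts, no hazard-rate asymptotics), delivers a strictly stronger pointwise conclusion about $\pi'$ itself, and asymptotically even yields the sharper constant $|\alpha^\pl-\alpha^\po|/4$ in place of $2\max\{\alpha^\pl,\alpha^\po\}$; condition (b) enters only indirectly, through the limit of $\delta^*$. The one piece of bookkeeping you wave at but should state is existence of $\E[\pi'(X)\mid X\ge\x]$ when the original threshold $\x$ lies below your $\underline{x}_0$: this is immediate because $\pi'$ is continuous, hence bounded, on the compact interval $[\x,\underline{x}_0]$, so the integral over that piece is finite and only the tail needs your uniform bound. With that sentence added, the argument is complete.
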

\begin{proof}
Let the unconditional pdf be $f(x)=f_{X|Y}\left(x|1\right) \P\left(Y=1\right)
+f_{X|Y}\left( x|0\right) \P\left( Y=0\right)$, and $F(x)$ is the corresponding unconditional cdf. Then, the tail average of the partial effects is given by
\[\E\left[\pi'(X)|X\ge\x \right]=\frac{\int_{\x}^\infty\pi'(x) f(x)dx}{1-F(\x)}.\]  

For the numerator,
\begin{align*}
\left|\int_{\x}^\infty\pi'(x) f(x)dx\right|&=
\left|\left.\pi(x)f(x)\right|_{\x}^\infty-\int_{\x}^\infty\pi(x)f'(x)dx\right|\\
&\le\pi(\x)f(\x)+\int_{\x}^\infty\pi(x)\left|f'(x)\right|dx\\
&\le f(\x)+\int_{\x}^\infty\left|f'(x)\right|dx\\
&= 2f(\x).
\end{align*}
The first line is by integration by parts, given that $\pi(x)$ and $f_{X|Y}\left( x|y\right) $ are continuously differentiable for $x\ge \x$. The second and third lines are by the fact that $0\le\pi(x)\le 1$ and that $f(x)\rightarrow0$ as $x\rightarrow\infty$. The last line follows from that $f'(x)$ is non-increasing in $x\ge\x$.

Combining with the denominator, we have that
\begin{align*}
\left|\E\left[\pi'(X)|X\ge\x \right]\right|&\le\frac{2f(\x)}{1-F(\x)}\\
& = 2\frac{f_{X|Y}\left(\x|1\right) \P\left(Y=1\right)+f_{X|Y}\left(\x|0\right) \P\left(Y=0\right)}{\left(1-F_{X|Y}\left(\x|1\right) \right)\P\left(Y=1\right)+\left(1-F_{X|Y}\left(\x|0\right) \right) \P\left(Y=0\right)}\\
&\le 2\max\left\{\frac{f_{X|Y}\left(\x|1\right) }{1-F_{X|Y}\left(\x|1\right)},\frac{f_{X|Y}\left(\x|0\right) }{1-F_{X|Y}\left(\x|0\right)}\right\}\\
& = \frac{2}{\x}\max\{\alpha^\pl,\alpha^\po\}(1+o(1)),
\end{align*}
as $\x\rightarrow\infty$. The second line plugs in the expressions of $f(x)$ and $F(x)$. The third line is by the properties of weighted averages. The last line is by the RV approximation of the pdf in \eqref{eq:pdf}. Therefore, the tail average of the partial effects exists, and converges to 0 as $\x\rightarrow\infty$.
\end{proof}

\bigskip
\noindent\begin{proof}[Proof of Theorem \ref{thm:c-s}]

\noindent This proof builds on \citet{WangTsai2009}. Recall that the log likelihood function is given by
\begin{align*}
\ell_N^\py\left( \theta^\py \right) &=\sumi \left( \log Z_i'\theta^\py  - Z_i'\theta^\py \log  \frac{X_i}{\x_N^\py} \right) \1\left\{ \Xi_{N,i}^\py\right\},
\end{align*}
and its first and second derivatives, i.e., the score and Hessian, are 
\begin{align}
S_N^\py\left( \theta^\py \right)&=\frac{\partial \ell_N^\py\left( \theta^\py \right) }{\partial \theta^\py } =\sumi \left( \frac 1 {Z_i'\theta^\py}-\log\frac{X_i}{\x_N^\py} \right) Z_i\1\left\{ \Xi_{N,i}^\py\right\}, \label{eq:c-s-score}\\
H_N^\py\left( \theta^\py \right)&=\frac{\partial^2 \ell_N^\py\left( \theta^\py \right) }{\partial \theta^\py \partial \theta^{\py\prime}} =-\sumi \frac{Z_iZ_i'}{\left(Z_i'\theta^\py\right)^2} \1\left\{ \Xi_{N,i}^\py\right\}.\label{eq:c-s-hessian}
\end{align}

Also note that as $0<\xi_N<1$, Assumption \ref{assn:c-s-est}(b) implies that $\x^\py_N\rightarrow\infty$ as $N\rightarrow\infty$. Combined with the tail conditions in Assumption \ref{assn:c-s-tail}(a,b), this further leads to that for $x\ge\x_N^\py$, 
\begin{align}
&C^\py\left( Z_i\right)(x)^{-\alpha^\py\left( Z_i\right) }\left( 1+D^\py\left( Z_i\right)(x)^{-\beta^\py\left( Z_i\right) }+r^\py_i\left( x,Z_i\right) \right)\label{eq:c-s-approx-C}\\
&=C^\py\left( Z_i\right)(x)^{-\alpha^\py\left( Z_i\right) }\left(1+o(1)\right),\notag\\
&D^\py\left( Z_i\right)(x)^{-\beta^\py\left( Z_i\right) }+r^\py_i\left( x,Z_i\right)=D^\py\left( Z_i\right)(x)^{-\beta^\py\left( Z_i\right) }\left(1+o(1)\right), \label{eq:c-s-approx-D}
\end{align}
as $N\rightarrow\infty$, almost surely.

Note that $N^\py$ is a random variable, so we introduce \(\xi^\py_N\) in \eqref{eq:c-s-xi-def}, a non-random sequence representing the asymptotic proportion of tail observations for $Y_i=y$. Following from \eqref{eq:c-s-approx-C}, 
\begin{align}
    \P\left(\Xi^\py_{N,i}\right) &=\P(X_i>\x^\py_N|Y_i=y) \P(Y_i=y)\label{eq:c-s-xi-p}\\
        &= \E\left[C^\py(Z_i)\left(\x^\py_N\right)^{-\alpha^\py(Z_i)}\right]\left(1 + o(1)\right)\cdot\P(Y_i=y)\notag\\
    &= \xi_N^\py\left(1 + o(1)\right).\notag
    \end{align}

\bigskip
\noindent\textbf{Part 1.}
Let 
\begin{align}
S_{N,i}^\py=\frac{1}{\sqrt{\xi^\py_N}} \left(H_{N0}^\py\right)^{-1/2}\left( \frac 1 {Z_i'\theta^\py_0}-\log\frac{X_i}{\x_N^\py} \right) Z_i\1\left\{ \Xi_{N,i}^\py\right\},\label{eq:c-s-si-def}
\end{align} 
which is i.i.d.\ across $i$ by Assumption \ref{assn:c-s-model}(a). Then, according to the score defined in \eqref{eq:c-s-score}, \[\frac{1}{\sqrt{\xi^\py_N}}\left(H_{N0}^\py\right)^{-1/2}S_N^\py\left(\theta^\py_0\right)=\sumi S_{N,i}^\py,\] and we can apply the central limit theorem (CLT) to obtain its asymptotic normality.

First, for the mean, 
\[
\E\left[S_{N,i}^\py\right] =\frac{1}{\sqrt{\xi^\py_N}}\left(H_{N0}^\py\right)^{-1/2}\E\left[ \left( \frac 1 {Z_i'\theta^\py_0}-\log\frac{X_i}{\x_N^\py} \right) Z_i\1\left\{ \Xi_{N,i}^\py\right\}\right].
\]
For the first term in the expectation,
\begin{align*}
\E\left[\frac 1 {Z_i'\theta^\py_0}Z_i\1\left\{ \Xi_{N,i}^\py\right\}\right]
&=\E\left[\left.\frac 1 {Z_i'\theta^\py_0}Z_i\P\left( \Xi_{N,i}^\py\right|Z_i\right)\right]  \\
&=\E\left[\frac 1 {Z_i'\theta^\py_0}Z_i C^\py\left( Z_i\right) \left(\x_N^\py\right)^{-\alpha^\py\left( Z_i\right) }\right]\\
& \quad+ \E\left[\frac 1 {Z_i'\theta^\py_0}Z_i C^\py\left( Z_i\right) D^\py\left( Z_i\right) \left(\x_N^\py\right)^{-\alpha^\py\left( Z_i\right)-\beta^\py\left( Z_i\right) }\right]\left(1+o(1)\right),
\end{align*}
where we substitute the tail approximation in Assumption \ref{assn:c-s-tail}(a) and apply the supremum condition of the remainder term in Assumption \ref{assn:c-s-tail}(b) as in equation \eqref{eq:c-s-approx-D}. 
For the second term in the expectation,
\begin{align*}
&\E\left[\log\frac{X_i}{\x_N^\py}Z_i\1\left\{ \Xi_{N,i}^\py\right\}\right]\\
&=\E\left[\left.Z_i\int_0^{\infty}\P\left(\log\frac{X_i}{\x_N^\py}>s\right|Z_i\right)ds\right]\\
&=\E\left[\left.Z_i\int_0^{\infty}\P\left(X_i>\x_N^\py e^s\right|Z_i\right)ds\right] \\
&=\E\left[Z_i C^\py\left( Z_i\right) \left(\x_N^\py\right)^{-\alpha^\py\left( Z_i\right) }\int_0^\infty e^{-s\alpha^\py\left( Z_i\right)}ds\right]\\
&\quad+ \E\left[Z_i C^\py\left( Z_i\right) D^\py\left( Z_i\right) \left(\x_N^\py\right)^{-\alpha^\py\left( Z_i\right)-\beta^\py\left( Z_i\right) }\int_0^\infty e^{-s\left(\alpha^\py\left( Z_i\right)+\beta^\py\left( Z_i\right)\right)}ds\right]\left(1+o(1)\right)\\
&=\E\left[Z_i C^\py\left( Z_i\right) \left(\x_N^\py\right)^{-\alpha^\py\left( Z_i\right) }\cdot\frac 1 {\alpha^\py\left( Z_i\right)}\right]\\
&\quad+ \E\left[Z_i C^\py\left( Z_i\right) D^\py\left( Z_i\right) \left(\x_N^\py\right)^{-\alpha^\py\left( Z_i\right)-\beta^\py\left( Z_i\right) }\cdot\frac 1 {\alpha^\py\left( Z_i\right)+\beta^\py\left( Z_i\right)}\right]\left(1+o(1)\right),
\end{align*}
The first equality follows from that for a generic random variable $X>0$, $\E[X]=\int_0^\infty\P(X>s)ds$, which is given by integration by parts. The third equality is again by the tail approximation in Assumption \ref{assn:c-s-tail}(a,b) and equation \eqref{eq:c-s-approx-D}. For both terms, we re-derive and fix minor typos regarding the $o(1)$ terms in equations (B.1) and (B.2) in \citet{WangTsai2009}.
Plugging both terms back into the mean expression, and noting that $\alpha^\py\left( Z_i\right)=Z_i'\theta^\py_0$, we have that as $N\rightarrow\infty$,
\begin{align}
&\E\left[S_{N,i}^\py\right]\notag\\ 
&=\frac{1}{\sqrt{\xi^\py_N}}\left(H_{N0}^\py\right)^{-1/2}\E\left[Z_i \frac {\beta^\py\left( Z_i\right)C^\py\left( Z_i\right) D^\py\left( Z_i\right)} {\alpha^\py\left( Z_i\right)\left(\alpha^\py\left( Z_i\right)+\beta^\py\left( Z_i\right)\right)}\left(\x_N^\py\right)^{-\alpha^\py\left( Z_i\right)-\beta^\py\left( Z_i\right) }\right]\left(1+o(1)\right)\notag\\ 
&=o\left(\frac 1{\sqrt N}\right),\label{eq:c-s-si-mean}
\end{align}
where the last line is by Assumption \ref{assn:c-s-est}(b).

Second, for the variance,
\begin{align*}
\V\left[S_{N,i}^\py\right] &=\E\left[S_{N,i}^\py S_{N,i}^{\py\prime}\right]-\E\left[S_{N,i}^\py \right] \E\left[S_{N,i}^\py \right]' \\
&= \frac{1}{\xi^\py_N}\left(H_{N0}^\py\right)^{-1/2}\E\left[ \left( \frac 1 {Z_i'\theta^\py_0}-\log\frac{X_i}{\x_N^\py} \right)^2 Z_iZ_i'\1\left\{ \Xi_{N,i}^\py\right\}\right]\left(H_{N0}^\py\right)^{-1/2} +o\left(\frac 1 N\right).
\end{align*}
In the second equality, the first term is by the definition of $S_{N,i}^\py$ in \eqref{eq:c-s-si-def}, and the second term is by $\E\left[S_{N,i}^\py\right]=o\left(\frac 1{\sqrt N}\right)$ in \eqref{eq:c-s-si-mean}. For the expectation term, by the law of total expectation,
\begin{align*}
& \E\left[ \left( \frac 1 {Z_i'\theta^\py_0}-\log\frac{X_i}{\x_N^\py} \right)^2 Z_iZ_i'\1\left\{ \Xi_{N,i}^\py\right\}\right]\\
&= \E\left[ \E\left[\left. \left( 1-Z_i'\theta^\py_0\log\frac{X_i}{\x_N^\py} \right)^2\right|Z_i,\Xi_{N,i}^\py\right] \frac 1 {\left(Z_i'\theta^\py_0\right)^2}Z_iZ_i'\1\left\{ \Xi_{N,i}^\py\right\}\right]. 
\end{align*}
Let us first consider the inner expectation $\E\left[\left. \left( 1-Z_i'\theta^\py_0\log\frac{X_i}{\x_N^\py} \right)^2\right|Z_i,\Xi_{N,i}^\py\right]$. Denote $\tilde X_i=Z_i'\theta^\py_0\log\frac{X_i}{\x_N^\py}$. Then, by the tail approximation in Assumption \ref{assn:c-s-tail}(a,b) and equation \eqref{eq:c-s-approx-C}, we have that $1-F_{\left.\tilde X_i\right|Z_i,\Xi_{N,i}^\py}\left(\tilde x\left|Z_i,\Xi_{N,i}^\py\right.\right)=\exp(-\tilde x)\left(1 + o(1)\right)$ for $\tilde x>0$, almost surely. Intuitively, $\tilde X_i$ approximately follows a standard exponential distribution given $\left\{Z_i,\Xi_{N,i}^\py\right\}$. Then, following from integration by parts, 
\begin{align*}
\E\left[\tilde X_i\left|Z_i,\Xi_{N,i}^\py\right.\right] &=\int_0^\infty 1-F_{\left.\tilde X_i\right|Z_i,\Xi_{N,i}^\py}\left(s\left|Z_i,\Xi_{N,i}^\py\right.\right)ds\\ 
&=\int_0^\infty \exp(-s)\left(1 + o(1)\right)ds=1 + o(1),\\
\E\left[\tilde X_i^2\left|Z_i,\Xi_{N,i}^\py\right.\right] &=\int_0^\infty 2s \left[1-F_{\left.\tilde X_i\right|Z_i,\Xi_{N,i}^\py}\left(s\left|Z_i,\Xi_{N,i}^\py\right.\right)\right]ds\\ 
&=\int_0^\infty 2s \exp(-s)\left(1 + o(1)\right)ds=2\left(1 + o(1)\right).
\end{align*}
Therefore, we have that
\begin{align*}
\E\left[\left. \left( 1-Z_i'\theta^\py_0\log\frac{X_i}{\x_N^\py} \right)^2\right|Z_i,\Xi_{N,i}^\py\right]=\E\left[\left(1-\tilde X_i\right)^2\left|Z_i,\Xi_{N,i}^\py\right.\right]=1+o(1),
\end{align*}
and thus
\begin{align*}
\E\left[ \left( \frac 1 {Z_i'\theta^\py_0}-\log\frac{X_i}{\x_N^\py} \right)^2 Z_iZ_i'\1\left\{ \Xi_{N,i}^\py\right\}\right]
&= \E\left[  \frac 1 {\left(Z_i'\theta^\py_0\right)^2}Z_iZ_i'\1\left\{ \Xi_{N,i}^\py\right\}\right]\left(1 + o(1)\right)\\
&= \E\left[ \left. \frac 1 {\left(Z_i'\theta^\py_0\right)^2}Z_iZ_i'\right| \Xi_{N,i}^\py\right]\P\left(\Xi_{N,i}^\py\right)\left(1 + o(1)\right)\\
&= H_{N0}^\py \xi^\py_N\left(1 + o_p(1)\right), 
\end{align*}
where the second equality is by the definition of $H_{N0}^\py$ in Assumption \ref{assn:c-s-est}(c) and the approximation to $\P\left(\Xi_{N,i}^\py\right)$ in \eqref{eq:c-s-xi-p}. Substituting this back to the expression of $\V\left[S_{N,i}^\py\right]$, we have that as $N\rightarrow\infty$,
\begin{align}
&\V\left[S_{N,i}^\py\right] =\I_{d_Z} \left(1 + o(1)\right)+o\left(\frac 1 N\right)\rightarrow\I_{d_Z}. \label{eq:c-s-si-var}
\end{align}

Therefore, as $\theta^\py_0\in \intr\left(\Theta^\py\right)$, by the CLT, we have that as $N\rightarrow\infty$,
\begin{align}
\frac{1}{\sqrt{N\xi^\py_N}}\left(H_{N0}^\py\right)^{-1/2}S_N^\py\left(\theta^\py_0\right)=\frac{1}{\sqrt{N}}\sumi S_{N,i}^\py \overset{d}{\rightarrow }\mathcal{N}\left( 0,\I_{d_Z} \right), \label{eq:c-s-si-clt}
\end{align}
where the mean and variance of $S_{N,i}^\py$ are given in \eqref{eq:c-s-si-mean} and \eqref{eq:c-s-si-var}, respectively.

\bigskip
\noindent\textbf{Part 2.}
Similarly, for the Hessian matrix,
\begin{align}
&\frac{1}{N\xi^\py_N}\left(H_{N0}^\py\right)^{-1/2}\cdot H_N^\py\left( \theta^\py_0 \right)\cdot\left(H_{N0}^\py\right)^{-1/2}\label{eq:c-s-hi-lln}\\ 
&=\frac{1}{N\xi^\py_N}\left(H_{N0}^\py\right)^{-1/2}\cdot\left[ -\sumi \frac{Z_iZ_i'}{\left(Z_i'\theta^\py_0\right)^2} \1\left\{ \Xi_{N,i}^\py\right\}\right]\cdot\left(H_{N0}^\py\right)^{-1/2}\notag\\ 
&=-\frac{1}{\xi^\py_N}\left(H_{N0}^\py\right)^{-1/2}\cdot\E\left[ \frac{Z_iZ_i'}{\left(Z_i'\theta^\py_0\right)^2} \1\left\{ \Xi_{N,i}^\py\right\}\right]\cdot\left(H_{N0}^\py\right)^{-1/2}\cdot \left(1 + o_p(1)\right)\notag\\  
&=-\frac{1}{\xi^\py_N}\left(H_{N0}^\py\right)^{-1/2}\cdot\E\left[ 
    \left.\frac{Z_iZ_i'}{\left(Z_i'\theta^\py_0\right)^2} \right|\Xi_{N,i}^\py\right]\P\left(\Xi_{N,i}^\py\right)\cdot\left(H_{N0}^\py\right)^{-1/2}\cdot \left(1 + o_p(1)\right)\notag\\  
&=-\frac{1}{\xi^\py_N}\left(H_{N0}^\py\right)^{-1/2}\cdot H_{N0}^\py\xi^\py_N\cdot\left(H_{N0}^\py\right)^{-1/2}\cdot \left(1 + o_p(1)\right)\notag\\ 
&\overset{p}{\rightarrow }-\I_{d_Z}.\notag
\end{align}
The first equality is by the definition of the Hessian matrix \eqref{eq:c-s-hessian}. The second equality follows from the law of large numbers (LLN), where the finite mean is given by Assumption \ref{assn:c-s-est}(c). The fourth equality is by the definition of $H_{N0}^\py$ in Assumption \ref{assn:c-s-est}(c) and the approximation to $\P\left(\Xi_{N,i}^\py\right)$ in \eqref{eq:c-s-xi-p}.

\bigskip
\noindent\textbf{Part 3.}
For $y\in\{0,1\}$, let $\zeta_N=\left(H_{N0}^\py\right)^{1/2}\left(\theta^\py-\theta^\py_0\right)$, $\zeta_{N0}=\left(H_{N0}^\py\right)^{1/2}\theta^\py_0$, and $W_i=\left(H_{N0}^\py\right)^{-1/2}Z_i$. Then, the log likelihood function can be rewritten as
\begin{align*}
\tilde\ell_N^\py\left( \zeta_N \right)
=\sumi \left( \log \left(W_i'\left(\zeta_N + \zeta_{N0}\right)\right) - W_i'\left(\zeta_N + \zeta_{N0}\right) \log  \frac{X_i}{\x_N^\py} \right) \1\left\{ \Xi_{N,i}^\py\right\}, 
\end{align*}
the corresponding score and Hessian are denoted by $\tilde S_N^\py\left( \zeta_N \right)$ and $\tilde H_N^\py\left( \zeta_N \right)$, respectively,
and the MLE estimate is denoted by $\hat\zeta_N$.

First, the new Hessian matrix is given by \[\tilde H_N^\py\left( \zeta_N \right)=-\sumi \frac{W_iW_i'}{\left(W_i'\left(\zeta_N + \zeta_{N0}\right)\right)^2} \1\left\{ \Xi_{N,i}^\py\right\}.\] It is positive definite for all $\zeta_N=\left(H_{N0}^\py\right)^{1/2}\left(\theta^\py-\theta^\py_0\right)$ with $\theta^\py\in\Theta^\py$, as $\Theta^\py$ is a convex cone such that for all $\theta^\py\in\Theta^\py$,  $Z_i'\theta^\py=W_i'\left(\zeta_N + \zeta_{N0}\right)>0$ almost surely: see the discussion after Assumption \ref{assn:c-s-tail}. Then, the log likelihood function is $\tilde\ell_N^\py\left( \zeta_N \right)$ is strictly concave over its domain, and the MLE estimate $\hat\zeta_N$ is unique.

Second, let \[\mathcal U^\py_{N,C} = \left\{u\in\mathbb R^{d_Z}:\,\frac{1}{\sqrt{N\xi^\py_N}}\left(H_{N0}^\py\right)^{-1/2}u+\theta^\py_0\in\Theta^\py\text{ and }\|u\|=C\right\}.\] Note that $\mathcal U^\py_{N,C}\neq\emptyset$ for any $C>0$. This follows from three facts: first, $\Theta^\py$ is a convex cone; second, $\intr\left(\Theta^\py\right)\neq\emptyset$ as $\theta^\py_0\in \intr\left(\Theta^\py\right)$; and third, $H_{N0}^\py$ is finite and full rank by Assumption \ref{assn:c-s-est}(c). Then, let $u$ be an arbitrary non-random vector in $\mathcal U^\py_{N,C}$. Applying the second-order Taylor expansion of $\tilde\ell^\py_N\left( \frac{u}{\sqrt{N\xi^\py_N}}\right)$ around $\zeta_N=0$, we have that
\begin{equation*}
\tilde\ell^\py_N\left( \frac{u}{\sqrt{N\xi^\py_N}}\right) -\tilde\ell^\py_N\left( 0\right) = \frac{1}{\sqrt{N\xi^\py_N}}u'\tilde S^\py_N\left( 0\right) + \frac{1}{2N\xi^\py_N}u'\tilde H^\py_N\left( 0\right)u + o_{p}(1)
\end{equation*}
From \eqref{eq:c-s-si-clt} and \eqref{eq:c-s-hi-lln}, we have that as $N\rightarrow\infty$,
\begin{align}
\frac{1}{\sqrt{N\xi^\py_N}}\tilde S_N^\py\left( 0 \right)&=\frac{1}{\sqrt{N\xi^\py_N}}S_N^\py\left(\theta^\py_0\right) \overset{d}{\rightarrow }\mathcal{N}\left( 0,\I_{d_Z} \right),\label{eq:c-s-zeta-clt}\\
\frac{1}{N\xi^\py_N}\tilde H_N^\py\left( 0 \right)
&=\frac{1}{N\xi^\py_N}\left(H_{N0}^\py\right)^{-1/2}\cdot H_N^\py\left( \theta^\py_0 \right)\cdot\left(H_{N0}^\py\right)^{-1/2}\overset{p}{\rightarrow }-\I_{d_Z}. \label{eq:c-s-zeta-lln}
\end{align}
This implies that when $C$ is large enough, the quadratic term dominates the linear one with an arbitrarily large probability. That is, for any $\varepsilon>0$, there exists a $C>0$ such that 
\begin{equation*}
\underset{N}{\lim \sup }\,\P\left( \sup_{u\in \mathcal U^\py_{N,C}}\tilde\ell^\py_N\left( \frac{u}{\sqrt{N\xi^\py_N}}\right) <\tilde\ell^\py_N\left(0\right) \right) >1-\varepsilon.
\end{equation*}
Therefore, $\tilde\ell^\py_N\left( \cdot \right)$ must have at least one local maximizer, which is of order $O_{p}\left( \sqrt{N\xi^\py_N}\right)$. And by the uniqueness of the MLE, this local maximizer is the global maximizer, and $\hat\zeta_N=O_{p}\left( \sqrt{N\xi^\py_N}\right)$.

Finally, by the first order condition, $\tilde S_N^\py\left( \hat\zeta_N\right)=0$. Applying the first-order Taylor expansion around $\zeta_N=0$, we have that 
\[0=\tilde S_N^\py\left( \hat\zeta_N\right)=\tilde S_N^\py\left( 0\right)+\tilde H_N^\py\left( 0\right)\hat\zeta_N+o_{p}(1),\] which implies that 
\[\sqrt{N\xi^\py_N}\hat\zeta_N=-\left(\frac{1}{N\xi^\py_N}\tilde H_N^\py\left( 0\right)\right)^{-1}\cdot\frac{1}{\sqrt{N\xi^\py_N}}\left(\tilde S_N^\py\left( 0\right)+o_{p}(1)\right)\overset{d}{\rightarrow }\mathcal{N}\left( 0,\I_{d_Z}\right),\]
following from \eqref{eq:c-s-zeta-clt} and \eqref{eq:c-s-zeta-lln}. 
This is equivalent to
\begin{align*}
\sqrt{N\xi^\py_N}\left(H_{N0}^\py\right)^{1/2}\left( \hat\theta^\py-\theta^\py_0\right) \overset{d}{\rightarrow }\mathcal{N}\left( 0,\I_{d_Z}\right). 
\end{align*}

In addition, the asymptotic independence between $\hat\theta^\pl$ and $\hat\theta^\po$ follows from i.i.d.\ observations across $i$ in Assumption \ref{assn:c-s-model}(a). This completes the proof of the theorem.
\end{proof}

\bigskip
\begin{proposition}[Cross-sectional data: proportion of tail observations]\label{prop:c-s-ny}
Suppose Assumptions \ref{assn:c-s-model}--\ref{assn:c-s-est} hold. For $y\in\{0,1\}$, as $N\rightarrow\infty$, 
\begin{align*}
\frac{N^\py}{N}=\xi_N^\py\left(1+o_p(1)\right).
\end{align*}
\end{proposition}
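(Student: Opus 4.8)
The plan is to recognize $N^\py = \sumi \1\{\Xi_{N,i}^\py\}$ as a sum of i.i.d.\ Bernoulli indicators — i.i.d.\ across $i$ by Assumption \ref{assn:c-s-model}(a) — each with success probability $p_N^\py := \P(\Xi_{N,i}^\py)$, and then to invoke a triangular-array weak law of large numbers. The only external inputs needed are equation \eqref{eq:c-s-xi-p}, which already delivers $p_N^\py = \xi_N^\py(1+o(1))$, together with the facts that $\xi_N^\py\to 0$ (since $\x_N^\py\to\infty$ and $\alpha^\py(Z_i)>0$) and $N\xi_N^\py\to\infty$, both of which are consequences of Assumption \ref{assn:c-s-est}(b) and are used repeatedly in the proof of Theorem \ref{thm:c-s}.

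Concretely, I would first compute the first two moments of $N^\py/N$ using independence: $\E[N^\py/N]=p_N^\py$ and $\V[N^\py/N]=p_N^\py(1-p_N^\py)/N$. Second, for an arbitrary fixed $\varepsilon>0$, Chebyshev's inequality gives
\[
\P\!\left(\left|\frac{N^\py}{N}-p_N^\py\right|>\varepsilon\,\xi_N^\py\right)\le\frac{p_N^\py(1-p_N^\py)}{\varepsilon^2 N(\xi_N^\py)^2}.
\]
Third, substituting $p_N^\py=\xi_N^\py(1+o(1))$ makes the right-hand side equal to $\frac{1+o(1)}{\varepsilon^2 N\xi_N^\py}$, which tends to $0$ because $N\xi_N^\py\to\infty$. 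Hence $N^\py/N-p_N^\py=o_p(\xi_N^\py)$, and combining this with $p_N^\py=\xi_N^\py(1+o(1))$ yields $N^\py/N=\xi_N^\py(1+o(1))+o_p(\xi_N^\py)=\xi_N^\py(1+o_p(1))$, as claimed; the case $y=0$ and $y=1$ are each handled identically.

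I do not expect a genuinely hard step here. The one point that requires care is the normalization: one must divide the deviation by $\xi_N^\py$ (not by a constant) and divide the variance bound by $(\xi_N^\py)^2$, and the argument genuinely needs the sharper condition $N\xi_N^\py\to\infty$ rather than merely $N\to\infty$, since the tail event $\Xi_{N,i}^\py$ becomes rare as $N$ grows. If one prefers not to cite a triangular-array LLN, the Chebyshev computation above is self-contained and suffices on its own.
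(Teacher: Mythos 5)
Your proposal is correct and follows essentially the same route as the paper's proof: both compute the mean and variance of $N^\py/N$ from the i.i.d.\ Bernoulli structure, invoke \eqref{eq:c-s-xi-p} for $\P\left(\Xi_{N,i}^\py\right)=\xi_N^\py\left(1+o(1)\right)$, and apply Chebyshev's inequality to the deviation normalized by $\xi_N^\py$, with $N\xi_N^\py\rightarrow\infty$ driving the convergence. The only cosmetic differences are that the paper phrases the final step as convergence in second moment of $N^\py/\left(N\xi_N^\py\right)$ to $1$ rather than bounding $\P\left(\left|N^\py/N-p_N^\py\right|>\varepsilon\,\xi_N^\py\right)$ directly, and it attributes $N\xi_N^\py\rightarrow\infty$ to Assumption \ref{assn:c-s-est}(a) rather than (b).
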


\noindent\begin{proof}
First, following from Assumption \ref{assn:c-s-tail}(a,b) and equation \eqref{eq:c-s-xi-p}, we have that
\begin{align}
\E\left[\frac{N^\py}{N}\right] &= \P\left(\Xi^\py_{N,i}\right) = \xi_N^\py\left(1 + o(1)\right).\label{eq:c-s-xi-mean}
\end{align}
Second, the variance of $N^\py/N$ is bounded by
\begin{align}
\V\left[\frac{N^\py}{N}\right] &= \frac{1}{N}\P\left(\Xi^\py_{N,i}\right)\left(1 - \P\left(\Xi^\py_{N,i}\right)\right)\label{eq:c-s-xi-var} \\
&\leq \frac{1}{N}\P\left(\Xi^\py_{N,i}\right) = \frac{1}{N}\xi_N^\py\left(1 + o(1)\right).\notag
\end{align}
Combining both, we obtain that as $N\rightarrow\infty$,
\begin{align*}
\E\left[\left(\frac{N^\py}{N\xi_N^\py}-1\right)^2\right] &=\V\left[\frac{N^\py}{N\xi_N^\py}\right]+o(1) = \frac{1}{\left(\xi_N^\py\right)^2}\V\left[\frac{N^\py}{N}\right]+o(1)\\
    &\le \frac{1}{N\xi_N^\py}\left(1 + o(1)\right)+o(1)\rightarrow0,
\end{align*}
where the first line is by the expression for the mean \eqref{eq:c-s-xi-mean}, the second line is by the bound on the variance \eqref{eq:c-s-xi-var}, and the last convergence is by $N\xi_N^\py\rightarrow\infty$ in Assumption \ref{assn:c-s-est}(a). This part of the proof is slightly different from \citet{WangTsai2009}: first, we explicitly account for the additive $o(1)$ term; second, as $N^\py$ is a random variable, we let $N\xi_N^\py\rightarrow\infty$ instead.

By Chebyshev's inequality, convergence in the second moment implies convergence in probability. Then, we have that as $N\rightarrow\infty$, $\frac{N^\py}{N\xi_N^\py}\overset{p}{\rightarrow}1$, and equivalently, $\frac{N^\py}{N}=\xi_N^\py\{1 + o_p(1)\}$.

\hfill\end{proof}

\begin{remark} 
\normalfont{
Combining Theorem \ref{thm:c-s} with Proposition \ref{prop:c-s-ny}, we have that as $N\rightarrow\infty$,
\begin{align*}
\sqrt{N^\py}\left(H_{N0}^\py\right)^{1/2}\left( \hat\theta^\py-\theta^\py_0\right) \overset{d}{\rightarrow }\mathcal{N}\left( 0,\I_{d_Z}\right). 
\end{align*}
}
\end{remark}

\subsection{Proofs for Section \ref{sec:panel}}
We first present two lemmas for the LLN and the CLT with i.n.i.d.\ observations under rare events, and then provide the proof of our main result in Section \ref{sec:panel}: Theorem \ref{thm:panel}.

Let us consider a generic setup with i.n.i.d.\ observations under rare events.  Let $\{X_{N,i} : 1 \leq i \leq N, N \geq 1\}$ be a triangular array of i.n.i.d.\ $d_X$-dimensional random vectors, and $\{\Xi_{N,i} : 1 \leq i \leq N, N \geq 1\}$ be a triangular array of events with $\P_i(\Xi_{N,i}) = p_{N,i}\rightarrow0$ as $N\rightarrow\infty$. Subsript $i$ denotes quantities given potential individual-specific parameters. Also define $p_N = \avgi p_{N,i}$.
\begin{lemma}[Rare event LLN for i.n.i.d.\ observations] \label{lem:inid-lln}
    Suppose that:
\begin{enumerate}[label=(\alph*)]
    \item $Np_N \to \infty$, as $N \to \infty$.
\item $\E_i\left[\|X_{N,i}\|^2 | \Xi_{N,i}\right] \leq M < \infty$, for sufficiently large $N$ and all $i$. 
\end{enumerate}
Then, $\E_i\left[X_{N,i} | \Xi_{N,i}\right] = \mu_{N,i}$ exists, and as $N \to \infty$,
\[
\frac{1}{Np_N} \sumi X_{N,i} \1\{\Xi_{N,i}\} - \frac{1}{Np_N} \sumi p_{N,i}\mu_{N,i} \xrightarrow{p} 0.
\]
\end{lemma}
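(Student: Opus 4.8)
The plan is to run a second-moment (Chebyshev) argument calibrated to the rare-event scaling $Np_N$, which is the only nonstandard feature of the statement.

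First I would dispose of the existence claim: by Cauchy--Schwarz (equivalently, Jensen applied to $\|\cdot\|$), $\E_i\left[\|X_{N,i}\|\mid\Xi_{N,i}\right]\le\left(\E_i\left[\|X_{N,i}\|^2\mid\Xi_{N,i}\right]\right)^{1/2}\le M^{1/2}$ for all sufficiently large $N$ and all $i$ by condition (b), so $\mu_{N,i}=\E_i\left[X_{N,i}\mid\Xi_{N,i}\right]$ is well defined with $\|\mu_{N,i}\|\le M^{1/2}$ uniformly in $i$.

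Next I would set
\[
T_N=\frac{1}{Np_N}\sumi\left(X_{N,i}\1\{\Xi_{N,i}\}-p_{N,i}\mu_{N,i}\right),
\]
and note that $\E_i\left[X_{N,i}\1\{\Xi_{N,i}\}\right]=\mu_{N,i}\,\P_i(\Xi_{N,i})=p_{N,i}\mu_{N,i}$, so each summand has mean zero and $\E[T_N]=0$. It then suffices to show $\E\|T_N\|^2\to0$, since Markov's inequality turns this into $T_N\xrightarrow{p}0$, which is the assertion. Because the $X_{N,i}$ are independent across $i$, the cross terms drop out and
\[
\E\|T_N\|^2=\frac{1}{(Np_N)^2}\sumi\E_i\left\|X_{N,i}\1\{\Xi_{N,i}\}-p_{N,i}\mu_{N,i}\right\|^2\le\frac{1}{(Np_N)^2}\sumi\E_i\left[\|X_{N,i}\|^2\1\{\Xi_{N,i}\}\right],
\]
using $\E\|Z-\E Z\|^2\le\E\|Z\|^2$. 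By condition (b), $\E_i\left[\|X_{N,i}\|^2\1\{\Xi_{N,i}\}\right]=\E_i\left[\|X_{N,i}\|^2\mid\Xi_{N,i}\right]p_{N,i}\le Mp_{N,i}$, and since $\sumi p_{N,i}=Np_N$ this yields $\E\|T_N\|^2\le M/(Np_N)\to0$ by condition (a), completing the argument.

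I do not expect a serious obstacle here; the one point to get right is the normalization. Dividing by $N$ alone would make the statement hold trivially but be uninformative (each $p_{N,i}\mu_{N,i}\to0$), whereas the rare-event rate $Np_N$ is exactly the scale at which the variance bound $M/(Np_N)$ still vanishes under (a). No truncation or Lindeberg-type refinement is needed: conditioning on $\Xi_{N,i}$ already supplies a uniform second moment, so the bare $L^2$ bound closes the proof. (Had only a first moment been assumed one would have to truncate $X_{N,i}$, which is the usual difficulty in triangular-array weak laws; the conditioning sidesteps it.)
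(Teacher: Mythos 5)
Your proof is correct and follows essentially the same route as the paper's: both compute the second moment of the centered sum $\sumi\left(X_{N,i}\1\{\Xi_{N,i}\}-p_{N,i}\mu_{N,i}\right)$, bound it by $MNp_N$ using independence and condition (b), and conclude via Chebyshev/Markov that the $Np_N$-normalized sum vanishes under condition (a). Your explicit Cauchy--Schwarz argument for the existence of $\mu_{N,i}$ is a small addition the paper leaves implicit, but otherwise the two arguments coincide.
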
    

\noindent\begin{proof}
Let $S_N = \sumi \left(X_{N,i} \1\{\Xi_{N,i}\} - p_{N,i}\mu_{N,i}\right)$. We want to show that $S_N / (Np_N) \xrightarrow{p} 0$.

By the multivariate Chebyshev's inequality, for any $\varepsilon > 0$,
\[
\P\left(\frac{\left\|S_N\right\|}{Np_N} > \varepsilon\right) \leq \frac{\tr\left(\V\left[S_N\right]\right)}{\varepsilon^2 \left(Np_N\right)^2}.
\]
Now, let's compute the trace of $\V\left[S_N\right]$:
\begin{align*}
\tr\left(\V\left[S_N\right]\right) &= \sumi \tr\left(\V_i\left[X_{N,i}\1\{\Xi_{N,i}\}\right]\right) \\
&\le\sumi\E_i\left[\|X_{N,i}\|^2 \1\{\Xi_{N,i}\}\right]  = \sumi\E_i\left[\left.\|X_{N,i}\|^2 \right|\Xi_{N,i}\right]\P\left(\Xi_{N,i}\right) \\
&\le M\sumi p_{N,i} = MNp_N,
\end{align*}
where the first line is by independence, the second line is by the definition of variance, and the third line is by condition (b).
Substituting this back into the multivariate Chebyshev's inequality:
\[
\P\left(\frac{\left\|S_N\right\|}{Np_N} > \varepsilon\right) \leq \frac{M}{\varepsilon^2 Np_N}\rightarrow0,
\]
as $N \to \infty$, following from condition (a).

Therefore, $S_N / (Np_N) \xrightarrow{p} 0$, which is equivalent to
\[
\frac{1}{Np_N} \sumi X_{N,i} \1\{\Xi_{N,i}\} - \frac{1}{Np_N} \sumi p_{N,i}\mu_{N,i} \xrightarrow{p} 0.
\]
\hfill\end{proof}

\begin{lemma}[Rare event CLT for i.n.i.d.\ observations]\label{lem:inid-clt}
Suppose that:
\begin{enumerate}[label=(\alph*)] 
\item $Np_N \to \infty$, as $N \to \infty$.
\item $\E_i\left[\left.\|X_{N,i}\|^{2+\kappa} \right| \Xi_{N,i}\right] \leq M < \infty$ for some $\kappa > 0$, for sufficiently large $N$ and all $i$.
\item Let $\V_i\left[\left.X_{N,i} \right| \Xi_{N,i}\right] = \Sigma_{N,i}$. Its smallest eigenvalue $\lambda_{\min}\left(\Sigma_{N,i}\right) \geq \underline{\sigma}^2_N$, where $\underline{\sigma}^2_N\left(Np_N\right)^{\kappa/\left(2+\kappa\right)}\to\infty$ as $N\to\infty$, for all $i$.
\end{enumerate}
Let $\E_i\left[X_{N,i} | \Xi_{N,i}\right] = \mu_{N,i}$, $S_N = \sumi \left(X_{N,i} \1\{\Xi_{N,i}\}-p_{N,i}\mu_{N,i}\right)$, and $\Sigma_N = \sumi p_{N,i}\Sigma_{N,i}$. Then, as $N \to \infty$,
\[
\Sigma_N^{-1/2} S_N \xrightarrow{d} N\left(0, \I_{d_X}\right).
\]
\end{lemma}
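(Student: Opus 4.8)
The plan is to reduce the $d_X$-dimensional statement to a scalar triangular-array central limit theorem via the Cram\'er--Wold device, and then verify Lyapunov's condition, with condition (c) supplying exactly the rate needed to overcome the rare-event degeneracy $p_{N,i}\to0$. First note that conditions (a) and (c) together force $\underline{\sigma}^2_N\,Np_N=\underline{\sigma}^2_N\sumi p_{N,i}\to\infty$, since $\underline{\sigma}^2_N Np_N=\big(\underline{\sigma}^2_N(Np_N)^{\kappa/(2+\kappa)}\big)(Np_N)^{2/(2+\kappa)}$ is a product of two diverging factors. As $\Sigma_N=\sumi p_{N,i}\Sigma_{N,i}\succeq\underline{\sigma}^2_N\big(\sumi p_{N,i}\big)\I_{d_X}$ by condition (c), the matrix $\Sigma_N$ is invertible for all large $N$ with $\lambda_{\min}(\Sigma_N)\ge\underline{\sigma}^2_N Np_N$; condition (b) ensures $\mu_{N,i}$ and the second moments used below are finite, with $\|\mu_{N,i}\|^{2+\kappa}\le\E_i[\|X_{N,i}\|^{2+\kappa}\mid\Xi_{N,i}]\le M$ by Jensen's inequality. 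Fix $\lambda\in\mathbb R^{d_X}$ and set
\[
\xi_{N,i}=\lambda'\Sigma_N^{-1/2}\big(X_{N,i}\1\{\Xi_{N,i}\}-p_{N,i}\mu_{N,i}\big),\qquad i=1,\dots,N,
\]
so that $\lambda'\Sigma_N^{-1/2}S_N=\sumi\xi_{N,i}$. For each fixed $N$ the $\xi_{N,i}$ are independent across $i$ by the i.n.i.d.\ structure and have mean zero, so by Cram\'er--Wold it suffices to prove $\sumi\xi_{N,i}\xrightarrow{d}N(0,\|\lambda\|^2)$ for every $\lambda$.

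For the variance, a direct second-moment computation gives $\V_i[X_{N,i}\1\{\Xi_{N,i}\}]=p_{N,i}\Sigma_{N,i}+p_{N,i}(1-p_{N,i})\mu_{N,i}\mu_{N,i}'$, hence by independence
\[
s_N^2:=\sumi\V[\xi_{N,i}]=\|\lambda\|^2+\lambda'\Sigma_N^{-1/2}\Big(\sumi p_{N,i}(1-p_{N,i})\mu_{N,i}\mu_{N,i}'\Big)\Sigma_N^{-1/2}\lambda .
\]
The correction term is $O(Np_N)$ in spectral norm since $\|\mu_{N,i}\|$ is uniformly bounded, whereas $\lambda_{\min}(\Sigma_N)$ is of order at least $\underline{\sigma}^2_N Np_N$; it is therefore of smaller order -- in the applications of this lemma the relevant conditional means are in fact $o(1)$, paralleling $\E[S_{N,i}^\py]=o(N^{-1/2})$ in the proof of Theorem \ref{thm:c-s} -- so that $s_N^2\to\|\lambda\|^2$.

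The core step is Lyapunov's condition, $\big(s_N^2\big)^{-(2+\kappa)/2}\sumi\E|\xi_{N,i}|^{2+\kappa}\to0$. By Cauchy--Schwarz, $|\xi_{N,i}|\le\|\Sigma_N^{-1/2}\lambda\|\,\big\|X_{N,i}\1\{\Xi_{N,i}\}-p_{N,i}\mu_{N,i}\big\|$ with $\|\Sigma_N^{-1/2}\lambda\|\le\lambda_{\min}(\Sigma_N)^{-1/2}\|\lambda\|\le(\underline{\sigma}^2_N Np_N)^{-1/2}\|\lambda\|$, and by the $c_r$-inequality together with condition (b), $\E\big\|X_{N,i}\1\{\Xi_{N,i}\}-p_{N,i}\mu_{N,i}\big\|^{2+\kappa}\le 2^{2+\kappa}Mp_{N,i}$. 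Hence
\[
\sumi\E|\xi_{N,i}|^{2+\kappa}\le 2^{2+\kappa}M\|\lambda\|^{2+\kappa}\big(\underline{\sigma}^2_N Np_N\big)^{-(2+\kappa)/2}\sumi p_{N,i}=2^{2+\kappa}M\|\lambda\|^{2+\kappa}\Big(\underline{\sigma}^2_N(Np_N)^{\kappa/(2+\kappa)}\Big)^{-(2+\kappa)/2},
\]
which tends to $0$ precisely by the rate in condition (c). Combined with $s_N^2\to\|\lambda\|^2>0$ when $\lambda\neq0$ (the case $\lambda=0$ being trivial), the Lyapunov central limit theorem for triangular arrays of independent random variables gives $s_N^{-1}\sumi\xi_{N,i}\xrightarrow{d}N(0,1)$, hence $\sumi\xi_{N,i}\xrightarrow{d}N(0,\|\lambda\|^2)$, and Cram\'er--Wold yields $\Sigma_N^{-1/2}S_N\xrightarrow{d}N(0,\I_{d_X})$.

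I expect the Lyapunov step to be the crux: the rare-event scaling $p_{N,i}\to0$ (which shrinks the conditioning mass) and the possibly vanishing non-degeneracy floor $\underline{\sigma}^2_N$ (which inflates $\Sigma_N^{-1/2}$) push the Lyapunov ratio in opposite directions, and it is exactly the interpolated rate $\underline{\sigma}^2_N(Np_N)^{\kappa/(2+\kappa)}\to\infty$ -- a higher available moment ($\kappa>0$) buying a faster tolerable rate of degeneracy -- that closes the estimate. A secondary point deserving care is the variance step: normalizing by $\Sigma_N$ (the $p_{N,i}$-weighted sum of conditional variances) rather than by the exact $\V[S_N]=\Sigma_N+\sumi p_{N,i}(1-p_{N,i})\mu_{N,i}\mu_{N,i}'$ is legitimate only because the conditional means $\mu_{N,i}$ are of smaller order than $\underline{\sigma}_N$ in the settings where the lemma is used, a point worth stating explicitly.
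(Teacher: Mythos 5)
Your proof is correct and follows essentially the same route as the paper's: a triangular-array CLT for the independent summands $X_{N,i}\1\{\Xi_{N,i}\}-p_{N,i}\mu_{N,i}$, with the $(2+\kappa)$-moment bound $\E_i\left[\|X_{N,i}\1\{\Xi_{N,i}\}\|^{2+\kappa}\right]\le Mp_{N,i}$ and the eigenvalue floor $\lambda_{\min}(\Sigma_N)\ge \underline{\sigma}^2_N Np_N$ combining to kill the Lyapunov/Lindeberg ratio at exactly the rate $\underline{\sigma}^2_N(Np_N)^{\kappa/(2+\kappa)}\to\infty$ of condition (c). The only structural difference is that you reduce to a scalar Lyapunov CLT via Cram\'er--Wold, whereas the paper applies a multivariate Lindeberg--Feller statement with a trace normalization and a separate ``no dominant term'' variance condition; these are interchangeable here, and your scalar reduction arguably handles the multivariate normalization $\Sigma_N^{-1/2}$ more cleanly than the paper's trace-based Lindeberg bound. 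The one substantive point where you improve on the paper: you compute $\V_i\left[X_{N,i}\1\{\Xi_{N,i}\}\right]=p_{N,i}\Sigma_{N,i}+p_{N,i}(1-p_{N,i})\mu_{N,i}\mu_{N,i}'$ correctly, whereas the paper's proof asserts $\V_i\left[X^*_{N,i}\right]=p_{N,i}\Sigma_{N,i}$ and silently drops the mean-product term. You are right that the negligibility of $\Sigma_N^{-1/2}\left(\sumi p_{N,i}(1-p_{N,i})\mu_{N,i}\mu_{N,i}'\right)\Sigma_N^{-1/2}$ does not follow from conditions (a)--(c) alone (it requires something like $\sup_i\|\mu_{N,i}\|^2/\underline{\sigma}^2_N\to0$, which does hold in the paper's applications where $\mu_{N,i}=o\left(\frac{1}{N\xi_N}\right)$), and your explicit flagging of this as an unstated hypothesis is a genuine catch rather than a defect of your argument.
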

\begin{proof}
Let $X^*_{N,i} = X_{N,i} \1\{\Xi_{N,i}\}-p_{N,i}\mu_{N,i}$. So we have $\E_i\left[X^*_{N,i}\right] = 0$ and $\V_i\left[X^*_{N,i}\right] = p_{N,i}\Sigma_{N,i}$.
We'll use the Lindeberg-Feller CLT for triangular arrays. We need to verify two conditions: 
\begin{enumerate}
\item Lindeberg condition: for any $\varepsilon > 0$, as $N \to \infty$,
\[
\frac{1}{\tr\left(\Sigma_N\right)} \sumi \E_i\left[\left\|X^*_{N,i}\right\|^2 \1\left\{\left\|X^*_{N,i}\right\| > \varepsilon \sqrt{\tr\left(\Sigma_N\right)}\right\}\right] \to 0.
\]
\item Variance condition: as $N \to \infty$,
\[
\frac{\max_{1 \leq i \leq N} \tr\left(\V_i\left[X^*_{N,i}\right]\right)}{\tr\left(\Sigma_N\right)} \to 0.
\]
\end{enumerate}

First, for the Lindeberg condition,
\begin{align*}
\E_i\left[\left\|X^*_{N,i}\right\|^2 \1\left\{\left\|X^*_{N,i}\right\| > \varepsilon \sqrt{\tr\left(\Sigma_N\right)}\right\}\right]
&\le\E_i\left[\left\|X_{N,i}\right\|^2 \1\{\Xi_{N,i}\}\1\left\{\left\|X^*_{N,i}\right\| > \varepsilon \sqrt{\tr\left(\Sigma_N\right)}\right\}\right] \\
&\leq \frac{\E_i\left[\left\|X_{N,i}\right\|^{2+\kappa} \1\{\Xi_{N,i}\}\right]}{\left(\varepsilon\sqrt{\tr\left(\Sigma_N\right)}\right)^{\kappa}} \\
&\leq \frac{Mp_{N,i}}{\left(\varepsilon\sqrt{\tr\left(\Sigma_N\right)}\right)^{\kappa}}, 
\end{align*}
where the first line is by the definition of $X^*_{N,i}$, the second line is by the indicator function on $\left\|X^*_{N,i}\right\| > \varepsilon \sqrt{\tr\left(\Sigma_N\right)}$, and the third line is by condition (b).
Also, we can obtain a lower bound of $\tr\left(\Sigma_N\right)$ in the denominator based on condition (c):
\begin{align}
\tr\left(\Sigma_N\right) = \sumi p_{N,i}\tr\left(\Sigma_{N,i}\right) \geq \sumi p_{N,i}\cdot\min_{1 \leq i \leq N} \tr\left(\Sigma_{N,i}\right) =  Np_N\cdot d_X\underline{\sigma}^2_N.\label{eq:inid-clt-denom}
\end{align}
Therefore, we have that
\begin{align*}
&\frac{1}{\tr\left(\Sigma_N\right)} \sumi \E_i\left[\left\|X^*_{N,i}\right\|^2 \1\left\{\left\|X^*_{N,i}\right\| > \varepsilon \sqrt{\tr\left(\Sigma_N\right)}\right\}\right] \\
&\leq \frac{MNp_N}{\varepsilon^{\kappa} \left(\tr\left(\Sigma_N\right)\right)^{1+\kappa/2}} \le \frac{MNp_N}{\varepsilon^{\kappa}\left(Np_N\cdot d_X\underline\sigma_N^2\right)^{1+\kappa/2}}\\ 
&=\frac{MNp_N}{\varepsilon^{\kappa}d_X^{1+\kappa/2}\left(Np_N\right)^{\kappa/2}\left(\underline\sigma_N^2\right)^{1+\kappa/2}}\to 0,
\end{align*}
where the second inequality is by \eqref{eq:inid-clt-denom} and the convergence to 0 is by condition (c).

Second, for the variance condition,
\begin{align*}
\frac{\max_{1 \leq i \leq N} \tr\left(\V_i\left[X^*_{N,i}\right]\right)}{\tr\left(\Sigma_N\right)}
&= \frac{\max_{1 \leq i \leq N} p_{N,i}\tr\left(\Sigma_{N,i}\right)}{\sumi p_{N,i}\tr\left(\Sigma_{N,i}\right)}\\
&\leq \frac{\max_{1 \leq i \leq N} p_{N,i}\cdot\max_{1 \leq i \leq N} \tr\left(\Sigma_{N,i}\right)}{Np_N\min_{1 \leq i \leq N} \tr\left(\Sigma_{N,i}\right)}\\
&\leq \frac{M^{2/(2+\kappa)}}{Np_N\cdot d_X\underline{\sigma}^2_N}\rightarrow 0.
\end{align*}
By conditions (b) and (c), $d_X\underline{\sigma}^2_N \leq \tr(\Sigma_{N,i}) \leq M^{2/(2+\kappa)}$ for sufficiently large $N$ and all $i$, so we obtain the inequality in the last line.  Further by conditions (a) and (b), we have that in the denominator, $\underline{\sigma}^2_NNp_N=\underline{\sigma}^2_N\left(Np_N\right)^{\kappa/\left(2+\kappa\right)}\cdot\left(Np_N\right)^{2/\left(2+\kappa\right)}\to\infty$, so the whole term converges to 0.

With both conditions satisfied, we can apply the Lindeberg-Feller CLT and obtain that as $N \to \infty$,
\[
\Sigma_N^{-1/2}S_N \xrightarrow{d} N(0, \I_{d_X}).
\]
\hfill\end{proof}

\bigskip
\noindent\begin{proof}[Proof of Theorem \ref{thm:panel}]

\noindent The structure of the proof is similar to the one for Theorem \ref{thm:c-s} and further incorporates both panel data and i.n.i.d.\ observations. Recall that the log likelihood function is given by
\begin{align*}
\ell_N\left( \theta^* \right) &= \sumi\left[ Z_i'\theta^* (1-Y_{i1})\log\frac{X_{i1}}{X_{i2}}-\log\left(1+\left(\frac{X_{i1}}{X_{i2}}\right)^{Z_i'\theta^* }\right) \right]
\1\{\Xi_{N,i}\},
\end{align*}
and its first and second derivatives, i.e., the score and Hessian, are 
\begin{align}
S_N\left( \theta^* \right)&=\frac{\partial \ell_N\left( \theta^* \right) }{\partial \theta^* } =\sumi \left( \frac{1}{1+\left(\frac{X_{i1}}{X_{i2}}\right)^{Z_i'\theta^* }}-Y_{i1} \right) Z_i\log\frac{X_{i1}}{X_{i2}}\1\left\{ \Xi_{N,i}\right\}, \label{eq:panel-score}\\
H_N\left( \theta^* \right)&=\frac{\partial^2 \ell_N\left( \theta^* \right) }{\partial \theta^* \partial \theta^{*\prime}} =-\sumi \frac{\left(\frac{X_{i1}}{X_{i2}}\right)^{Z_i'\theta^* }}{\left(1+\left(\frac{X_{i1}}{X_{i2}}\right)^{Z_i'\theta^* }\right)^2} Z_iZ_i'\left(\log\frac{X_{i1}}{X_{i2}}\right)^2 \1\left\{ \Xi_{N,i}\right\}.\label{eq:panel-hessian}
\end{align}

Also note that Assumption \ref{assn:panel-est}(a) implies that $\xn\rightarrow\infty$ as $N\rightarrow\infty$. Combined with the tail conditions in Assumption \ref{assn:panel-tail}(a,b), this further leads to that for $x\ge\xn$,
\begin{align}
    &C^\py_i\left( Z_i\right)(x)^{-\tilde\alpha^\py_i\left( Z_i\right) }\left( 1+D^\py_i\left( Z_i\right)(x)^{-\beta^\py_i\left( Z_i\right) }+r^\py_i\left( x,Z_i\right) \right)\label{eq:panel-approx-C}\\
    &=C^\py_i\left( Z_i\right)(x)^{-\tilde\alpha^\py_i\left( Z_i\right) }\left(1+o(1)\right),\notag
\end{align}
as $N\rightarrow\infty$, almost surely in $Z_i$ and for all $i$.

\bigskip
\noindent\textbf{Part 1.} 
Note that $N_{\Xi}$ is a random variable, so we introduce \(\xi_{N,i}\) and \(\xi_N\) in \eqref{eq:panel-xi-def}, non-random sequences representing the asymptotic proportions of tail observations with switching outcome values. Let us first link \(\xi_{N,i}\) and \(\xi_N\) to $p_{N,i}=\P_i\left( \Xi_{N,i} \right)$ and $p_N=\avgi p_{N,i}$ in the framework of rare event with i.n.i.d.\ observations.

For each $i=1,\cdots,N$, by Assumption \ref{assn:panel-model} on model specification,
\begin{align}
\P_i\left( \left. \Xi_{N,i}\right| Z_i\right) 
&=\P_i\left( \left. Y_{i1}+Y_{i2}=1,X_{i1}\ge  \x_N,X_{i2}\ge 
\x_N\right| Z_i\right) \notag \\
&=\P_i\left( Y_{i1}=1,Y_{i2}=0,X_{i1}\ge  \x_N,X_{i2}\ge 
\x_N|Z_i\right) \notag \\
&\quad+\P_i\left( Y_{i1}=0,Y_{i2}=1,X_{i1}\ge  \x_N,X_{i2}\ge 
\x_N|Z_i\right) \notag \\
&=\P_i\left( Y_{i1}=1,X_{i1}\ge  \x_N|Z_i\right) \P_i
\left( Y_{i2}=0,X_{i2}\ge  \x_N|Z_i\right) \notag \\
&\quad+\P_i\left( Y_{i1}=0,X_{i1}\ge  \x_N|Z_i\right) \P_i
\left( Y_{i2}=1,X_{i2}\ge  \x_N|Z_i\right). \label{eq:nxi}
\end{align}
By Assumption \ref{assn:panel-tail}(a,b) and equation \eqref{eq:panel-approx-C}, the first term in \eqref{eq:nxi} is given by
\begin{align*}
\P_i\left( Y_{i1}=1,X_{i1}\ge  \x_N|Z_i\right) 
&=\P_i\left( X_{i1}\ge  \x_N|Y_{i1}=1,Z_i\right) \P_i
\left( Y_{i1}=1|Z_i\right) \\
&=\x_N^{-\tilde\alpha^\pl_i\left( Z_i\right) }C^\pl_i\left(
Z_i\right)\left(1+o(1)\right)\cdot\P_i
\left( Y_{i1}=1|Z_i\right),
\end{align*}
as $\x_N\rightarrow \infty,$ for all $i$. Similar expressions hold for the other three terms in \eqref{eq:nxi}, and it follows that as $N\rightarrow \infty, $
\begin{align}
    p_{N,i}&=\P_i\left( \Xi_{N,i} \right) =\E_i\left[\P_i\left( \left. \Xi_{N,i}\right| Z_i\right) \right]\label{eq:panel-pi}\\
&=2\E_i\left[\prod_{y\in\{0,1\}}\x_N^{-\tilde\alpha^\py_i\left( Z_i\right) }C^\py_i\left( Z_i\right)\tilde\L^\py_i \left( \x_N,Z_i\right) \P_i(Y_{it}=y|Z_i)\right]\notag \\
&=2\E_i\left[\prod_{y\in\{0,1\}}\x_N^{-\tilde\alpha^\py_i\left( Z_i\right) }C^\py_i\left( Z_i\right) \P_i(Y_{it}=y|Z_i)\right]\left(1+o(1)\right)\notag\\
&=\xi_{N,i}\left(1+o(1)\right).\notag
\end{align}
Accordingly, by the definition of $\xi_N$ in \eqref{eq:panel-xi-def}, 
\begin{align}
    p_N=\avgi  p_{N,i} &=\avgi \xi_{N,i}\left(1+o(1)\right)=\xi_N\left(1+o(1)\right).\label{eq:panel-p}
\end{align}
Therefore, $Np_N\rightarrow\infty$ as $N\rightarrow\infty$ by Assumption \ref{assn:panel-est}(a), so condition (a) in Lemmas \ref{lem:inid-lln} and \ref{lem:inid-clt} (rare event LLN and CLT for i.n.i.d.\ observations) is satisfied.

\bigskip
\noindent\textbf{Part 2.} 
In this part, we derive the asymptotic expression for \[\P_i\left(\left.Y_{i1}=1\right|X_{i1}=x_1,X_{i2}=x_2,Z_i=z,\Xi_{N,i}\right).\]

First, by the tail approximation in Assumption \ref{assn:panel-tail}(a,b), for $i=1,\cdots,N$, for $y\in\{0,1\}$, and for $\P_{Z_i}$-almost all $z\in supp(Z_i)$, the pdf is given by
\begin{align}
f_{i,X_{it}|Y_{it},Z_i}\left( x|y,z\right)&=\frac{\partial\left(1-F_{i,X_{it}|Y_{it},Z_i}\left( x|y,z\right)\right)}{\partial x}\label{eq:panel-f}\\
&= C^\py_i\left( z\right)\tilde\alpha_i^\py\left( z\right)
x^{-\tilde\alpha_i^\py\left( z\right)-1 }\left( 1+B_i^\py\left(x, z\right)\right),\notag
\end{align}
where 
\begin{align}
    &B_i^\py\left(x, z\right) \label{eq:panel-B}\\
    &=x^{-\beta^\py_i\left( z\right)}\left[D_i^\py\left( z\right)\left(1+\frac{\beta^\py_i\left( z\right)}{\tilde\alpha_i^\py\left( z\right)}\right)+x^{\beta^\py_i\left( z\right)}r^\py_i\left(x,z\right)-\frac 1 {\tilde\alpha_i^\py\left( z\right)}x^{\beta^\py_i\left( z\right)+1}\frac{\partial r^\py_i\left(x,z\right)}{\partial x} \right]\notag\\
    &= O\left(x^{-\beta^\py_i\left( z\right)}\right).\notag
\end{align}

Second, by the Bayes' theorem representation, for $x\ge\xn$,
\begin{align}
&\P_i\left(\left.Y_{it}=1\right|X_{it}=x,Z_i=z,X_{it}\ge\xn\right)=\P_i\left(\left.Y_{it}=1\right|X_{it}=x,Z_i=z\right)\label{eq:panel-PY}\\
&=\frac{f_{i,X_{it}|Y_{it},Z_i}\left( x|1,z\right)\P_i\left(\left.Y_{it}=1\right|Z_i=z\right)}{\sum_{y\in\{0,1\}}f_{i,X_{it}|Y_{it},Z_i}\left( x|y,z\right)\P_i\left(\left.Y_{it}=y\right|Z_i=z\right)}\notag\\
&=\frac 1 {1+\frac{f_{i,X_{it}|Y_{it},Z_i}\left( x|0,z\right)\P_i\left(\left.Y_{it}=0\right|Z_i=z\right)}{f_{i,X_{it}|Y_{it},Z_i}\left( x|1,z\right)\P_i\left(\left.Y_{it}=1\right|Z_i=z\right)}}\notag\\ 
&=\frac 1 {1+\frac{C^\po_i\left( z\right)\tilde\alpha_i^\po\left( z\right)x^{-\tilde\alpha_i^\po\left( z\right)-1 }\left( 1+B_i^\po\left(x, z\right)\right)\P_i\left(\left.Y_{it}=0\right|Z_i=z\right)}{C^\pl_i\left( z\right)\tilde\alpha_i^\pl\left( z\right)x^{-\tilde\alpha_i^\pl\left( z\right)-1 }\left( 1+B_i^\pl\left(x, z\right)\right)\P_i\left(\left.Y_{it}=1\right|Z_i=z\right)}}\notag\\ 
&=\frac 1 {1+A_i(z)x^{\tilde\alpha_i^\pl\left( z\right)-\tilde\alpha_i^\po\left( z\right)}}\left( 1+O\left(x^{-\beta^\pl_i\left( z\right)}\right)+O\left(x^{-\beta^\po_i\left( z\right)}\right)\right)\notag\\ 
&=\frac 1 {1+A_i(z)x^{z'\theta^*_0}}\left( 1+O\left(x^{-\beta^{\min}_i\left( z\right)}\right)\right),\notag
\end{align}
where \(A_i(z)=\frac{C^\po_i\left( z\right)\tilde\alpha_i^\po\left( z\right)\P_i\left(\left.Y_{it}=0\right|Z_i=z\right)}{C^\pl_i\left( z\right)\tilde\alpha_i^\pl\left( z\right)\P_i\left(\left.Y_{it}=1\right|Z_i=z\right)}\) and $\beta^{\min}_i\left( z\right)=\min\left\{\beta^\pl_i\left( z\right),\beta^\po_i\left( z\right)\right\}$.
The third and fourth lines in \eqref{eq:panel-PY} follow from the tail approximation in Assumption \ref{assn:panel-tail}(a,b) and equations \eqref{eq:panel-f} and \eqref{eq:panel-B}, and the last line is by additive form of the tail index in \eqref{eq:heter_alpha_panel}.

Finally, we cancel out $A_i(z)$ by conditioning on event $Y_{i1}+Y_{i2}=1$. By the model specification in Assumption \ref{assn:panel-model}, 
\begin{align}
&\P_i\left(\left.Y_{i1}=1\right|X_{i1}=x_1,X_{i2}=x_2,Z_i=z,\Xi_{N,i}\right)\label{eq:panel-PY-cond}\\ 
&=\frac{\P_i\left(\left.Y_{i1}=1\right|X_{i1}=x,Z_i=z,X_{i1}\ge\xn\right)\P_i\left(\left.Y_{i2}=0\right|X_{i2}=x,Z_i=z,X_{i2}\ge\xn\right)}{\sum_{y\in\{0,1\}}\P_i\left(\left.Y_{i1}=y\right|X_{i1}=x,Z_i=z,X_{i1}\ge\xn\right)\P_i\left(\left.Y_{i2}=1-y\right|X_{i2}=x,Z_i=z,X_{i2}\ge\xn\right)}\notag\\
&=\frac 1 {1+\frac{\P_i\left(\left.Y_{i1}=0\right|X_{i1}=x,Z_i=z,X_{i1}\ge\xn\right)\P_i\left(\left.Y_{i2}=1\right|X_{i2}=x,Z_i=z,X_{i2}\ge\xn\right)}{\P_i\left(\left.Y_{i1}=1\right|X_{i1}=x,Z_i=z,X_{i1}\ge\xn\right)\P_i\left(\left.Y_{i2}=0\right|X_{i2}=x,Z_i=z,X_{i2}\ge\xn\right)}}\notag\\
&=\frac 1 {1+\left(\frac{x_1}{x_2}\right)^{z'\theta^*_0}}\left( 1+O\left(\max\{x_1,x_2\}^{-\beta^{\min}_i\left( z\right)}\right)\right),\notag
\end{align}
where we plug in \eqref{eq:panel-PY} to obtain the last equality. Similarly,
\begin{align}
    &\P_i\left(\left.Y_{i1}=0\right|X_{i1}=x_1,X_{i2}=x_2,Z_i=z,\Xi_{N,i}\right)\label{eq:panel-PY-cond0}\\
    &=\frac {\left(\frac{x_1}{x_2}\right)^{z'\theta^*_0}} {1+\left(\frac{x_1}{x_2}\right)^{z'\theta^*_0}}\left( 1+O\left(\max\{x_1,x_2\}^{-\beta^{\min}_i\left( z\right)}\right)\right).\notag
    \end{align}

\bigskip
\noindent\textbf{Part 3.}
Let 
\begin{align}
S_{N,i}= \frac 1 {\sqrt{N\xi_N}}\left(H_{N0}\right)^{-1/2}\left( \frac{1}{1+\left(\frac{X_{i1}}{X_{i2}}\right)^{Z_i'\theta^*_0 }}-Y_{i1} \right) Z_i\log\frac{X_{i1}}{X_{i2}},\label{eq:panel-si-def}
\end{align} 
which is i.n.i.d.\ across $i$ by Assumption \ref{assn:panel-model}(a). Then, according to the score defined in \eqref{eq:panel-score}, 
\[\frac 1 {\sqrt{N\xi_N}}\left(H_{N0}\right)^{-1/2}S_N\left(\theta^*_0\right)=\sumi S_{N,i}\1\left\{ \Xi_{N,i}\right\},\] 
and we can apply Lemma \ref{lem:inid-clt} (rare event CLT for i.n.i.d.\ observations) to obtain its asymptotic normality.

First, for the mean, 
\begin{align*}
&\E_i\left[\left.S_{N,i}\right|\Xi_{N,i}\right] 
=\frac 1 {\sqrt{N\xi_N}}\left(H_{N0}\right)^{-1/2}\E_i\left[\left. \left( \frac{1}{1+\left(\frac{X_{i1}}{X_{i2}}\right)^{Z_i'\theta^*_0 }}-Y_{i1} \right) Z_i\log\frac{X_{i1}}{X_{i2}}\right|\Xi_{N,i}\right]\\ 
&=\frac 1 {\sqrt{N\xi_N}}\left(H_{N0}\right)^{-1/2}\E_i\left[ \left.\left( \frac{1}{1+\left(\frac{X_{i1}}{X_{i2}}\right)^{Z_i'\theta^*_0 }}-\E_i\left[\left.Y_{i1}\right|X_i,Z_i,\Xi_{N,i}\right] \right) Z_i\log\frac{X_{i1}}{X_{i2}}\right|\Xi_{N,i}\right]\\
    &=\frac 1 {\sqrt{N\xi_N}}\left(H_{N0}\right)^{-1/2}\E_i\left[ \left.O\left(\max\{X_{i1},X_{i2}\}^{-\beta^{\min}_i\left( Z_i\right)}\right) Z_i\log\frac{X_{i1}}{X_{i2}}\right|\Xi_{N,i}\right]\\
    &=\frac 1 {\sqrt{N\xi_N}}\left(H_{N0}\right)^{-1/2}o\left(\xn^{-\underline\beta}\right)\E_i\left[ \left. Z_i\log\frac{X_{i1}}{X_{i2}}\right|\Xi_{N,i}\right],
\end{align*}
where the third equality is by \eqref{eq:panel-PY-cond} and $0<\frac{1}{1+\left(\frac{X_{i1}}{X_{i2}}\right)^{Z_i'\theta^*_0 }}<1$, and the last equality is by the lower bound of $\beta^\py_i(Z_i)$ in Assumption \ref{assn:panel-tail}(a). Note that from Assumption \ref{assn:panel-est}(a), $\xn^{-\underline\beta}=o\left(\left(N\xi_N\right)^{-\frac{1+\kappa}{2+\kappa}}\right)$;  from Assumption \ref{assn:panel-est}(b,c), $H_{N0}$ is finite and its smallest eigenvalue is $O\left(\left(N\xi_N\right)^{-\frac{\kappa}{2+\kappa}}\right)$, so the largest eigenvalue of $\left(H_{N0}\right)^{-1/2}$ is $o\left(\left(N\xi_N\right)^{\frac{\kappa}{2(2+\kappa)}}\right)$; from Assumption \ref{assn:panel-est}(c), $\E_i\left[ \left. Z_i\log\frac{X_{i1}}{X_{i2}}\right|\Xi_{N,i}\right]$ is finite. Combining the bounds on all three terms, we have that
\begin{align}
\E_i\left[\left.S_{N,i}\right|\Xi_{N,i}\right] =o\left(\frac 1 {N\xi_N}\right).\label{eq:panel-si-mean}
\end{align}
Therefore, as $p_{N,i}=\xi_{N,i}\left(1+o(1)\right)$ in \eqref{eq:panel-pi}, we obtain that
\begin{align}
    \sumi p_{N,i}\E_i\left[\left.S_{N,i}\right|\Xi_{N,i}\right] =\sumi \xi_{N,i}\cdot \left(1+o(1)\right)\cdot o\left(\frac 1 {N\xi_N}\right)= o\left(1\right).\label{eq:panel-s-sum-mean}
    \end{align}

Second, for the variance,
\begin{align*}
&\V_i\left[\left.S_{N,i}\right|\Xi_{N,i}\right]\\
&=\E_i\left[\left.S_{N,i}S_{N,i}'\right|\Xi_{N,i}\right]-\E_i\left[\left.S_{N,i}\right|\Xi_{N,i}\right] \E_i\left[\left.S_{N,i}\right|\Xi_{N,i}\right]' \\
&= \frac 1 {N\xi_N}\left(H_{N0}\right)^{-1/2}\E_i\left[\left. \left( \frac{1}{1+\left(\frac{X_{i1}}{X_{i2}}\right)^{Z_i'\theta^*_0 }}-Y_{i1} \right)^2 Z_iZ_i'\left(\log\frac{X_{i1}}{X_{i2}}\right)^2\right|\Xi_{N,i}\right]\left(H_{N0}\right)^{-1/2} \\ 
&\quad+o\left(\frac 1 {\left(N\xi_N\right)^2}\right).
\end{align*}
In the second equality, the first term is by the definition of $S_{N,i}$ in \eqref{eq:panel-si-def}, and the second term is by \eqref{eq:panel-si-mean}. For the expectation term, by the law of total expectation,
\begin{align*}
& \E_i\left[\left. \left( \frac{1}{1+\left(\frac{X_{i1}}{X_{i2}}\right)^{Z_i'\theta^*_0 }}-Y_{i1} \right)^2 Z_iZ_i'\left(\log\frac{X_{i1}}{X_{i2}}\right)^2\right|\Xi_{N,i}\right]\\
&= \E_i\left[\left. \E_i\left[\left.\left( \frac{1}{1+\left(\frac{X_{i1}}{X_{i2}}\right)^{Z_i'\theta^*_0 }}-Y_{i1} \right)^2\right|X_i,Z_i,\Xi_{N,i}\right] Z_iZ_i'\left(\log\frac{X_{i1}}{X_{i2}}\right)^2\right|\Xi_{N,i}\right].
\end{align*}
Note that the conditional expectation in the inner term is given by 
\begin{align*}
    &\E_i\left[\left.\left( \frac{1}{1+\left(\frac{X_{i1}}{X_{i2}}\right)^{Z_i'\theta^*_0 }}-Y_{i1} \right)^2\right|X_i,Z_i,\Xi_{N,i}\right]\\ 
&=\E_i\left[\left.\left( \left(Y_{i1}-\E_i\left[\left.Y_{i1}\right|X_i,Z_i,\Xi_{N,i}\right]\right)+\left(\E_i\left[\left.Y_{i1}\right|X_i,Z_i,\Xi_{N,i}\right]-\frac{1}{1+\left(\frac{X_{i1}}{X_{i2}}\right)^{Z_i'\theta^*_0 }}\right) \right)^2\right|X_i,Z_i,\Xi_{N,i}\right]\\ 
&=\V_i\left[\left.Y_{i1}\right|X_i,Z_i,\Xi_{N,i}\right]+\E_i\left[\left.\left( \E_i\left[\left.Y_{i1}\right|X_i,Z_i,\Xi_{N,i}\right]-\frac{1}{1+\left(\frac{X_{i1}}{X_{i2}}\right)^{Z_i'\theta^*_0 }}\right)^2\right|X_i,Z_i,\Xi_{N,i}\right]\\
&=\P_i\left(\left.Y_{i1}=1\right|X_i,Z_i,\Xi_{N,i}\right)\P_i\left(\left.Y_{i1}=0\right|X_i,Z_i,\Xi_{N,i}\right)+o\left(\xn^{-2\underline\beta}\right)\\
&=\frac{\left(\frac{X_{i1}}{X_{i2}}\right)^{Z_i'\theta^*_0 }}{\left(1+\left(\frac{X_{i1}}{X_{i2}}\right)^{Z_i'\theta^*_0 }\right)^2}\left(1+o\left(\xn^{-\underline\beta}\right)\right)+o\left(\xn^{-2\underline\beta}\right),
\end{align*}
where the last three lines are by \eqref{eq:panel-PY-cond}, \eqref{eq:panel-PY-cond0}, and $0<\frac{1}{1+\left(\frac{X_{i1}}{X_{i2}}\right)^{Z_i'\theta^*_0 }}<1$.
Therefore, we have that
\begin{align*}
& \E_i\left[\left. \left( \frac{1}{1+\left(\frac{X_{i1}}{X_{i2}}\right)^{Z_i'\theta^*_0 }}-Y_{i1} \right)^2 Z_iZ_i'\left(\log\frac{X_{i1}}{X_{i2}}\right)^2\right|\Xi_{N,i}\right]\\
&= \E_i\left[\left. \frac{\left(\frac{X_{i1}}{X_{i2}}\right)^{Z_i'\theta^*_0 }}{\left(1+\left(\frac{X_{i1}}{X_{i2}}\right)^{Z_i'\theta^*_0 }\right)^2} Z_iZ_i'\left(\log\frac{X_{i1}}{X_{i2}}\right)^2\right|\Xi_{N,i}\right]\left(1+o\left(\xn^{-\underline\beta}\right)\right)\\ 
&\quad+\E_i\left[\left. Z_iZ_i'\left(\log\frac{X_{i1}}{X_{i2}}\right)^2\right|\Xi_{N,i}\right]o\left(\xn^{-\underline\beta}\right)\\ 
&= H_{N0,i}\left(1+o\left(1\right)\right)+o\left(\xn^{-2\underline\beta}\right),
\end{align*}
where the second equality is by the definition of $H_{N0,i}$ and the fact that $\E_i\left[\left. Z_iZ_i'\left(\log\frac{X_{i1}}{X_{i2}}\right)^2\right|\Xi_{N,i}\right]$ is finite by Assumption \ref{assn:panel-est}(c).
Substituting this back to the expression of $\V\left[S_{N,i}\right]$, we have that as $N\rightarrow\infty$,
\begin{align*}
&\V_i\left[\left.S_{N,i}\right|\Xi_{N,i}\right] \\ 
 &=\frac 1 {N\xi_N}\left[\left(H_{N0}\right)^{-1/2}H_{N0,i}\left(H_{N0}\right)^{-1/2}\left(1+o\left(1\right)\right) +\left(H_{N0}\right)^{-1}o\left(\xn^{-\underline\beta}\right)\right]+o\left(\frac 1 {\left(N\xi_N\right)^2}\right)\notag\\ 
&=\frac 1 {N\xi_N}\left(H_{N0}\right)^{-1/2}H_{N0,i}\left(H_{N0}\right)^{-1/2}\left(1+o\left(1\right)\right)+o\left(\frac 1 {\left(N\xi_N\right)^2}\right)
\end{align*}
The argument for the second line is similar to that for \eqref{eq:panel-si-mean}: $\xn^{-2\underline\beta}=o\left(\left(N\xi_N\right)^{-\frac{2\left(1+\kappa\right)}{2+\kappa}}\right)$ by Assumption \ref{assn:panel-est}(a), and the largest eigenvalue of $\left(H_{N0}\right)^{-1}$ is $o\left(\left(N\xi_N\right)^{\frac{\kappa}{2+\kappa}}\right)$ by Assumption \ref{assn:panel-est}(b,c), so $\left(H_{N0}\right)^{-1}o\left(\xn^{-\underline\beta}\right)=o\left(\frac 1 {N\xi_N}\right)$. Then, as $p_{N,i}=\xi_{N,i}\left(1+o(1)\right)$ in \eqref{eq:panel-pi}, we obtain that
\begin{align}
    &\sumi p_{N,i}\V_i\left[\left.S_{N,i}\right|\Xi_{N,i}\right] \label{eq:panel-s-sum-var}\\ 
    &=\sumi \xi_{N,i}\left(1+o\left(1\right)\right)\cdot\left[\frac 1 {N\xi_N}\left(H_{N0}\right)^{-1/2}H_{N0,i}\left(H_{N0}\right)^{-1/2}\left(1+o\left(1\right)\right)+o\left(\frac 1 {\left(N\xi_N\right)^2}\right)\right]\notag\\
    &=\left(H_{N0}\right)^{-1/2}\left(H_{N0}\right)\left(H_{N0}\right)^{-1/2}\left(1+o\left(1\right)\right)+o\left(1\right)\rightarrow \I_{d_Z}.\notag
    \end{align}

For Lemma \ref{lem:inid-clt}, condition (a) is satisfied by \eqref{eq:panel-p} in Part 1, condition (b) is by Assumption \ref{assn:panel-est}(c), and condition (c) is by Assumption \ref{assn:panel-est}(b).
Therefore, as $\theta^*_0\in \intr\left(\theta^*\right)$, we have that as $N\rightarrow\infty$,
\begin{align}
    \frac 1 {\sqrt{N\xi_N}}\left(H_{N0}\right)^{-1/2}S_N\left(\theta^*_0\right)=\sumi S_{N,i}\I\left\{\Xi_{N,i}\right\} \overset{d}{\rightarrow }\mathcal{N}\left( 0,\I_{d_Z} \right), \label{eq:panel-si-clt}
\end{align}
where $\sumi p_{N,i}\E_i\left[\left.S_{N,i}\right|\Xi_{N,i}\right]$ and $\sumi p_{N,i}\V_i\left[\left.S_{N,i}\right|\Xi_{N,i}\right]$ are given in \eqref{eq:panel-s-sum-mean} and \eqref{eq:panel-s-sum-var}, respectively.

\bigskip
\noindent\textbf{Part 4.}
Similarly, for the Hessian matrix,
\begin{align}
    &\frac{1}{N\xi_N}\left(H_{N0}\right)^{-1/2}\cdot H_N\left( \theta^*_0 \right)\cdot\left(H_{N0}\right)^{-1/2}\label{eq:panel-hi-lln}\\ 
    &=\frac{1}{N\xi_N}\left(H_{N0}\right)^{-1/2}\cdot\left[ -\sumi \frac{\left(\frac{X_{i1}}{X_{i2}}\right)^{Z_i'\theta^*_0 }}{\left(1+\left(\frac{X_{i1}}{X_{i2}}\right)^{Z_i'\theta^*_0 }\right)^2} Z_iZ_i'\left(\log\frac{X_{i1}}{X_{i2}}\right)^2 \1\left\{ \Xi_{N,i}^\py\right\}\right]\cdot\left(H_{N0}\right)^{-1/2}\notag\\ 
    &=-\left(H_{N0}\right)^{-1/2}\cdot\frac{1}{N\xi_N}\sumi p_{N,i}\E_i\left[ \left.\frac{\left(\frac{X_{i1}}{X_{i2}}\right)^{Z_i'\theta^*_0 }}{\left(1+\left(\frac{X_{i1}}{X_{i2}}\right)^{Z_i'\theta^*_0 }\right)^2} Z_iZ_i'\left(\log\frac{X_{i1}}{X_{i2}}\right)^2 \right|\Xi_{N,i}\right]\cdot\left(H_{N0}\right)^{-1/2}\cdot \left(1 + o_p(1)\right)\notag\\ 
    &=-\left(H_{N0}\right)^{-1/2}\cdot\frac{1}{N\xi_N}\sumi \xi_{N,i}\E_i\left[ \left.\frac{\left(\frac{X_{i1}}{X_{i2}}\right)^{Z_i'\theta^*_0 }}{\left(1+\left(\frac{X_{i1}}{X_{i2}}\right)^{Z_i'\theta^*_0 }\right)^2} Z_iZ_i'\left(\log\frac{X_{i1}}{X_{i2}}\right)^2 \right|\Xi_{N,i}\right]\cdot\left(H_{N0}\right)^{-1/2}\cdot \left(1 + o_p(1)\right)\notag\\ 
    &=-\left(H_{N0}\right)^{-1/2}\cdot H_{N0}\cdot\left(H_{N0}\right)^{-1/2}\cdot \left(1 + o_p(1)\right)\notag\\ 
    &\overset{p}{\rightarrow }-\I_{d_Z}.\notag
\end{align}
The first equality is by the definition of the Hessian matrix \eqref{eq:panel-hessian}. The second equality follows from Lemma \ref{lem:inid-lln} (rare event LLN for i.n.i.d.\ observations), where condition (a) is satisfied by \eqref{eq:panel-p} in Part 1, and condition (b) is by Assumption \ref{assn:panel-est}(c). The third equality is by $p_{N,i}=\xi_{N,i}\left(1+o(1)\right)$ in \eqref{eq:panel-pi}. The fourth equality is by the definition of $H_{N0}$.

\bigskip
\noindent\textbf{Part 5.}
Let $\zeta_N=\left(H_{N0}\right)^{1/2}\left(\theta^*-\theta^*_0\right)$, $\zeta_{N0}=\left(H_{N0}\right)^{1/2}\theta^*_0$, and $W_i=\left(H_{N0}\right)^{-1/2}Z_i$. Then, the log likelihood function can be rewritten as
\begin{align*}
\tilde\ell_N\left( \zeta_N \right)
= \sumi\left[ W_i'\left(\zeta_N + \zeta_{N0}\right) (1-Y_{i1})\log\frac{X_{i1}}{X_{i2}}-\log\left(1+\left(\frac{X_{i1}}{X_{i2}}\right)^{W_i'\left(\zeta_N + \zeta_{N0}\right) }\right) \right]
\1\{\Xi_{N,i}\},
\end{align*}
the corresponding score and Hessian are denoted by $\tilde S_N\left( \zeta_N \right)$ and $\tilde H_N\left( \zeta_N \right)$, respectively,
and the MLE estimate is denoted by $\hat\zeta_N$.

First, the Hessian matrix is given by 
\[\tilde H_N\left( \zeta_N \right)=-\sumi \frac{\left(\frac{X_{i1}}{X_{i2}}\right)^{W_i'\left(\zeta_N + \zeta_{N0}\right)}}{\left(1+\left(\frac{X_{i1}}{X_{i2}}\right)^{W_i'\left(\zeta_N + \zeta_{N0}\right)}\right)^2} W_iW_i'\left(\log\frac{X_{i1}}{X_{i2}}\right)^2 \1\left\{ \Xi_{N,i}^\py\right\}.\] 
It is positive definite for all $\zeta_N=\left(H_{N0}\right)^{1/2}\left(\theta^*-\theta^*_0\right)$ with $\theta^*\in\Theta^*$. The reason is that Assumption \ref{assn:panel-est}(b,c) implies that both $H_{N0}$ and $\E_i\left[ \left.Z_iZ_i' \right|\Xi_{N,i}\right]$ are finite and positive definite, and thus $\E_i\left[ \left.W_iW_i' \right|\Xi_{N,i}\right]$ is finite and positive definite. 
Then, the log likelihood function is $\tilde\ell_N\left( \zeta_N \right)$ is strictly concave over its domain, and the MLE estimate $\hat\zeta_N$ is unique.

Second, let \[\mathcal U_{N,C} = \left\{u\in\mathbb R^{d_Z}:\,\frac{1}{\sqrt{N\xi_N}}\left(H_{N0}\right)^{-1/2}u+\theta^*_0\in\theta^*\text{ and }\|u\|=C\right\}.\] Note that $\mathcal U_{N,C}\neq\emptyset$ for any $C>0$. This follows from three facts: first, $\theta^*$ is a convex cone; second, $\intr\left(\theta^*\right)\neq\emptyset$ as $\theta^*_0\in \intr\left(\theta^*\right)$; and third, $H_{N0}$ is finite and full rank by Assumption \ref{assn:panel-est}(b,c). Then, let $u$ be an arbitrary non-random vector in $\mathcal U_{N,C}$. Applying the second-order Taylor expansion of $\tilde\ell_N\left( \frac{u}{\sqrt{N\xi_N}}\right)$ around $\zeta_N=0$, we have that
\begin{equation*}
\tilde\ell_N\left( \frac{u}{\sqrt{N\xi_N}}\right) -\tilde\ell_N\left( 0\right) = \frac{1}{\sqrt{N\xi_N}}u'\tilde S_N\left( 0\right) + \frac{1}{2N\xi_N}u'\tilde H_N\left( 0\right)u + o_{p}(1)
\end{equation*}
From \eqref{eq:panel-si-clt} and \eqref{eq:panel-hi-lln}, we have that as $N\rightarrow\infty$,
\begin{align}
    \frac{1}{\sqrt{N\xi_N}}\tilde S_N\left( 0 \right)&=\frac{1}{\sqrt{N\xi_N}}S_N\left(\theta^*_0\right) \overset{d}{\rightarrow }\mathcal{N}\left( 0,\I_{d_Z} \right),\label{eq:panel-zeta-clt}\\
    \frac{1}{N\xi_N}\tilde H_N\left( 0 \right)
    &=\frac{1}{N\xi_N}\left(H_{N0}\right)^{-1/2}\cdot H_N\left( \theta^*_0 \right)\cdot\left(H_{N0}\right)^{-1/2}\overset{p}{\rightarrow }-\I_{d_Z}. \label{eq:panel-zeta-lln}
\end{align}
This implies that when $C$ is large enough, the quadratic term dominates the linear one with an arbitrarily large probability. That is, for any $\varepsilon>0$, there exists a $C>0$ such that 
\begin{equation*}
\underset{N}{\lim \sup }\,\P\left( \sup_{u\in \mathcal U_{N,C}}\tilde\ell_N\left( \frac{u}{\sqrt{N\xi_N}}\right) <\tilde\ell_N\left(0\right) \right) >1-\varepsilon.
\end{equation*}
Therefore, $\tilde\ell_N\left( \cdot \right)$ must have at least one local maximizer, which is of order $O_{p}\left( \sqrt{N\xi_N}\right)$. And by the uniqueness of the MLE, this local maximizer is the global maximizer, and $\hat\zeta_N=O_{p}\left( \sqrt{N\xi_N}\right)$.

Finally, by the first order condition, $\tilde S_N\left( \hat\zeta_N\right)=0$. Applying the first-order Taylor expansion around $\zeta_N=0$, we have that 
\[0=\tilde S_N\left( \hat\zeta_N\right)=\tilde S_N\left( 0\right)+\tilde H_N\left( 0\right)\hat\zeta_N+o_{p}(1),\] which implies that 
\[\sqrt{N\xi_N}\hat\zeta_N=-\left(\frac{1}{N\xi_N}\tilde H_N\left( 0\right)\right)^{-1}\cdot\frac{1}{\sqrt{N\xi_N}}\left(\tilde S_N\left( 0\right)+o_{p}(1)\right)\overset{d}{\rightarrow }\mathcal{N}\left( 0,\I_{d_Z}\right),\]
following from \eqref{eq:panel-zeta-clt} and \eqref{eq:panel-zeta-lln}. 
This is equivalent to
\begin{align*}
\sqrt{N\xi_N}\left(H_{N0}\right)^{1/2}\left( \hat\theta^*-\theta^*_0\right) \overset{d}{\rightarrow }\mathcal{N}\left( 0,\I_{d_Z}\right). 
\end{align*}
\hfill\end{proof}

\begin{proposition}[Panel data: proportion of tail switchers]\label{prop:panel-nxi}
Suppose Assumptions \ref{assn:panel-model}--\ref{assn:panel-est} hold.  as $N\rightarrow\infty$, 
\begin{align*}
\frac{N_{\Xi}}{N}=\xi_N\left(1+o_p(1)\right).
\end{align*}
\end{proposition}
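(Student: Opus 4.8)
The plan is to follow the same second-moment route as in the proof of Proposition~\ref{prop:c-s-ny}, now adapted to the i.n.i.d.\ panel setting. Write $N_{\Xi}=\sumi \1\{\Xi_{N,i}\}$. By Assumption~\ref{assn:panel-model}(a) the summands are independent across $i$ (though not identically distributed), each Bernoulli with $\P_i(\Xi_{N,i})=p_{N,i}$. I would compute the mean and variance of $N_{\Xi}/(N\xi_N)$, show that its second moment around $1$ vanishes, and conclude via Chebyshev's inequality.

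For the mean, $\E[N_{\Xi}/N]=\avgi p_{N,i}=p_N=\xi_N(1+o(1))$, where the last equality is \eqref{eq:panel-p} from Part~1 of the proof of Theorem~\ref{thm:panel}; hence $\E[N_{\Xi}/(N\xi_N)]-1=o(1)$. For the variance, independence across $i$ gives
\[
\V\!\left[\frac{N_{\Xi}}{N}\right]=\frac{1}{N^{2}}\sumi p_{N,i}(1-p_{N,i})\le \frac{1}{N^{2}}\sumi p_{N,i}=\frac{p_N}{N}=\frac{\xi_N}{N}(1+o(1)).
\]
Combining the two, $\E\!\left[(N_{\Xi}/(N\xi_N)-1)^{2}\right]=\V[N_{\Xi}/(N\xi_N)]+(\E[N_{\Xi}/(N\xi_N)]-1)^{2}\le \frac{1}{N\xi_N}(1+o(1))+o(1)\to 0$, since $N\xi_N\to\infty$ by Assumption~\ref{assn:panel-est}(a) (equivalently $Np_N\to\infty$ via \eqref{eq:panel-p}). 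Chebyshev's inequality then upgrades convergence in the second moment to $N_{\Xi}/(N\xi_N)\overset{p}{\rightarrow}1$, i.e.\ $N_{\Xi}/N=\xi_N(1+o_p(1))$.

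I do not expect a genuine obstacle here: once $N_{\Xi}$ is recognized as a sum of independent (not necessarily identical) Bernoulli indicators, the argument is a one-line second-moment bound. The only points demanding mild care are (i) invoking independence across $i$ rather than the i.i.d.\ structure used in Proposition~\ref{prop:c-s-ny}; (ii) carrying the additive $o(1)$ term explicitly through the variance bound, exactly as remarked after Proposition~\ref{prop:c-s-ny}; and (iii) reusing the already-established identities $p_{N,i}=\xi_{N,i}(1+o(1))$ and $p_N=\xi_N(1+o(1))$ from \eqref{eq:panel-pi}--\eqref{eq:panel-p}, so that no new uniformity over $i$ needs to be established at this stage.
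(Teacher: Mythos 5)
Your proposal is correct and follows essentially the same route as the paper's proof: a mean computation via \eqref{eq:panel-p}, an independence-based variance bound $\V[N_{\Xi}/N]\le \xi_N(1+o(1))/N$, a second-moment bound around $1$ that vanishes because $N\xi_N\to\infty$ under Assumption \ref{assn:panel-est}(a), and Chebyshev's inequality. The only cosmetic difference is that the paper makes explicit that the variances $\V_i$ are taken conditional on the individual-specific parameters $\{\lambda_i,\C_i\}$, which is implicit in your treatment of the indicators as independent Bernoulli$(p_{N,i})$ variables.
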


\noindent\begin{proof} First, for the mean of $N_{\Xi}/N$, 
\begin{align*}
\E\left[\frac{N_{\Xi}}{N}\right]
=\E\left[\avgi\1\{\Xi_{N,i}\}\right]
=\avgi\P_i\left(\Xi_{N,i}\right)
=\xi_N\left(1+o(1)\right),
\end{align*}
where the second equality follows from the independence across $i$ in Assumption \ref{assn:panel-model}(a), and the last equality follows from \eqref{eq:panel-p}. Second, let $\V_i$ be the variance given $\left\{\lambda_i,\C_i\right\}$. Similarly, based on the independence across $i$ and \eqref{eq:panel-p},
\begin{align*}
\V\left[\frac{N_{\Xi}}{N}\right]
&=\V\left[\avgi\1\{\Xi_{N,i}\}\right]
=\frac 1 {N^2}\sumi\V_i\left[\1\{\Xi_{N,i}\}\right]
=\frac 1 {N^2}\sumi\P_i\left(\Xi_{N,i}\right)\left(1-\P_i\left(\Xi_{N,i}\right)\right),\\
&\le\frac 1 {N^2}\sumi\P_i\left(\Xi_{N,i}\right)=\frac{\xi_N} N \left(1+o(1)\right).
\end{align*}

Together, we have that as $N\rightarrow \infty$, 
\begin{align*}
\E\left[\left(\frac{N_{\Xi}}{N\xi_N}-1\right)^2\right]&=\V\left[\frac{N_{\Xi}}{N\xi_N}\right]+o(1)=\frac 1 {\xi_n^2}\V\left[\frac{N_{\Xi}}{N}\right]+o(1)\\
&\le\frac 1 {N\xi_N} \left(1+o(1)\right)+o(1)\rightarrow 0,
\end{align*}
where the convergence to 0 follows from Assumption \ref{assn:panel-est}(a). As convergence in the second moment implies convergence in probability, we obtain that as $N\rightarrow \infty$,
\begin{align*}
\frac{N_{\Xi}}{N}\rightarrow\xi_N\left(1+o_p(1)\right).
\end{align*}
\hfill\end{proof}

\begin{remark} 
\normalfont{
Combining Theorem \ref{thm:panel} with Proposition \ref{prop:panel-nxi}, we have that as $N\rightarrow\infty$,
\begin{align*}
\sqrt{N_{\Xi}}\left(H_{N0}\right)^{1/2}\left( \hat\theta^*-\theta^*_0\right) \overset{d}{\rightarrow }\mathcal{N}\left( 0,\I_{d_Z}\right). 
\end{align*}
}
\end{remark}

\section{Additional Tables and Figures}\label{sec:app-tab-fig}
Tables \ref{tab:sim-exp1-param}--\ref{tab:sim-exp1-ape} provide detailed results, including bias, standard deviation, and RMSE, for various estimands across all model specifications in Monte Carlo Experiment 1. See Table \ref{tab:sim-exp1-param} for tail parameters, Table \ref{tab:sim-exp1-elas} for extreme elasticity, Table \ref{tab:sim-exp1-py} for conditional probability, and Table \ref{tab:sim-exp1-ape} for partial effects. Figure \ref{fig:sim-exp2-lps} shows the scatter plots of the LPS from all Monte Carlo repetitions in Experiment 2 with $\alpha_X = 1$ and $\alpha_{\varepsilon}=1$.

Figure \ref{fig:app-hist} and Tables \ref{tab:app-stat}--\ref{tab:app-lps-ape-c} present more details for the empirical example on housing prices and bank riskiness. Figure \ref{fig:app-hist} depicts the histograms for distributions of $X_{it}|Y_{it}=y$ for the baseline sample. Table \ref{tab:app-stat} summarizes the descriptive statistics for all samples. Table \ref{tab:app-lps-ape-all} reports a forecasting comparison across estimators for all samples, along with the parameter and APE estimates based on the tail estimator. Lastly, Table \ref{tab:app-lps-ape-c} shows a robustness check across different values of level $c$ for the baseline sample.

\begin{table}[h]
\caption{Parameter estimation - Experiment 1}
\label{tab:sim-exp1-param}
\begin{center}
\scalebox{.8}{
\begin{tabular}{ll|rrr|rrr|rrr|rrr} \hline \hline 
& & \multicolumn{3}{c|}{$\alpha_X=0.5$} & \multicolumn{3}{c|}{$\alpha_X=1$} &\multicolumn{3}{c|}{$\alpha_X=1.5$} &\multicolumn{3}{c}{$\alpha_X=2$} \\
& & Bias & SD & \footnotesize RMSE & Bias & SD &\footnotesize RMSE & Bias & SD &\footnotesize RMSE & Bias & SD &\footnotesize RMSE \\ \hline
\multirow2{*}{$\alpha_{\varepsilon}=0.5$}&$\hat\alpha^\po$ &0.002 &0.131 &0.131 &-0.020 &0.180 &0.181 &-0.025 &0.237 &0.238 &-0.107 &0.289 &0.309 \\ 
&$\hat\alpha^\pl$ &0.003 &0.062 &0.062 &-0.005 &0.118 &0.118 &-0.036 &0.174 &0.178 &-0.075 &0.231 &0.243 \\ \hline
\multirow2{*}{$\alpha_{\varepsilon}=1$}&$\hat\alpha^\po$ &0.026 &0.186 &0.188 &-0.023 &0.253 &0.254 &-0.081 &0.279 &0.290 &-0.155 &0.325 &0.360 \\ 
&$\hat\alpha^\pl$ &0.004 &0.063 &0.063 &0.010 &0.124 &0.124 &-0.005 &0.182 &0.182 &-0.033 &0.238 &0.241 \\ \hline
\multirow2{*}{$\alpha_{\varepsilon}=1.5$}&$\hat\alpha^\po$ &0.067 &0.269 &0.277 &-0.043 &0.301 &0.304 &-0.144 &0.327 &0.357 &-0.277 &0.367 &0.460 \\ 
&$\hat\alpha^\pl$ &0.004 &0.063 &0.063 &0.011 &0.124 &0.125 &0.003 &0.185 &0.185 &-0.017 &0.243 &0.244 \\ \hline
\multirow2{*}{$\alpha_{\varepsilon}=2$}&$\hat\alpha^\po$ &0.105 &0.326 &0.342 &-0.095 &0.353 &0.365 &-0.267 &0.373 &0.458 &-0.427 &0.387 &0.576 \\ 
&$\hat\alpha^\pl$ &0.004 &0.063 &0.063 &0.011 &0.124 &0.125 &0.004 &0.185 &0.185 &-0.013 &0.244 &0.244 \\ \hline

\end{tabular} 
} 
\end{center}
{\footnotesize {\em Notes:} Bias, SD, and RMSE are calculated based on $N_{sim}=1000$ repetitions and with respect to the theoretical values as the tails, $\alpha^\po=\alpha_X+\alpha_{\varepsilon}$ and $\alpha^\pl=\alpha_X$.}\setlength{\baselineskip}{4mm}
\end{table}

\floatpagestyle{plain}
\begin{table}[h]
\caption{Extreme elasticity estimation - Experiment 1}
\label{tab:sim-exp1-elas}
\begin{center}
\scalebox{.675}{
\begin{tabular}{ll|rrr|rrr|rrr|rrr} \hline \hline 
& & \multicolumn{3}{c|}{$\alpha_X=0.5$} & \multicolumn{3}{c|}{$\alpha_X=1$} &\multicolumn{3}{c|}{$\alpha_X=1.5$} &\multicolumn{3}{c}{$\alpha_X=2$} \\
& & Bias & SD & \footnotesize RMSE & Bias & SD &\footnotesize RMSE & Bias & SD &\footnotesize RMSE & Bias & SD &\footnotesize RMSE \\ \hline
\multirow{5}{*}{\rotatebox{90}{$\alpha_{\varepsilon}=0.5$}} & Tail &0.00 &0.15 &0.15 &0.01 &0.21 &0.21 &-0.01 &0.30 &0.30 &0.03 &0.36 &0.37 \\ 
& Logit, tail $X$ &0.47 &0.04 &0.47 &0.23 &0.08 &0.25 &0.07 &0.10 &0.12 &-0.02 &0.11 &0.11 \\ 
& Logit, all $X$ &-17.49 &10.26 &20.27 &-2.21 &0.81 &2.35 &-1.21 &0.29 &1.25 &-0.84 &0.17 &0.86 \\ 
& Local Logit &2091.55 &29194.28 &29215.53 &-39.20 &1033.30 &1033.48 &-2.70 &42.08 &42.14 &-0.39 &9.44 &9.44 \\ 
& Local linear &0.47 &0.07 &0.47 &-3.88 &10.72 &11.40 &-0.30 &1.17 &1.21 &-0.11 &1.77 &1.77 \\ \hline
\multirow{5}{*}{\rotatebox{90}{$\alpha_{\varepsilon}=1$}} & Tail &-0.02 &0.20 &0.20 &0.03 &0.29 &0.29 &0.08 &0.34 &0.35 &0.12 &0.40 &0.42 \\ 
& Logit, tail $X$ &1.00 &0.01 &1.00 &0.87 &0.19 &0.89 &0.50 &0.29 &0.58 &0.31 &0.26 &0.40 \\ 
& Logit, all $X$ &-14.24 &11.21 &17.85 &-16.82 &1.59 &16.90 &-6.49 &0.46 &6.51 &-4.11 &0.26 &4.11 \\ 
& Local Logit &202.17 &1870.36 &1877.34 &-129.29 &1661.59 &1664.55 &-8.35 &363.05 &362.97 &-1.92 &30.73 &30.78 \\ 
& Local linear &0.96 &0.08 &0.97 &4.28 &46.29 &46.47 &-3.25 &56.96 &57.03 &-0.49 &3.34 &3.37 \\ \hline
\multirow{5}{*}{\rotatebox{90}{$\alpha_{\varepsilon}=1.5$}} & Tail &-0.06 &0.28 &0.28 &0.05 &0.33 &0.33 &0.15 &0.38 &0.40 &0.26 &0.45 &0.52 \\ 
& Logit, tail $X$ &1.50 &0.01 &1.50 &1.44 &0.51 &1.53 &1.19 &0.42 &1.26 &0.92 &0.48 &1.04 \\ 
& Logit, all $X$ &-12.09 &12.31 &16.81 &-30.68 &1.76 &30.73 &-10.91 &0.53 &10.92 &-6.47 &0.31 &6.48 \\ 
& Local Logit &1.50 &0.00 &1.50 &-860.37 &7495.52 &7512.01 &-100.10 &1580.91 &1583.14 &-11.44 &96.72 &97.35 \\ 
& Local linear &1.46 &0.08 &1.47 &3.06 &159.59 &159.54 &2.16 &81.37 &81.36 &-0.95 &7.16 &7.21 \\ \hline
\multirow{5}{*}{\rotatebox{90}{$\alpha_{\varepsilon}=2$}} & Tail &-0.10 &0.33 &0.35 &0.11 &0.38 &0.39 &0.27 &0.41 &0.49 &0.41 &0.47 &0.63 \\ 
& Logit, tail $X$ &2.00 &0.01 &2.00 &1.81 &1.10 &2.12 &1.84 &0.40 &1.89 &1.60 &0.59 &1.70 \\ 
& Logit, all $X$ &-11.67 &12.40 &16.57 &-29.01 &8.22 &30.12 &-13.84 &0.56 &13.85 &-7.93 &0.31 &7.94 \\ 
& Local Logit &2.00 &0.00 &2.00 &14.23 &1434.99 &1410.94 &-137.52 &610.54 &625.27 &-34.39 &216.59 &219.19 \\ 
& Local linear &1.96 &0.08 &1.97 &7.36 &53.36 &53.84 &-13.43 &151.57 &152.08 &-3.60 &18.53 &18.87 \\ \hline

\end{tabular} 
} 
\end{center}
{\footnotesize {\em Notes:} Bias, SD, and RMSE are calculated based on $N_{sim}=1000$ repetitions and with respect to the theoretical values as the tails, $-|\alpha^\pl-\alpha^\po|=-\alpha_{\varepsilon}$. Evaluated at the 97.5th percentile of the distribution of $X$.}\setlength{\baselineskip}{4mm}
\end{table}

\clearpage
\floatpagestyle{empty}
\begin{table}[h]
\caption{$\hat \P(Y=1|X=x)$ - Experiment 1}
\label{tab:sim-exp1-py}
\begin{center}
\vspace{-1em}
\scalebox{.8}{
\begin{tabular}{lll|rrr|rrr|rrr|rrr} \hline \hline 
& & & \multicolumn{3}{c|}{$\alpha_X=0.5$} & \multicolumn{3}{c|}{$\alpha_X=1$} &\multicolumn{3}{c|}{$\alpha_X=1.5$} &\multicolumn{3}{c}{$\alpha_X=2$} \\
& & & Bias & SD & \footnotesize RMSE & Bias & SD &\footnotesize RMSE & Bias & SD &\footnotesize RMSE & Bias & SD &\footnotesize RMSE \\ \hline
\multirow{20}{*}{\rotatebox{90}{$\alpha_{\varepsilon}=0.5$}} & \multirow{5}{*}{\rotatebox{90}{$x:90\%$}} & Tail &0.01 &0.03 &0.03 &0.03 &0.07 &0.07 &0.02 &0.09 &0.09 &0.03 &0.09 &0.09 \\ 
& & Logit, tail $X$ &-0.88 &0.05 &0.88 &-0.68 &0.08 &0.68 &-0.60 &0.07 &0.61 &-0.55 &0.07 &0.56 \\ 
& & Logit, all $X$ &-0.18 &0.13 &0.22 &-0.13 &0.03 &0.13 &-0.07 &0.02 &0.08 &-0.05 &0.01 &0.05 \\ 
& & Local Logit &0.01 &0.05 &0.06 &0.00 &0.04 &0.04 &0.00 &0.03 &0.03 &-0.00 &0.03 &0.03 \\ 
& & Local linear &-0.41 &0.02 &0.41 &-0.12 &0.06 &0.13 &-0.01 &0.02 &0.02 &-0.00 &0.02 &0.02 \\ \cline{2-15}
& \multirow{5}{*}{\rotatebox{90}{$x:95\%$}} & Tail &0.00 &0.02 &0.02 &0.01 &0.04 &0.04 &0.02 &0.06 &0.06 &0.02 &0.06 &0.06 \\ 
& & Logit, tail $X$ &-0.93 &0.06 &0.93 &-0.73 &0.09 &0.73 &-0.62 &0.08 &0.63 &-0.56 &0.08 &0.56 \\ 
& & Logit, all $X$ &-0.02 &0.13 &0.13 &-0.05 &0.06 &0.08 &-0.03 &0.03 &0.04 &-0.01 &0.02 &0.02 \\ 
& & Local Logit &0.01 &0.12 &0.12 &0.01 &0.07 &0.07 &0.00 &0.05 &0.05 &-0.00 &0.04 &0.04 \\ 
& & Local linear &-0.45 &0.02 &0.45 &-0.06 &0.09 &0.11 &-0.00 &0.02 &0.02 &-0.00 &0.02 &0.02 \\ \cline{2-15}
& \multirow{5}{*}{\rotatebox{90}{$x:97.5\%$}} & Tail &-0.00 &0.01 &0.01 &0.00 &0.03 &0.03 &0.01 &0.03 &0.04 &0.02 &0.04 &0.04 \\ 
& & Logit, tail $X$ &-0.94 &0.07 &0.94 &-0.70 &0.09 &0.70 &-0.57 &0.07 &0.58 &-0.50 &0.06 &0.51 \\ 
& & Logit, all $X$ &0.00 &0.08 &0.08 &0.05 &0.06 &0.08 &0.05 &0.03 &0.06 &0.04 &0.02 &0.05 \\ 
& & Local Logit &-0.11 &0.25 &0.27 &0.02 &0.12 &0.12 &0.01 &0.09 &0.09 &0.01 &0.07 &0.07 \\ 
& & Local linear &-0.47 &0.03 &0.47 &0.00 &0.09 &0.09 &0.00 &0.03 &0.03 &0.00 &0.04 &0.04 \\ \cline{2-15}
& \multirow{5}{*}{\rotatebox{90}{$x:99\%$}} & Tail &-0.00 &0.01 &0.01 &-0.00 &0.02 &0.02 &0.00 &0.03 &0.03 &0.00 &0.03 &0.03 \\ 
& & Logit, tail $X$ &-0.03 &0.10 &0.11 &-0.17 &0.13 &0.21 &-0.23 &0.09 &0.25 &-0.22 &0.07 &0.23 \\ 
& & Logit, all $X$ &0.00 &0.05 &0.05 &0.07 &0.04 &0.08 &0.11 &0.02 &0.12 &0.11 &0.02 &0.11 \\ 
& & Local Logit &-0.48 &0.08 &0.48 &-0.01 &0.27 &0.27 &0.02 &0.19 &0.19 &0.01 &0.14 &0.14 \\ 
& & Local linear &-0.41 &0.11 &0.43 &0.03 &0.08 &0.08 &0.01 &0.06 &0.06 &-0.00 &0.08 &0.08 \\ \hline
\multirow{20}{*}{\rotatebox{90}{$\alpha_{\varepsilon}=1$}} & \multirow{5}{*}{\rotatebox{90}{$x:90\%$}} & Tail &-0.00 &0.01 &0.01 &0.01 &0.03 &0.03 &0.02 &0.05 &0.06 &0.04 &0.06 &0.07 \\ 
& & Logit, tail $X$ &-0.98 &0.02 &0.98 &-0.90 &0.01 &0.90 &-0.83 &0.02 &0.83 &-0.79 &0.01 &0.79 \\ 
& & Logit, all $X$ &0.01 &0.04 &0.04 &0.06 &0.01 &0.06 &0.03 &0.02 &0.03 &0.02 &0.01 &0.02 \\ 
& & Local Logit &0.00 &0.02 &0.02 &0.00 &0.03 &0.03 &-0.00 &0.02 &0.02 &-0.00 &0.02 &0.02 \\ 
& & Local linear &-0.51 &0.02 &0.51 &-0.20 &0.10 &0.23 &-0.01 &0.03 &0.04 &-0.00 &0.01 &0.01 \\ \cline{2-15}
& \multirow{5}{*}{\rotatebox{90}{$x:95\%$}} & Tail &-0.00 &0.00 &0.00 &0.00 &0.02 &0.02 &0.01 &0.03 &0.03 &0.02 &0.03 &0.04 \\ 
& & Logit, tail $X$ &-1.00 &0.02 &1.00 &-0.94 &0.02 &0.94 &-0.88 &0.03 &0.88 &-0.83 &0.03 &0.83 \\ 
& & Logit, all $X$ &0.00 &0.03 &0.03 &0.05 &0.00 &0.05 &0.08 &0.01 &0.08 &0.07 &0.01 &0.07 \\ 
& & Local Logit &0.00 &0.04 &0.04 &0.00 &0.04 &0.04 &0.00 &0.04 &0.04 &-0.00 &0.03 &0.03 \\ 
& & Local linear &-0.52 &0.02 &0.52 &-0.07 &0.14 &0.16 &0.00 &0.03 &0.03 &-0.00 &0.02 &0.02 \\ \cline{2-15}
& \multirow{5}{*}{\rotatebox{90}{$x:97.5\%$}} & Tail &-0.00 &0.00 &0.00 &-0.00 &0.01 &0.01 &0.00 &0.02 &0.02 &0.01 &0.02 &0.02 \\ 
& & Logit, tail $X$ &-1.00 &0.02 &1.00 &-0.94 &0.06 &0.94 &-0.83 &0.07 &0.84 &-0.78 &0.06 &0.78 \\ 
& & Logit, all $X$ &-0.00 &0.02 &0.02 &0.02 &0.00 &0.02 &0.06 &0.00 &0.06 &0.08 &0.00 &0.08 \\ 
& & Local Logit &-0.12 &0.21 &0.24 &0.01 &0.05 &0.05 &0.01 &0.05 &0.05 &0.01 &0.05 &0.05 \\ 
& & Local linear &-0.51 &0.03 &0.51 &0.04 &0.13 &0.14 &0.01 &0.03 &0.03 &0.00 &0.03 &0.03 \\ \cline{2-15}
& \multirow{5}{*}{\rotatebox{90}{$x:99\%$}} & Tail &-0.00 &0.00 &0.00 &-0.00 &0.01 &0.01 &-0.00 &0.01 &0.01 &0.00 &0.02 &0.02 \\ 
& & Logit, tail $X$ &-0.00 &0.02 &0.02 &-0.01 &0.04 &0.04 &-0.05 &0.07 &0.09 &-0.07 &0.08 &0.11 \\ 
& & Logit, all $X$ &-0.00 &0.02 &0.02 &0.01 &0.00 &0.01 &0.04 &0.00 &0.04 &0.06 &0.00 &0.06 \\ 
& & Local Logit &-0.49 &0.08 &0.49 &0.00 &0.09 &0.09 &0.01 &0.08 &0.08 &0.02 &0.07 &0.08 \\ 
& & Local linear &-0.44 &0.12 &0.45 &0.06 &0.11 &0.12 &0.01 &0.05 &0.05 &0.00 &0.04 &0.04 \\ \hline

\end{tabular} 
} 
\end{center}
{\footnotesize {\em Notes:} Bias, SD, and RMSE are calculated based on $N_{sim}=1000$ repetitions w.r.t.\ the theoretical values.}\setlength{\baselineskip}{4mm}
\end{table}

\begin{table}[h]
\caption*{Table \ref{tab:sim-exp1-py}: $\hat \P(Y=1|X=x)$ - Experiment 1 (cont.)}
\label{tab:sim-exp1-py-2}
\vspace{-1em}
\begin{center}
\scalebox{.8}{
\begin{tabular}{lll|rrr|rrr|rrr|rrr} \hline \hline 
& & & \multicolumn{3}{c|}{$\alpha_X=0.5$} & \multicolumn{3}{c|}{$\alpha_X=1$} &\multicolumn{3}{c|}{$\alpha_X=1.5$} &\multicolumn{3}{c}{$\alpha_X=2$} \\
& & & Bias & SD & \footnotesize RMSE & Bias & SD &\footnotesize RMSE & Bias & SD &\footnotesize RMSE & Bias & SD &\footnotesize RMSE \\ \hline

\multirow{20}{*}{\rotatebox{90}{$\alpha_{\varepsilon}=1.5$}} & \multirow{5}{*}{\rotatebox{90}{$x:90\%$}} & Tail &-0.00 &0.00 &0.00 &0.00 &0.02 &0.02 &0.02 &0.03 &0.04 &0.03 &0.04 &0.05 \\ 
& & Logit, tail $X$ &-1.00 &0.02 &1.00 &-0.95 &0.00 &0.95 &-0.90 &0.01 &0.90 &-0.86 &0.01 &0.86 \\ 
& & Logit, all $X$ &-0.00 &0.04 &0.04 &0.05 &0.00 &0.05 &0.06 &0.01 &0.06 &0.05 &0.01 &0.05 \\ 
& & Local Logit &0.00 &0.01 &0.01 &0.00 &0.02 &0.02 &0.00 &0.02 &0.02 &-0.00 &0.02 &0.02 \\ 
& & Local linear &-0.52 &0.02 &0.52 &-0.23 &0.12 &0.26 &-0.01 &0.04 &0.04 &-0.00 &0.01 &0.01 \\ \cline{2-15}
& \multirow{5}{*}{\rotatebox{90}{$x:95\%$}} & Tail &-0.00 &0.00 &0.00 &-0.00 &0.01 &0.01 &0.00 &0.02 &0.02 &0.02 &0.02 &0.02 \\ 
& & Logit, tail $X$ &-1.00 &0.02 &1.00 &-0.98 &0.01 &0.98 &-0.94 &0.01 &0.94 &-0.91 &0.02 &0.91 \\ 
& & Logit, all $X$ &-0.00 &0.03 &0.03 &0.02 &0.00 &0.02 &0.05 &0.00 &0.05 &0.06 &0.00 &0.06 \\ 
& & Local Logit &0.00 &0.01 &0.01 &0.00 &0.02 &0.02 &0.00 &0.03 &0.03 &0.00 &0.02 &0.02 \\ 
& & Local linear &-0.52 &0.02 &0.52 &-0.06 &0.16 &0.17 &0.01 &0.03 &0.03 &-0.00 &0.02 &0.02 \\ \cline{2-15}
& \multirow{5}{*}{\rotatebox{90}{$x:97.5\%$}} & Tail &-0.00 &0.00 &0.00 &-0.00 &0.00 &0.00 &0.00 &0.01 &0.01 &0.01 &0.01 &0.01 \\ 
& & Logit, tail $X$ &-1.00 &0.02 &1.00 &-0.99 &0.03 &0.99 &-0.93 &0.07 &0.93 &-0.87 &0.07 &0.88 \\ 
& & Logit, all $X$ &-0.00 &0.02 &0.02 &0.01 &0.00 &0.01 &0.02 &0.00 &0.02 &0.05 &0.00 &0.05 \\ 
& & Local Logit &-0.12 &0.21 &0.24 &0.00 &0.04 &0.04 &0.01 &0.03 &0.03 &0.01 &0.03 &0.03 \\ 
& & Local linear &-0.50 &0.03 &0.50 &0.05 &0.14 &0.15 &0.01 &0.04 &0.04 &0.00 &0.02 &0.02 \\ \cline{2-15}
& \multirow{5}{*}{\rotatebox{90}{$x:99\%$}} & Tail &-0.00 &0.00 &0.00 &-0.00 &0.00 &0.00 &-0.00 &0.01 &0.01 &-0.00 &0.01 &0.01 \\ 
& & Logit, tail $X$ &-0.00 &0.02 &0.02 &-0.00 &0.01 &0.01 &-0.01 &0.03 &0.03 &-0.02 &0.05 &0.05 \\ 
& & Logit, all $X$ &-0.00 &0.02 &0.02 &0.00 &0.00 &0.00 &0.01 &0.00 &0.01 &0.02 &0.00 &0.02 \\ 
& & Local Logit &-0.49 &0.08 &0.49 &-0.00 &0.06 &0.06 &0.00 &0.03 &0.03 &0.01 &0.04 &0.04 \\ 
& & Local linear &-0.43 &0.12 &0.45 &0.06 &0.12 &0.14 &0.01 &0.05 &0.05 &0.00 &0.03 &0.03 \\ \hline
\multirow{20}{*}{\rotatebox{90}{$\alpha_{\varepsilon}=2$}} & \multirow{5}{*}{\rotatebox{90}{$x:90\%$}} & Tail &-0.00 &0.00 &0.00 &0.00 &0.01 &0.01 &0.01 &0.02 &0.03 &0.03 &0.03 &0.04 \\ 
& & Logit, tail $X$ &-1.00 &0.02 &1.00 &-0.97 &0.00 &0.97 &-0.93 &0.00 &0.93 &-0.90 &0.00 &0.90 \\ 
& & Logit, all $X$ &-0.00 &0.04 &0.04 &0.03 &0.00 &0.03 &0.05 &0.00 &0.05 &0.05 &0.00 &0.06 \\ 
& & Local Logit &0.00 &0.00 &0.00 &0.00 &0.01 &0.01 &0.00 &0.02 &0.02 &0.00 &0.02 &0.02 \\ 
& & Local linear &-0.51 &0.02 &0.51 &-0.24 &0.13 &0.27 &-0.01 &0.04 &0.05 &-0.00 &0.01 &0.01 \\ \cline{2-15}
& \multirow{5}{*}{\rotatebox{90}{$x:95\%$}} & Tail &-0.00 &0.00 &0.00 &-0.00 &0.00 &0.00 &0.00 &0.01 &0.01 &0.01 &0.01 &0.02 \\ 
& & Logit, tail $X$ &-1.00 &0.02 &1.00 &-0.99 &0.00 &0.99 &-0.97 &0.01 &0.97 &-0.94 &0.01 &0.94 \\ 
& & Logit, all $X$ &-0.00 &0.03 &0.03 &0.01 &0.00 &0.01 &0.03 &0.00 &0.03 &0.04 &0.00 &0.04 \\ 
& & Local Logit &0.00 &0.00 &0.00 &0.00 &0.01 &0.01 &0.00 &0.02 &0.02 &0.00 &0.02 &0.02 \\ 
& & Local linear &-0.51 &0.02 &0.51 &-0.06 &0.17 &0.18 &0.01 &0.04 &0.04 &-0.00 &0.01 &0.01 \\ \cline{2-15}
& \multirow{5}{*}{\rotatebox{90}{$x:97.5\%$}} & Tail &-0.00 &0.00 &0.00 &-0.00 &0.00 &0.00 &-0.00 &0.01 &0.01 &0.00 &0.01 &0.01 \\ 
& & Logit, tail $X$ &-1.00 &0.02 &1.00 &-0.99 &0.03 &0.99 &-0.97 &0.06 &0.97 &-0.93 &0.08 &0.93 \\ 
& & Logit, all $X$ &-0.00 &0.02 &0.02 &0.00 &0.00 &0.00 &0.01 &0.00 &0.01 &0.02 &0.00 &0.02 \\ 
& & Local Logit &-0.12 &0.21 &0.24 &0.00 &0.01 &0.01 &0.00 &0.02 &0.02 &0.00 &0.03 &0.03 \\ 
& & Local linear &-0.50 &0.03 &0.50 &0.06 &0.14 &0.16 &0.01 &0.04 &0.04 &-0.00 &0.02 &0.02 \\ \cline{2-15}
& \multirow{5}{*}{\rotatebox{90}{$x:99\%$}} & Tail &-0.00 &0.00 &0.00 &-0.00 &0.00 &0.00 &-0.00 &0.00 &0.00 &-0.00 &0.01 &0.01 \\ 
& & Logit, tail $X$ &-0.00 &0.02 &0.02 &0.00 &0.00 &0.00 &-0.00 &0.01 &0.01 &-0.00 &0.02 &0.02 \\ 
& & Logit, all $X$ &-0.00 &0.02 &0.02 &0.00 &0.00 &0.00 &0.00 &0.00 &0.00 &0.01 &0.00 &0.01 \\ 
& & Local Logit &-0.49 &0.08 &0.49 &-0.00 &0.02 &0.02 &0.00 &0.02 &0.02 &0.00 &0.03 &0.03 \\ 
& & Local linear &-0.43 &0.12 &0.44 &0.07 &0.12 &0.14 &0.01 &0.05 &0.05 &0.00 &0.02 &0.02 \\ \hline

\end{tabular} 
} 
\end{center}
{\footnotesize {\em Notes:} Bias, SD, and RMSE are calculated based on $N_{sim}=1000$ repetitions w.r.t.\ the theoretical values.}\setlength{\baselineskip}{4mm}
\end{table}

\begin{table}[h]
\caption{Partial effects estimation - Experiment 1}
\label{tab:sim-exp1-ape}
\vspace{-1em}
\begin{center}
\scalebox{.8}{
\begin{tabular}{lll|rrr|rrr|rrr|rrr} \hline \hline 
& & & \multicolumn{3}{c|}{$\alpha_X=0.5$} & \multicolumn{3}{c|}{$\alpha_X=1$} &\multicolumn{3}{c|}{$\alpha_X=1.5$} &\multicolumn{3}{c}{$\alpha_X=2$} \\
& & & Bias & SD & \footnotesize RMSE & Bias & SD &\footnotesize RMSE & Bias & SD &\footnotesize RMSE & Bias & SD &\footnotesize RMSE \\ \hline
\multirow{20}{*}{\rotatebox{90}{$\alpha_{\varepsilon}=0.5$}} & \multirow{5}{*}{\rotatebox{90}{$x:90\%$}} & Tail &0.01 &0.03 &0.03 &0.03 &0.07 &0.07 &0.02 &0.09 &0.09 &0.03 &0.09 &0.09 \\ 
& & Logit, tail $X$ &-0.88 &0.05 &0.88 &-0.68 &0.08 &0.68 &-0.60 &0.07 &0.61 &-0.55 &0.07 &0.56 \\ 
& & Logit, all $X$ &-0.18 &0.13 &0.22 &-0.13 &0.03 &0.13 &-0.07 &0.02 &0.08 &-0.05 &0.01 &0.05 \\ 
& & Local Logit &0.01 &0.05 &0.06 &0.00 &0.04 &0.04 &0.00 &0.03 &0.03 &-0.00 &0.03 &0.03 \\ 
& & Local linear &-0.41 &0.02 &0.41 &-0.12 &0.06 &0.13 &-0.01 &0.02 &0.02 &-0.00 &0.02 &0.02 \\ \cline{2-15}
& \multirow{5}{*}{\rotatebox{90}{$x:95\%$}} & Tail &0.00 &0.02 &0.02 &0.01 &0.04 &0.04 &0.02 &0.06 &0.06 &0.02 &0.06 &0.06 \\ 
& & Logit, tail $X$ &-0.93 &0.06 &0.93 &-0.73 &0.09 &0.73 &-0.62 &0.08 &0.63 &-0.56 &0.08 &0.56 \\ 
& & Logit, all $X$ &-0.02 &0.13 &0.13 &-0.05 &0.06 &0.08 &-0.03 &0.03 &0.04 &-0.01 &0.02 &0.02 \\ 
& & Local Logit &0.01 &0.12 &0.12 &0.01 &0.07 &0.07 &0.00 &0.05 &0.05 &-0.00 &0.04 &0.04 \\ 
& & Local linear &-0.45 &0.02 &0.45 &-0.06 &0.09 &0.11 &-0.00 &0.02 &0.02 &-0.00 &0.02 &0.02 \\ \cline{2-15}
& \multirow{5}{*}{\rotatebox{90}{$x:97.5\%$}} & Tail &-0.00 &0.01 &0.01 &0.00 &0.03 &0.03 &0.01 &0.03 &0.04 &0.02 &0.04 &0.04 \\ 
& & Logit, tail $X$ &-0.94 &0.07 &0.94 &-0.70 &0.09 &0.70 &-0.57 &0.07 &0.58 &-0.50 &0.06 &0.51 \\ 
& & Logit, all $X$ &0.00 &0.08 &0.08 &0.05 &0.06 &0.08 &0.05 &0.03 &0.06 &0.04 &0.02 &0.05 \\ 
& & Local Logit &-0.11 &0.25 &0.27 &0.02 &0.12 &0.12 &0.01 &0.09 &0.09 &0.01 &0.07 &0.07 \\ 
& & Local linear &-0.47 &0.03 &0.47 &0.00 &0.09 &0.09 &0.00 &0.03 &0.03 &0.00 &0.04 &0.04 \\ \cline{2-15}
& \multirow{5}{*}{\rotatebox{90}{$x:99\%$}} & Tail &-0.00 &0.01 &0.01 &-0.00 &0.02 &0.02 &0.00 &0.03 &0.03 &0.00 &0.03 &0.03 \\ 
& & Logit, tail $X$ &-0.03 &0.10 &0.11 &-0.17 &0.13 &0.21 &-0.23 &0.09 &0.25 &-0.22 &0.07 &0.23 \\ 
& & Logit, all $X$ &0.00 &0.05 &0.05 &0.07 &0.04 &0.08 &0.11 &0.02 &0.12 &0.11 &0.02 &0.11 \\ 
& & Local Logit &-0.48 &0.08 &0.48 &-0.01 &0.27 &0.27 &0.02 &0.19 &0.19 &0.01 &0.14 &0.14 \\ 
& & Local linear &-0.41 &0.11 &0.43 &0.03 &0.08 &0.08 &0.01 &0.06 &0.06 &-0.00 &0.08 &0.08 \\ \hline
\multirow{20}{*}{\rotatebox{90}{$\alpha_{\varepsilon}=1$}} & \multirow{5}{*}{\rotatebox{90}{$x:90\%$}} & Tail &-0.00 &0.01 &0.01 &0.01 &0.03 &0.03 &0.02 &0.05 &0.06 &0.04 &0.06 &0.07 \\ 
& & Logit, tail $X$ &-0.98 &0.02 &0.98 &-0.90 &0.01 &0.90 &-0.83 &0.02 &0.83 &-0.79 &0.01 &0.79 \\ 
& & Logit, all $X$ &0.01 &0.04 &0.04 &0.06 &0.01 &0.06 &0.03 &0.02 &0.03 &0.02 &0.01 &0.02 \\ 
& & Local Logit &0.00 &0.02 &0.02 &0.00 &0.03 &0.03 &-0.00 &0.02 &0.02 &-0.00 &0.02 &0.02 \\ 
& & Local linear &-0.51 &0.02 &0.51 &-0.20 &0.10 &0.23 &-0.01 &0.03 &0.04 &-0.00 &0.01 &0.01 \\ \cline{2-15}
& \multirow{5}{*}{\rotatebox{90}{$x:95\%$}} & Tail &-0.00 &0.00 &0.00 &0.00 &0.02 &0.02 &0.01 &0.03 &0.03 &0.02 &0.03 &0.04 \\ 
& & Logit, tail $X$ &-1.00 &0.02 &1.00 &-0.94 &0.02 &0.94 &-0.88 &0.03 &0.88 &-0.83 &0.03 &0.83 \\ 
& & Logit, all $X$ &0.00 &0.03 &0.03 &0.05 &0.00 &0.05 &0.08 &0.01 &0.08 &0.07 &0.01 &0.07 \\ 
& & Local Logit &0.00 &0.04 &0.04 &0.00 &0.04 &0.04 &0.00 &0.04 &0.04 &-0.00 &0.03 &0.03 \\ 
& & Local linear &-0.52 &0.02 &0.52 &-0.07 &0.14 &0.16 &0.00 &0.03 &0.03 &-0.00 &0.02 &0.02 \\ \cline{2-15}
& \multirow{5}{*}{\rotatebox{90}{$x:97.5\%$}} & Tail &-0.00 &0.00 &0.00 &-0.00 &0.01 &0.01 &0.00 &0.02 &0.02 &0.01 &0.02 &0.02 \\ 
& & Logit, tail $X$ &-1.00 &0.02 &1.00 &-0.94 &0.06 &0.94 &-0.83 &0.07 &0.84 &-0.78 &0.06 &0.78 \\ 
& & Logit, all $X$ &-0.00 &0.02 &0.02 &0.02 &0.00 &0.02 &0.06 &0.00 &0.06 &0.08 &0.00 &0.08 \\ 
& & Local Logit &-0.12 &0.21 &0.24 &0.01 &0.05 &0.05 &0.01 &0.05 &0.05 &0.01 &0.05 &0.05 \\ 
& & Local linear &-0.51 &0.03 &0.51 &0.04 &0.13 &0.14 &0.01 &0.03 &0.03 &0.00 &0.03 &0.03 \\ \cline{2-15}
& \multirow{5}{*}{\rotatebox{90}{$x:99\%$}} & Tail &-0.00 &0.00 &0.00 &-0.00 &0.01 &0.01 &-0.00 &0.01 &0.01 &0.00 &0.02 &0.02 \\ 
& & Logit, tail $X$ &-0.00 &0.02 &0.02 &-0.01 &0.04 &0.04 &-0.05 &0.07 &0.09 &-0.07 &0.08 &0.11 \\ 
& & Logit, all $X$ &-0.00 &0.02 &0.02 &0.01 &0.00 &0.01 &0.04 &0.00 &0.04 &0.06 &0.00 &0.06 \\ 
& & Local Logit &-0.49 &0.08 &0.49 &0.00 &0.09 &0.09 &0.01 &0.08 &0.08 &0.02 &0.07 &0.08 \\ 
& & Local linear &-0.44 &0.12 &0.45 &0.06 &0.11 &0.12 &0.01 &0.05 &0.05 &0.00 &0.04 &0.04 \\ \hline

\end{tabular} 
} 
\end{center}
{\footnotesize {\em Notes:} Bias, SD, and RMSE are calculated based on $N_{sim}=1000$ repetitions w.r.t.\ the theoretical values.}\setlength{\baselineskip}{4mm}
\end{table}

\begin{table}[h]
\caption*{Table \ref{tab:sim-exp1-ape}: Partial effects estimation - Experiment 1 (cont.)}
\label{tab:sim-exp1-ape-2}
\vspace{-1em}
\begin{center}
\scalebox{.8}{
\begin{tabular}{lll|rrr|rrr|rrr|rrr} \hline \hline 
& & & \multicolumn{3}{c|}{$\alpha_X=0.5$} & \multicolumn{3}{c|}{$\alpha_X=1$} &\multicolumn{3}{c|}{$\alpha_X=1.5$} &\multicolumn{3}{c}{$\alpha_X=2$} \\
& & & Bias & SD & \footnotesize RMSE & Bias & SD &\footnotesize RMSE & Bias & SD &\footnotesize RMSE & Bias & SD &\footnotesize RMSE \\ \hline

\multirow{20}{*}{\rotatebox{90}{$\alpha_{\varepsilon}=1.5$}} & \multirow{5}{*}{\rotatebox{90}{$x:90\%$}} & Tail &-0.00 &0.00 &0.00 &0.00 &0.02 &0.02 &0.02 &0.03 &0.04 &0.03 &0.04 &0.05 \\ 
& & Logit, tail $X$ &-1.00 &0.02 &1.00 &-0.95 &0.00 &0.95 &-0.90 &0.01 &0.90 &-0.86 &0.01 &0.86 \\ 
& & Logit, all $X$ &-0.00 &0.04 &0.04 &0.05 &0.00 &0.05 &0.06 &0.01 &0.06 &0.05 &0.01 &0.05 \\ 
& & Local Logit &0.00 &0.01 &0.01 &0.00 &0.02 &0.02 &0.00 &0.02 &0.02 &-0.00 &0.02 &0.02 \\ 
& & Local linear &-0.52 &0.02 &0.52 &-0.23 &0.12 &0.26 &-0.01 &0.04 &0.04 &-0.00 &0.01 &0.01 \\ \cline{2-15}
& \multirow{5}{*}{\rotatebox{90}{$x:95\%$}} & Tail &-0.00 &0.00 &0.00 &-0.00 &0.01 &0.01 &0.00 &0.02 &0.02 &0.02 &0.02 &0.02 \\ 
& & Logit, tail $X$ &-1.00 &0.02 &1.00 &-0.98 &0.01 &0.98 &-0.94 &0.01 &0.94 &-0.91 &0.02 &0.91 \\ 
& & Logit, all $X$ &-0.00 &0.03 &0.03 &0.02 &0.00 &0.02 &0.05 &0.00 &0.05 &0.06 &0.00 &0.06 \\ 
& & Local Logit &0.00 &0.01 &0.01 &0.00 &0.02 &0.02 &0.00 &0.03 &0.03 &0.00 &0.02 &0.02 \\ 
& & Local linear &-0.52 &0.02 &0.52 &-0.06 &0.16 &0.17 &0.01 &0.03 &0.03 &-0.00 &0.02 &0.02 \\ \cline{2-15}
& \multirow{5}{*}{\rotatebox{90}{$x:97.5\%$}} & Tail &-0.00 &0.00 &0.00 &-0.00 &0.00 &0.00 &0.00 &0.01 &0.01 &0.01 &0.01 &0.01 \\ 
& & Logit, tail $X$ &-1.00 &0.02 &1.00 &-0.99 &0.03 &0.99 &-0.93 &0.07 &0.93 &-0.87 &0.07 &0.88 \\ 
& & Logit, all $X$ &-0.00 &0.02 &0.02 &0.01 &0.00 &0.01 &0.02 &0.00 &0.02 &0.05 &0.00 &0.05 \\ 
& & Local Logit &-0.12 &0.21 &0.24 &0.00 &0.04 &0.04 &0.01 &0.03 &0.03 &0.01 &0.03 &0.03 \\ 
& & Local linear &-0.50 &0.03 &0.50 &0.05 &0.14 &0.15 &0.01 &0.04 &0.04 &0.00 &0.02 &0.02 \\ \cline{2-15}
& \multirow{5}{*}{\rotatebox{90}{$x:99\%$}} & Tail &-0.00 &0.00 &0.00 &-0.00 &0.00 &0.00 &-0.00 &0.01 &0.01 &-0.00 &0.01 &0.01 \\ 
& & Logit, tail $X$ &-0.00 &0.02 &0.02 &-0.00 &0.01 &0.01 &-0.01 &0.03 &0.03 &-0.02 &0.05 &0.05 \\ 
& & Logit, all $X$ &-0.00 &0.02 &0.02 &0.00 &0.00 &0.00 &0.01 &0.00 &0.01 &0.02 &0.00 &0.02 \\ 
& & Local Logit &-0.49 &0.08 &0.49 &-0.00 &0.06 &0.06 &0.00 &0.03 &0.03 &0.01 &0.04 &0.04 \\ 
& & Local linear &-0.43 &0.12 &0.45 &0.06 &0.12 &0.14 &0.01 &0.05 &0.05 &0.00 &0.03 &0.03 \\ \hline
\multirow{20}{*}{\rotatebox{90}{$\alpha_{\varepsilon}=2$}} & \multirow{5}{*}{\rotatebox{90}{$x:90\%$}} & Tail &-0.00 &0.00 &0.00 &0.00 &0.01 &0.01 &0.01 &0.02 &0.03 &0.03 &0.03 &0.04 \\ 
& & Logit, tail $X$ &-1.00 &0.02 &1.00 &-0.97 &0.00 &0.97 &-0.93 &0.00 &0.93 &-0.90 &0.00 &0.90 \\ 
& & Logit, all $X$ &-0.00 &0.04 &0.04 &0.03 &0.00 &0.03 &0.05 &0.00 &0.05 &0.05 &0.00 &0.06 \\ 
& & Local Logit &0.00 &0.00 &0.00 &0.00 &0.01 &0.01 &0.00 &0.02 &0.02 &0.00 &0.02 &0.02 \\ 
& & Local linear &-0.51 &0.02 &0.51 &-0.24 &0.13 &0.27 &-0.01 &0.04 &0.05 &-0.00 &0.01 &0.01 \\ \cline{2-15}
& \multirow{5}{*}{\rotatebox{90}{$x:95\%$}} & Tail &-0.00 &0.00 &0.00 &-0.00 &0.00 &0.00 &0.00 &0.01 &0.01 &0.01 &0.01 &0.02 \\ 
& & Logit, tail $X$ &-1.00 &0.02 &1.00 &-0.99 &0.00 &0.99 &-0.97 &0.01 &0.97 &-0.94 &0.01 &0.94 \\ 
& & Logit, all $X$ &-0.00 &0.03 &0.03 &0.01 &0.00 &0.01 &0.03 &0.00 &0.03 &0.04 &0.00 &0.04 \\ 
& & Local Logit &0.00 &0.00 &0.00 &0.00 &0.01 &0.01 &0.00 &0.02 &0.02 &0.00 &0.02 &0.02 \\ 
& & Local linear &-0.51 &0.02 &0.51 &-0.06 &0.17 &0.18 &0.01 &0.04 &0.04 &-0.00 &0.01 &0.01 \\ \cline{2-15}
& \multirow{5}{*}{\rotatebox{90}{$x:97.5\%$}} & Tail &-0.00 &0.00 &0.00 &-0.00 &0.00 &0.00 &-0.00 &0.01 &0.01 &0.00 &0.01 &0.01 \\ 
& & Logit, tail $X$ &-1.00 &0.02 &1.00 &-0.99 &0.03 &0.99 &-0.97 &0.06 &0.97 &-0.93 &0.08 &0.93 \\ 
& & Logit, all $X$ &-0.00 &0.02 &0.02 &0.00 &0.00 &0.00 &0.01 &0.00 &0.01 &0.02 &0.00 &0.02 \\ 
& & Local Logit &-0.12 &0.21 &0.24 &0.00 &0.01 &0.01 &0.00 &0.02 &0.02 &0.00 &0.03 &0.03 \\ 
& & Local linear &-0.50 &0.03 &0.50 &0.06 &0.14 &0.16 &0.01 &0.04 &0.04 &-0.00 &0.02 &0.02 \\ \cline{2-15}
& \multirow{5}{*}{\rotatebox{90}{$x:99\%$}} & Tail &-0.00 &0.00 &0.00 &-0.00 &0.00 &0.00 &-0.00 &0.00 &0.00 &-0.00 &0.01 &0.01 \\ 
& & Logit, tail $X$ &-0.00 &0.02 &0.02 &0.00 &0.00 &0.00 &-0.00 &0.01 &0.01 &-0.00 &0.02 &0.02 \\ 
& & Logit, all $X$ &-0.00 &0.02 &0.02 &0.00 &0.00 &0.00 &0.00 &0.00 &0.00 &0.01 &0.00 &0.01 \\ 
& & Local Logit &-0.49 &0.08 &0.49 &-0.00 &0.02 &0.02 &0.00 &0.02 &0.02 &0.00 &0.03 &0.03 \\ 
& & Local linear &-0.43 &0.12 &0.44 &0.07 &0.12 &0.14 &0.01 &0.05 &0.05 &0.00 &0.02 &0.02 \\ \hline

\end{tabular} 
} 
\end{center}
{\footnotesize {\em Notes:} Bias, SD, and RMSE are calculated based on $N_{sim}=1000$ repetitions w.r.t.\ the theoretical values.}\setlength{\baselineskip}{4mm}
\end{table}

\clearpage
\floatpagestyle{plain}
\begin{figure}
\caption{Log predictive score - Experiment 2, $\alpha_X = 1$, $\alpha_{\varepsilon}=1$}
\label{fig:sim-exp2-lps}
\begin{center}
    \includegraphics[width=.8\textwidth]{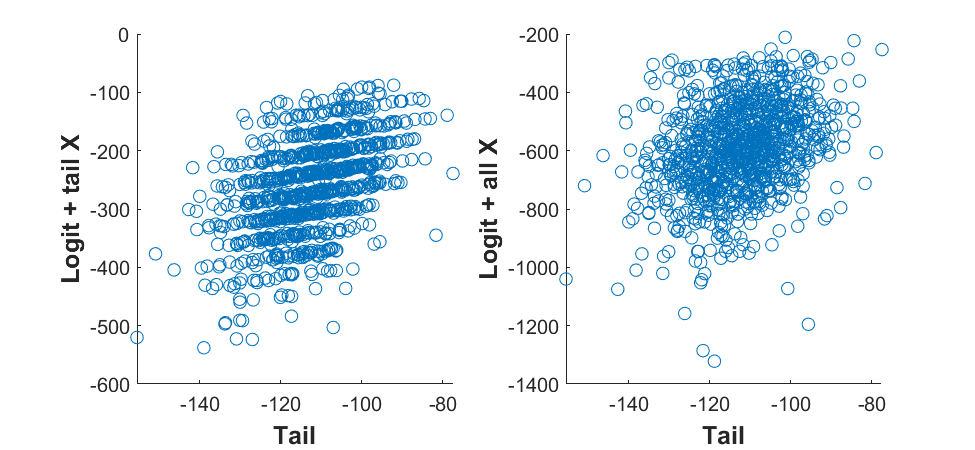}
\end{center}
{\footnotesize {\em Notes:} 
Each circle represents one Monte Carlo repetition. Note that the x- and y-scales are substantially different.}\setlength{\baselineskip}{4mm}
\end{figure}

\begin{figure}
\caption{Histogram - banking application, baseline sample}
\label{fig:app-hist}
\begin{center}
\includegraphics[width=.8\textwidth]{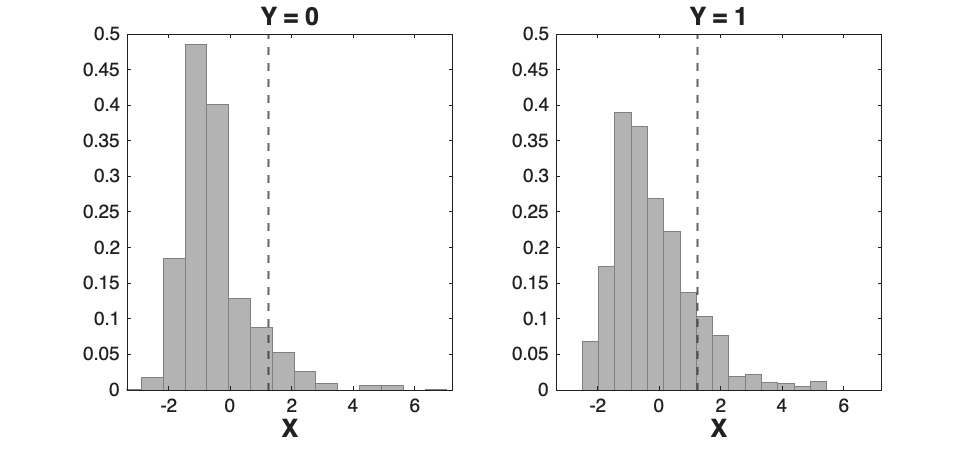}
\end{center}
{\footnotesize {\em Notes:} 
Black dashed lines mark the 90th percentile of the $X_{it}$ distribution and indicate the tail region.}\setlength{\baselineskip}{4mm}
\end{figure}

\begin{table}[h]
\caption{Sample Statistics for Charge-Off Rates}
\label{tab:app-stat}
\begin{center}
\scalebox{1}{
\begin{tabular}{llllrrrrrrr} \hline \hline
& & & & \multicolumn{3}{c}{Sample Sizes} & \multicolumn{4}{c}{Sample Statistics} \\
\small Loan & \small $t_{start}$ & \small $t_{fcst}$ & \small $T$ & \small $N$ & \small $N_{e}^{\dagger}$ & \small $N_{f}^{\dagger}$ & \scriptsize $\%(Y=1)$ & \small $SD_X$ & \small $Skew_X$ & \small $Kurt_X$ \\ 
\hline
CRE & 2001Q3 & 2009Q3 & 32 & 7882 & 1286 & 1086 & 3.27 & 0.94 & 0.70 & 12.26 \\
CRE & 2000Q3 & 2009Q3 & 36 & 8224 & 1408 & 1194 & 2.90 & 0.95 & 0.73 & 12.37 \\
CRE & 1999Q3 & 2009Q3 & 40 & 8588 & 1451 & 1212 & 2.60 & 0.99 & 0.67 & 11.96 \\
CRE & 1998Q3 & 2009Q3 & 44 & 9047 & 1506 & 1260 & 2.36 & 1.00 & 0.66 & 12.28 \\
CRE & 1997Q3 & 2009Q3 & 48 & 9495 & 1546 & 1303 & 2.14 & 1.02 & 0.63 & 12.28 \\
CRE & 2001Q4 & 2009Q4 & 32 & 7798 & 1602 & 1367 & 3.91 & 0.92 & 0.73 & 10.87 \\
CRE & 2000Q4 & 2009Q4 & 36 & 8128 & 1726 & 1476 & 3.46 & 0.93 & 0.79 & 11.18 \\
CRE & 1999Q4 & 2009Q4 & 40 & 8510 & 1806 & 1533 & 3.10 & 0.95 & 0.73 & 10.96 \\
CRE & 1998Q4 & 2009Q4 & 44 & 8899 & 1909 & 1607 & 2.81 & 0.97 & 0.74 & 11.28 \\
CRE & 1997Q4 & 2009Q4 & 48 & 9382 & 1959 & 1654 & 2.56 & 0.99 & 0.72 & 11.43 \\
RRE & 2001Q3 & 2009Q3 & 32 & 7909 & 2104 & 1693 & 18.11 & 0.94 & 0.68 & 12.16 \\
RRE & 2000Q3 & 2009Q3 & 36 & 8244 & 2348 & 1868 & 16.03 & 0.95 & 0.70 & 12.26 \\
RRE & 1999Q3 & 2009Q3 & 40 & 8614 & 2497 & 1958 & 14.32 & 0.98 & 0.65 & 11.84 \\
RRE & 1998Q3 & 2009Q3 & 44 & 9085 & 2604 & 2038 & 12.94 & 1.00 & 0.64 & 12.15 \\
RRE & 1997Q3 & 2009Q3 & 48 & 9531 & 2719 & 2143 & 11.74 & 1.02 & 0.60 & 12.14 \\
RRE & 2001Q4 & 2009Q4 & 32 & 7824 & 2248 & 1813 & 19.24 & 0.92 & 0.71 & 10.80 \\
RRE & 2000Q4 & 2009Q4 & 36 & 8148 & 2470 & 1986 & 17.01 & 0.92 & 0.77 & 11.10 \\
RRE & 1999Q4 & 2009Q4 & 40 & 8538 & 2642 & 2098 & 15.18 & 0.95 & 0.71 & 10.87 \\
RRE & 1998Q4 & 2009Q4 & 44 & 8936 & 2849 & 2264 & 13.74 & 0.97 & 0.72 & 11.19 \\
RRE & 1997Q4 & 2009Q4 & 48 & 9418 & 2974 & 2366 & 12.47 & 0.98 & 0.70 & 11.32 \\
\hline
\end{tabular} 
} 
\end{center}
{\footnotesize {\em Notes:} $N$ is the total number of banks in each sample. $N_{e}^{\dagger}$ is the number of banks in the tail and contributing to the likelihood, characterized by $X_{it}$ being above the 90th percentile of the estimation sample and $Y_{it}$ switching values across $t$ for these tail observations. $N_{f}^{\dagger}$ is the number of banks that meet the criteria for $N_{e}^{\dagger}$ and, additionally, have $X_{i,T+1}$ in the tail during the forecasting period. Sample statistics are calculated for each $N$-by-$T$ panel. The panels are unbalanced due to missing values.}\setlength{\baselineskip}{4mm}
\end{table} 

\begin{table}[h]
\caption{Forecast evaluation and parameter estimation - banking application, all samples}
\label{tab:app-lps-ape-all}
\begin{center}
\scalebox{.85}{
\begin{tabular}{llrrrrr} \hline \hline 
& & $T=32$ & $T=36$& $T=40$& $T=44$& $T=48$ \\ \hline

\multicolumn{7}{c}{RRE, prediction period = 2009Q3} \\ \hline
\multirow{3}{*}{$\text{LPS}\cdot N_{f}^{\dagger}$} & \it Tail & \it -1112.60 \phantom{***} & \it -1217.91 \phantom{***} & \it -1291.65 \phantom{***} & \it -1355.10 \phantom{***} & \it -1443.19 \phantom{***} \\ \cline{2-7}
& Logit, tail $X$ & -7.65 *** & -10.01 *** & -10.75 *** & -12.46 *** & -12.84 *** \\ 
& Logit, all $X$ & -87.83 *** & -85.66 *** & -132.49 *** & -170.68 *** & -207.60 *** \\ \hline 
\multirow{4}{*}{Tail} & $\hat\theta^*_{\log X}$ & 0.96 ***& 1.00 ***& 0.97 ***& 1.00 ***& 1.01 ***\\

& $\hat\theta^*_{Z\log X}$ & 0.89 *** & 0.61 *** & 0.78 *** & 0.67 *** & 0.58 *** \\

& APE & 0.13 *** & 0.13 *** & 0.12 *** & 0.12 *** & 0.12 *** \\ \cline{2-7}
& $\hat V(\tilde A_i)$ &0.99 \phantom{***} &1.00 \phantom{***} &0.98 \phantom{***} &0.98 \phantom{***} &0.98 \phantom{***} \\

\hline\hline

\multicolumn{7}{c}{RRE, prediction period = 2009Q4} \\ \hline
\multirow{3}{*}{$\text{LPS}\cdot N_{f}^{\dagger}$} &\it Tail & \it -1232.46 \phantom{***} & \it -1347.38 \phantom{***} & \it -1433.72 \phantom{***} & \it -1559.47 \phantom{***} & \it -1649.76 \phantom{***} \\ \cline{2-7}
& Logit, tail $X$ & -14.12 *** & -17.95 *** & -18.66 *** & -21.85 *** & -22.37 *** \\ 
& Logit, all $X$ & -154.94 *** & -144.03 *** & -178.91 *** & -206.57 *** & -232.10 *** \\ \hline
\multirow{4}{*}{Tail} & $\hat\theta^*_{\log X}$ & 1.11 ***& 1.14 ***& 1.09 ***& 1.11 ***& 1.12 ***\\

& $\hat\theta^*_{Z\log X}$ & 0.52 **\phantom{*} & 0.54 **\phantom{*} & 0.63 *** & 0.53 *** & 0.42 **\phantom{*} \\

& APE & 0.16 *** & 0.16 *** & 0.15 *** & 0.15 *** & 0.14 *** \\ \cline{2-7}
& $\hat V(\tilde A_i)$ &1.06 \phantom{***} &1.06 \phantom{***} &1.04 \phantom{***} &1.03 \phantom{***} &1.03 \phantom{***} \\
\hline\hline

\multicolumn{7}{c}{CRE, prediction period = 2009Q3} \\ \hline
\multirow{3}{*}{$\text{LPS}\cdot N_{f}^{\dagger}$} & \it Tail
& \it -777.51 \phantom{***} & \it -853.64 \phantom{***} & \it -878.00 \phantom{***} & \it -910.85 \phantom{***} & \it -953.34 \phantom{***} \\ \cline{2-7}
& Logit, tail $X$ & -8.93 *** & -9.51 *** & -10.38 *** & -11.71 *** & -11.83 *** \\ 
& Logit, all $X$ & -15.95 *** & -12.40 *** & -21.61 *** & -29.59 *** & -45.66 *** \\ \hline
\multirow{4}{*}{Tail} & $\hat\theta^*_{\log X}$ & 1.20 ***& 1.25 ***& 1.26 ***& 1.29 ***& 1.29 ***\\

& $\hat\theta^*_{Z\log X}$ & 0.46 \phantom{***} & 0.30 \phantom{***} & 0.30 \phantom{***} & 0.26 \phantom{***} & 0.23 \phantom{***} \\

& APE & 0.11 *** & 0.11 *** & 0.11 *** & 0.10 *** & 0.10 *** \\ \cline{2-7}
& $\hat V(\tilde A_i)$ &1.00 \phantom{***} &1.00 \phantom{***} &1.01 \phantom{***} &1.03 \phantom{***} &1.04 \phantom{***} \\

\hline\hline
\multicolumn{7}{c}{CRE, prediction period = 2009Q4} \\ \hline
\multirow{3}{*}{$\text{LPS}\cdot N_{f}^{\dagger}$} & \it Tail & \it -1017.50 \phantom{***} & \it -1088.70 \phantom{***} & \it -1134.48 \phantom{***} & \it -1189.45 \phantom{***} & \it -1237.07 \phantom{***} \\ \cline{2-7}
& Logit, tail $X$ & -17.18 *** & -22.72 *** & -23.49 *** & -28.68 *** & -29.12 *** \\ 
& Logit, all $X$ & -50.76 *** & -68.91 *** & -77.75 *** & -95.20 *** & -106.86 *** \\ \hline 
\multirow{4}{*}{Tail} & $\hat\theta^*_{\log X}$ & 1.31 ***& 1.30 ***& 1.31 ***& 1.37 ***& 1.42 *** \\ 
& $\hat\theta^*_{Z\log X}$ & 0.07 \phantom{***} & 0.20 \phantom{***} & 0.04 \phantom{***} & -0.02 \phantom{***} & -0.10 \phantom{***} \\

& APE & 0.14 *** & 0.14 *** & 0.13 *** & 0.13 *** & 0.13 *** \\ \cline{2-7}
& $\hat V(\tilde A_i)$ &1.03 \phantom{***} &1.05 \phantom{***} &1.06 \phantom{***} &1.08 \phantom{***} &1.08 \phantom{***} \\
\hline

\end{tabular} 
} 
\end{center}
{\footnotesize {\em Notes:} For the tail estimator, the table reports the exact values of $\text{LPS}\cdot N_{f}^{\dagger}$. For other estimators, the table reports their differences from the tail estimator. The tests compare other estimators with the tail estimator. The last four rows in each panel report the estimated parameters and APEs as well as the sample variances of the estimated $\tilde A_i$'s using the tail estimator. All significance levels are indicated by *: 10\%, **: 5\%, and ***: 1\%.}\setlength{\baselineskip}{4mm}
\end{table}

\begin{table}[h]
\caption{Forecast evaluation and parameter estimation - banking application, varying $c$}
\label{tab:app-lps-ape-c}
\begin{center}
\scalebox{1}{
\begin{tabular}{llrrrr} \hline \hline 
& & $c=0$ & $c=0.01$& $c=0.02$& $c=0.05$ \\ \hline
\multirow{3}{*}{$\text{LPS}\cdot N_{f}^{\dagger}$} & \it Tail & \it -1433.72 \phantom{***} & \it -1453.03 \phantom{***} & \it -1462.00 \phantom{***} & \it -1447.46 \phantom{***} \\ \cline{2-6}
& Logit, tail $X$ & -18.66 *** & -18.71 *** & -18.21 *** & -18.63 *** \\ 
& Logit, all $X$ & -178.91 *** & -174.92 *** & -175.49 *** & -180.03 *** \\ \hline 
\multirow{4}{*}{Tail} & $\hat\theta^*_{\log X}$ & 1.09 ***& 1.08 ***& 1.08 ***& 1.09 ***\\ 
& $\hat\theta^*_{Z\log X}$ & 0.63 {***} & 0.61 {***} & 0.60 {***} & 0.58 {***} \\

& APE & 0.15 *** & 0.15 *** & 0.15 *** & 0.15 *** \\ \cline{2-6} 
& $\hat V(\tilde A_i)$ &1.06 \phantom{***} &1.05 \phantom{***} &1.04 \phantom{***} &1.04 \phantom{***} \\ \hline\hline

& & $c=0.1$ & $c=0.2$& $c=0.5$& $c=1$ \\ \hline
\multirow{3}{*}{$\text{LPS}\cdot N_{f}^{\dagger}$} & \it Tail & \it -1411.56 \phantom{***} & \it -1343.85 \phantom{***} & \it -1131.54 \phantom{***} & \it -833.62 \phantom{***} \\ \cline{2-6}
& Logit, tail $X$ & -18.40 *** & -15.77 *** & -11.74 *** & -9.10 *** \\ 
& Logit, all $X$ & -178.44 *** & -180.10 *** & -166.46 *** & -95.35 *** \\ \hline 
\multirow{4}{*}{Tail} & $\hat\theta^*_{\log X}$ & 1.14 ***& 1.12 ***& 1.11 ***& 1.22 ***\\ 
& $\hat\theta^*_{Z\log X}$ & 0.52 **\phantom{*} & 0.52 **\phantom{*} & 0.44 *\phantom{**} & 0.16 \phantom{***} \\

& APE & 0.15 *** & 0.14 *** & 0.11 *** & 0.10 *** \\ \cline{2-6} 
& $\hat V(\tilde A_i)$ &1.04 \phantom{***} &1.01 \phantom{***} &0.99 \phantom{***} &0.96 \phantom{***} \\ \hline

\end{tabular} 
} 
\end{center}
{\footnotesize {\em Notes:} Baseline sample: RRE, forecasting period = 2009Q4, $T=40$. For the tail estimator, the table reports the exact values of $\text{LPS}\cdot N_{f}^{\dagger}$. For other estimators, the table reports their differences from the tail estimator. The tests compare other estimators with the tail estimator. The last four rows in each panel report the estimated parameters and APEs as well as the sample variances of the estimated $\tilde A_i$'s using the tail estimator. All significance levels are indicated by *: 10\%, **: 5\%, and ***: 1\%.}\setlength{\baselineskip}{4mm}
\end{table}